% ****** Start of file apssamp.tex ******

\documentclass[%
 %reprint,
%superscriptaddress,
%groupedaddress,
%unsortedaddress,
%runinaddress,
%frontmatterverbose, 
%preprint,
%preprintnumbers,
nofootinbib,
%nobibnotes,
%bibnotes,
 amsmath,amssymb,
 aps,
%pra,
%prb,
%rmp,
%prstab,
%prstper,
%floatfix,
]{revtex4-2}

\usepackage[switch]{lineno} 
\usepackage{amsfonts}
\usepackage{graphicx}
\usepackage{epstopdf}
\usepackage{cancel}
\usepackage{graphicx}
\usepackage[dvipsnames]{xcolor}
\usepackage[section]{placeins}
\usepackage{subfig}
\usepackage{epsfig}
\usepackage{amsmath}
\usepackage{amsthm}
\usepackage{mathbbol}
\usepackage[ruled,lined]{algorithm2e}
\usepackage{xfrac}
\usepackage{comment}
\definecolor{lightblue}{rgb}{0.49, 0.49, 1}
\definecolor{indigo}{rgb}{0.29, 0.0, 0.51}
\definecolor{magenta}{rgb}{255,0,255}

\usepackage{booktabs}
\usepackage{hyperref}
\usepackage{cleveref}
\usepackage{comment}

\crefformat{figure}{#2Fig.~#1#3}
\crefmultiformat{figure}{#2Figs.~#1#3}{ and~#2#1#3}{, #2#1#3}{ and~#2#1#3}

\crefformat{theorem}{#2Theorem~#1#3}

\crefformat{enumi}{#2Step~#1#3}
\crefmultiformat{enumi}{#2Steps~#1#3}{ and~#2#1#3}{, #2#1#3}{ and~#2#1#3}

\crefrangeformat{figure}{#3Figs.~#1#4 to~#5#2#6}

\crefformat{appendix}{#2Appendix~#1#3}

\crefformat{section}{#2\S~#1#3}
\crefmultiformat{section}{#2\S\S~#1#3}{ and~#2#1#3}{, #2#1#3}{ and~#2#1#3}

\crefformat{algorithm}{#2Algorithm~#1#3}
\crefmultiformat{algorithm}{#2Algorithms~#1#3}{ and~#2#1#3}{, #2#1#3}{ and~#2#1#3}

\crefformat{equation}{#2Eq.~(#1)#3}
\crefmultiformat{equation}{#2Eqs.~(#1)#3}{ and~#2(#1)#3}{, #2(#1)#3}{ and~#2(#1)#3}

\crefformat{table}{#2Table~#1#3}
\crefmultiformat{table}{#2Tables~#1#3}{ and~#2#1#3}{, #2#1#3}{ and~#2#1#3}

\usepackage{verbatim}
\usepackage{chngcntr}
\usepackage{comment}

\newtheorem{theorem}{Theorem}

\raggedbottom

\begin{document}

%\preprint{APS/123-QED}

\title{Flow through Pore-Size Graded Membrane Pore Network}

\thanks{This work was supported by NSF Grants No. DMS-1615719, DMS-2133255 and DMS-2201627.}

\author{Binan Gu}
\email{Corresponding emails: bg263@njit.edu, kondic@njit.edu and linda.cummings@njit.edu}
\author{Lou Kondic}
\author{Linda J. Cummings}

\date{\today}% It is always \today, today,
             %  but any date may be explicitly specified

\begin{abstract}
Pore-size gradients are often used in the design of membrane filters to increase filter lifetime and ensure fuller use of the initial membrane pore volume. In this work, we impose pore-size gradients in the setting of a membrane filter with an internal network of interconnected tube-like pores. We model the flow and foulant transport through the filter using the Hagen-Poiseuille framework coupled with advection equations via conservation of fluid and particle flux, with adsorption as the sole fouling mechanism. We study the influence of pore-size gradient on performance measures such as total filtrate throughput and accumulated contaminant concentration at the membrane downstream pore outlets. Within the limitations of our modeling assumptions we find that there is an optimal pore-radius gradient that maximizes filter efficiency independent of maximum pore length (an input parameter that influences the structure of the pore network), and that filters with longer characteristic pore length perform better. 

\end{abstract}

\keywords{Mathematical Modeling; Networks; Fluid Mechanics.}%Use showkeys class option if keyword
                              %display desired
\maketitle

%\tableofcontents

\section{Introduction}
Membrane filtration is an industrial process that uses porous material to separate contaminants from a feed solution. It is crucial to commercial processes such as waste water treatment~\cite{omni2012}, radioactive sludge removal~\cite{chiera2021}, beer clarification~\cite{Lipnizki2015} and membrane bioreactors~\cite{Dizge2011}, among many others. Filtration also underpins many daily household appliances including water purifiers~\cite{barnaby2017}, air filters~\cite{Yu2009,vijayan2015,Sublett2011,LIU2017375}, and grease filters~\cite{Wong2007}. To design an ideal filter, one aims to tailor the geometric features of the filter (specifically, the pores' size, shape and connectivity) so that impurities are removed efficiently, while producing a required amount of filtrate up to a certain standard of purity.

Membrane filtration employs a wide variety of pressure-driven separation methods, distinguished by the scales of pore sizes at which they operate. For example, microfiltration is effective in sieving solids and bacteria;  ultrafiltration is often employed in virus and toxin removal; nanofiltration is a popular final step for water treatment that removes major monovalent ions such as chloride and sodium; and reverse osmosis separates multivalent ions that escape from the previous methods by applying mechanical pressure to overcome osmotic pressure~\cite{bruggen2018}. A wide range of materials may be used in membrane manufacture, but membrane materials in common use are roughly divided into two categories: polymeric and ceramic~\cite{Goswami2020}. Most polymeric membranes 
are made with low-cost organic materials and are popular in industrial applications; however, they tolerate large thermal fluctuations or harsh chemical environments poorly. The more expensive ceramic membranes overcome these drawbacks via their chemical composition ({\it e.g.,} metal oxides), while producing higher fluxes due to their greater hydrophilicity. Designing layered or composite filters that incorporate the merits and disadvantages of both materials has become an active area of research~\cite{cera_poly_composite}. Furthermore, membrane filters are manufactured with many different spatial configurations of membrane materials (leading to differently-structured pores) such as node-fibril~\cite{RanjbarzadehDibazar2017}, flat-sheet~\cite{SANO2020123015} and multitube~\cite{Ma2018}, either mimicking natural filters found in plants and animal organs, or resulting from careful design considerations.

It is clear from the studies cited above (and many others too numerous to cite) that membrane filter design, using a combination of materials and pore layouts, has profound implications for filter performance. A well-designed membrane filter not only maintains particle retention capability but also provides sufficient output (filtrate) to serve the immediate needs of the application. Common membrane designs incorporate structural variations at the microscale (connectivity of interior pores, pore branching, etc.) as well as the macroscale; {\it e.g.} pleated filters (also known as spiral-wound modules~\cite{Basile2010}) and multilayered membrane filters~\cite{multi_onur_2017}. This last class of membranes has garnered particular attention from industrialists and practitioners for their versatility in applications. In multilayered membrane filters, single-layer membrane filters, each with a different characteristic pore size, may be laminated to form a composite membrane with a distribution of pore sizes in its depth (in addition to any intra-layer pore-size variations, which are assumed to be less important; but see Gu~{\it et al.}~\cite{gu_jms_2022} for more discussion of this feature). Common practice is to place the layers with larger pores upstream, allowing more fluid to pass through and providing more surface area to capture particles; and layers with smaller pores downstream to capture any particles that may escape from upstream layers.

We emphasize that the notions of porosity gradient and pore-size gradient are inherently different, though they are sometimes related; and the terms are often used interchangeably in various contexts within current literature. To illustrate, consider first a simple membrane structure where the membrane upstream and downstream surfaces are connected by single continuous pores (a ``track-etched'' type structure~\cite{Apel2001}). Suppose such pores have circular cross-section (but with depth-varying radius) and a straight axis perpendicular to the membrane, then if one identifies local pore radius with pore size, a direct relation may be made between pore-size gradient and porosity gradient: a pore-size (radius) gradient does induce a porosity gradient of the same sign (or vice versa). However, for a more general network of pores, one can easily design a network that has decreasing pore size in the depth but no porosity gradient (or even a porosity gradient of opposite sign), by appropriately increasing the number density of pores with depth. With this distinction between porosity and pore-size gradient in mind, we briefly review current relevant literature that considers either type of gradient (often both are present), noting the key findings and motivating the work of the current paper.

Many experimental studies have shown that improved performance can be achieved with {\bf porosity-graded} multilayered membrane filters. For example, in terms of membrane performance, the ability of porosity-graded ceramic filters to delay internal fouling when compared to common homogeneous ceramic filters has been discussed~\cite{Ng2020}; and the improved throughput production and foulant removal capability of a photocatalytic membrane with hierarchical porosity was investigated~\cite{Goei2013}. Progress has also been made on the manufacturing side. Improved tunability of the physical membrane characteristics in porosity-graded membrane filter assembly has been demonstrated by Amin {\it et al.}~\cite{Amin2017}; a novel fabrication strategy of porosity-graded porous foams via 3D printing has been discussed by Cappaso~{\it et al.}~\cite{capasso2020}; and recent advances in additive manufacturing techniques for porous materials with controllable structure have been reviewed by Guddati {\it et al.}~\cite{guddati2019}. On the other hand, much attention has also been given to manufacturing {\bf pore-size-graded} filters with desired characteristics, such as the work of Dong {\it et al.}~\cite{Dong2022} on fabricating air filters represented as pore-size-graded networks (mimicking bryophyte leafs), and of Harley~{\it et al.}, who present a novel strategy to build porous tubular scaffolds with prescribed pore-size gradient~\cite{HARLEY2006866}. Kosiol~{\it et al.} use gold nanoparticles as a probe to estimate the pore-size gradient in commercial and non-commercial parvovirus retentive membranes, as these gradient values were found to correlate strongly with virus retention~\cite{Kosiol2018}. The advantages of pore-size-graded filters in applications such as tissue engineering~\cite{DiLuca2016,MANNELLA201531,SOBRAL20111009} and fuel cells~\cite{OH2015186} have also been discussed.

Several theoretical groups have also contributed to the breadth of the study on multilayered membrane filters with porosity and/or pore size gradients via mathematical modeling and numerical studies; for example studying filter performance optimization as a function of pore-size gradient within a simply-structured membrane~\cite{ian2021,Griffiths2016,yixuan2020,sanaei2019}; carrying out numerical simulation and analysis of performance of simple multilayered filter structures~\cite{fong_multi}, and investigating geometric and topological properties of membrane networks~\cite{gu_network_2021}. There is also significant work on techniques to probe the microstructure of membrane filters, including imaging techniques used to recover network representations of membranes structures in 3D~\cite{imperial,3d_image} that motivate and inform our modeling work using pore networks.  

To further motivate the current work, we summarize our previous efforts on the modeling of membrane filtration, all of which assumed that membrane pores are circularly-cylindrical connected tubes. Ref.~\cite{gu2020} studied layered membrane filters with three different (very simple) internal pore structures, having varying degrees of connectivity. A pore-size gradient was introduced using a geometric parameter that prescribes the initial pore radius in each layer. In subsequent work \cite{gu_network_2021}, the authors generalized those simple pore configurations to membranes with a random network of pores and formulated the mathematical equations for flow, transport and fouling on such networks, on which the current work is based. Lastly, pore-size (radius) variations were imposed on these membrane pore networks, and their effect studied~\cite{gu_jms_2022}. Initial pore radii follow a uniform distribution, centered about a fixed average, independent of the pore's depth in the membrane. However, pore-size gradient was not considered in the last two works since the focus was to draw out basic geometric factors of the network structure that influence membrane filter performance. 

Building on this previous work, the current paper focuses on network models of {\bf pore-size-graded} filters with constant porosity across the filter. We specifically exclude porosity variations so that the impact of pore-size gradient alone can be elucidated; and also because, even in homogeneous membranes, porosity has been shown to influence filtration performance rather strongly, to the extent that no benefit would be anticipated by having porosity decrease in the membrane depth (see~\cite{gu_network_2021,gu_jms_2022}, for example). Our goals are to model a membrane filter with a pore-size gradient and then to probe and explain the influence of this gradient on membrane performance metrics such as total filtrate throughput and particle retention (only adsorptive particle fouling is considered in this work). For the first goal, we model the membrane filter as a network of circularly-cylindrical pores. We introduce the pore-size gradient by dividing the membrane into bands of equal thickness (within each of which initial pore radius is constant), and designating a linearly decreasing sequence of radius values for pores from upstream to downstream bands. We generate membrane pore networks with such a banded structure following a random network generation procedure, adapted from that proposed by Gu~{\it et al.}~\cite{gu_network_2021}. We further impose that the porosity of each band, an influential geometric feature of membrane pore networks, is approximately equal across all bands, so that we reveal the sole influence of pore size (radius) gradient on membrane filter performance. In addition to studying pore-size gradient variations, we also consider variations in maximum pore length, another model input parameter that controls the geometric structure of the pore network. 

The paper is outlined as follows: in \cref{sec:modelling}, we describe the details of the mathematical model, first introducing how the banded pore networks are constructed with specified pore-size gradient in \cref{sec:graded} and then presenting the governing equations for fluid flow and foulant particle transport in \cref{sec:governing}; in \cref{sec:perf_metrics} we define the performance metrics we use to compare our membrane pore networks; in \cref{sec:algorithm}, we streamline the pore-size-graded network generation procedures into an algorithm; in \cref{sec:grad_scales}, we provide the appropriate physical scales of the problem and then summarize the model in nondimensional form; in \cref{sec:grad_results}, we present and explain our observations; and in \cref{sec:conclusion_grad}, we conclude our findings.

\section{Mathematical Modeling}\label{sec:modelling}
In this section, we introduce a mathematical model that captures the multilayered membrane structure using a pore network representation. We first describe the general network generation protocol and how this creates pore junctions and cylindrical pores, and define our computational domain. After introducing our notions of pore-size gradient, band radius and band porosity, we provide details of the methodology by which we generate radius-graded banded networks under specific physical constraints. Lastly, we briefly present the governing equations for flow of a feed solution through the membrane filter, for transport of the foulant particles carried by the feed, and for the pore-radius evolution under fouling, as well as the solution techniques for these equations when they are posed on a network of interconnected pores.

\subsection{Pore Size-Graded Networks}\label{sec:graded}
We model a representative unit of a membrane filter as a block of porous material that occupies a cube with side length $W$ (see \cite{gu_network_2021} for a similar setup) and contains a network of pores. Each pore is assumed to be circularly cylindrical and thus fully characterized by its length and radius. We use the terms ``pore size'' and ``pore radius'' interchangeably from hereon. The unit consists of a membrane top surface with pore inlets, interior pore junctions (vertices of the network), pores (edges of the network), and a bottom membrane surface with pore outlets. The membrane unit is generated as follows: interior junctions are points represented by Euclidean coordinates in $\mathbb{R}^3$, uniformly randomly placed in a rectangular box with height $2W$ and square cross section of side length $W$. Pores are constructed as slender circular cylinders, with axes along straight lines that connect the junctions according to a periodic connection metric (see \cref{app:vertex_edge_defn}). More specifically, we connect junctions (possibly through the side boundaries) when they lie within a distance of $A_{\rm max}$ and at least $A_{\rm min}$ away from each other. These two parameters are referred to as the maximum and minimum pore lengths, respectively. Inlets and outlets are the intersection points between the pores thus generated and two horizontal planes at heights $Z=0.5W$ and $Z=1.5W$ respectively (thus $Z$ is the coordinate perpendicular to the membrane surfaces). The space created between these planes forms our computational domain -- a cube with side length $W$ (referred to as {\bf the domain} in the following), while the parts exterior to this cube are discarded.

In the following sections, we first introduce the notion of pore-size (radius) gradients, as studied in this paper, and define band porosity. In addition, in order to tune membrane porosity, we devise a rule to determine the number of pore junctions to be placed (randomly) in the domain to complete the banded network generation protocol.

\subsubsection{Bands and Radius Gradient}\label{subsec:bands}

We introduce a pore-radius gradient by first dividing the domain in the $Z$-direction (the coordinate direction perpendicular to the membrane upstream and downstream surfaces) into $m$ bands, each of thickness $W/m$. \cref{fig:3d_banded_network} shows a 2D representation of a 3D schematic. Let $\mathcal{V}_k$ and $\mathcal{E}_k$ be the set of junctions and pores in the $k^{\rm th}$ band (numbered from the upstream surface; see the detailed definitions in \cref{eq:banded_vertex_set,eq:edge_set} in \cref{app:vertex_edge_defn}). Each pore in $\mathcal{E}_k$ is assigned an initial radius
\begin{equation}
    R_k = R_m + (m-k)sW, \quad 1\leq k \leq m,
    \label{eq:band_radius}
\end{equation}
where $s\geq 0$ is the radius gradient and $R_m$ is the radius of pores in the bottom band ($k=m$). This consideration assumes that the initial pore radius in each band is a constant, and that pore size always decreases in the depth of the membrane. We say that a pore belongs to the $k^{\rm th}$ band when the largest proportion of its total length lies inside the $k^{\rm th}$ band, and we then assign $R_k$ as its initial radius (see the different thicknesses and color coding of pores across the bands in the schematic of \cref{fig:3d_banded_network}). We refer to $R_k$ as the $k^{\rm th}$ {\bf band radius} from hereon. We also call networks with $s=0$ {\bf uniform networks} and those with nonzero $s$ values {\bf graded networks}.

\begin{figure}
    \centering
    \includegraphics[scale=0.5]{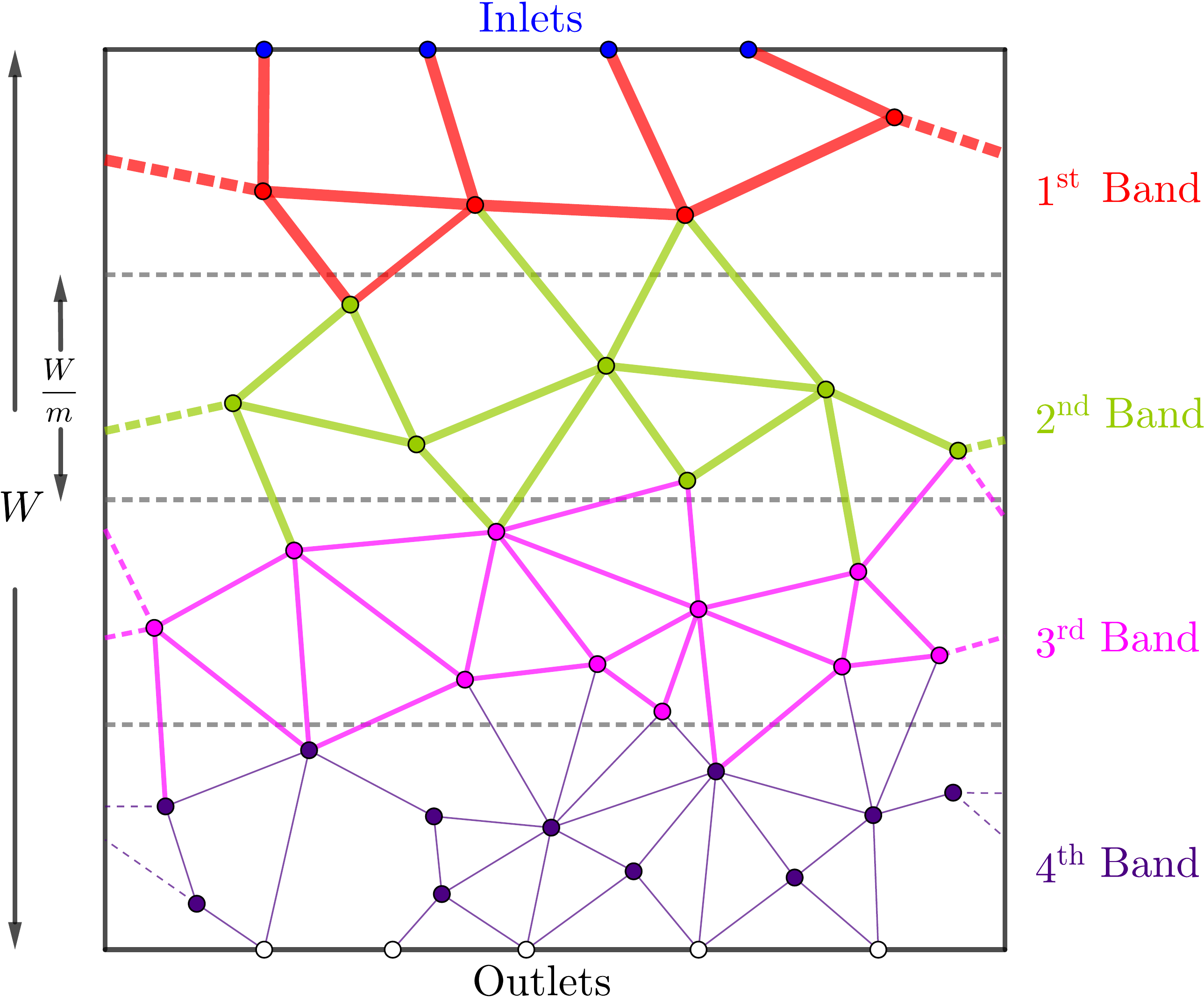}
    \caption{Schematic of a 3D banded network represented in 2D. Colored junctions and pores correspond to each band as follows: {\color{red}red $1^{\rm st}$ band}; {\color{LimeGreen}green $2^{\rm nd}$ band}; {\color{magenta}magenta $3^{\rm rd}$ band} and {\color{indigo}indigo $4^{\rm th}$ band}. {\color{blue}Blue dots} are inlets. White dots are outlets. Dashed lines are pores created by the periodic boundary conditions (see \cref{eq:connection_metric} in \cref{app:vertex_edge_defn}).
    }
    \label{fig:3d_banded_network}
\end{figure}

In this work, to reduce the number of degrees of freedom and to make our comparisons of different composite membranes as ``fair'' as possible, we impose the constraint that the average pore radius across all bands is equal to some value $R_0$, for any $R_m$ and $s$ values. More precisely, for each graded network, we find a range of $\left(R_m,s\right)$ pairs corresponding to an average pore radius $R_0$ across the bands, and compare the performance of these networks versus their uniform counterparts with radius $R_0$ ($s=0$). The average radius $R_0$ across the $m$ bands satisfies (using \cref{eq:band_radius})
\begin{equation}
\left(\textit{Constraint 1a}\right)\qquad R_0 = \frac{1}{m}\sum_{k=1}^m R_k = R_{m}+\frac{sW}{2}\left(m-1\right).
\label{eq:constraint_1a}
\end{equation}
Thus, once $R_0$ is prescribed, we can choose $s$ and then uniquely determine $R_m$.

We here point out that to generate a banded network, in addition to declaring band radii using \cref{eq:band_radius}, we must also specify an initial number of randomly-placed junctions in each band. This in general leads to different band porosities (void volume in the band divided by volume of the containing slab, discussed in detail later). As the goal of our work is to isolate the influence of pore-size (radius) gradients, we choose to enforce that each band has the same porosity. To achieve this, we devise a guess-and-correct procedure detailed in the following sections.

\subsubsection{Band and Membrane Porosity}\label{sec:band_and_membrane_porosity}
Before we introduce the banded network generation procedures, we clarify the crucial definition of band porosity. The {\bf band porosity} of the $k^{\rm th}$ band of a graded network with $m$ bands is given by
\begin{equation}
\Phi_k = \frac{\frac{\pi}{2} \sum_{e_{ij}\in \mathcal{E}} R_{ij}^2 L_{k,ij}}{W^3/m},
\label{eq:band_porosity}
\end{equation}
where $e_{ij}$ is the pore connecting junctions $i$ and $j$ with radius $R_{ij}$, and $L_{k,ij}$ measures the length of $e_{ij}$ that lies within the $k^{\rm th}$ band (by which definition $L_{k,ij}=0$ if $e_{ij}$ has no component within layer $k$; see \cref{app:vertex_edge_defn} for a rigorous definition of $L_{k,ij}$), and $\mathcal{E}$ is the set of all edges (per \cref{eq:edge}). The sum in \cref{eq:band_porosity} is over all pores since $L_{k,ij}$ includes the contributions from pores that cross multiple bands. The numerator is the void volume of the $k^{\rm th}$ band (volume of empty space), while the denominator is the volume of the rectangular slab with a square cross section of side length $W$ and height $W/m$.

The {\bf membrane porosity} is given by
\begin{equation}
\Phi = \frac{1}{m}\sum_{k=1}^m \Phi_k,
\label{eq:membrane_porosity}
\end{equation}
that is, the average band porosity across the bands. This definition is equivalent to the sum of total void volume divided by the volume of the cube with side length $W$. 

In general, the band porosities and overall membrane porosity are functions of time, since pore radii evolve due to foulant particle deposition and adsorption; \cref{eq:band_porosity,eq:membrane_porosity} hold pointwise in time. In the following \cref{sec:const_radius}, however, we are concerned only with describing the initial membrane structure, hence we adopt a simpler notation by dropping the time dependence as we work only with {\bf initial} porosities. 

\subsubsection{Constant Radius in Each Band}\label{sec:const_radius}
Here we present the methodology of generating radius-graded membrane networks. In this section only, for notational simplicity, $\Phi_k$ and $\Phi$ denote the {\bf initial} porosity in the $k^{\rm th}$ band and {\bf initial} membrane porosity, respectively. We assign an initial porosity value $\Phi$ for the membrane pore structure, which, in all simulations presented here, is set as $\Phi = 0.6$, a value typical for commercial filters. The principal aim of this section is to estimate the number of points (pore junctions) needed in each band to generate networks that satisfy prescribed constraints, including \cref{eq:constraint_1a} and more to be detailed below.

To isolate the effect of pore-radius gradients on membrane performance, we enforce  that every band has approximately the same initial porosity (we cannot insist that band porosities be exactly equal due to the random nature of the network generation protocol). This (soft) constraint is given by
\begin{equation}
\left(\textit{Constraint 2a}\right)\qquad \Phi_n \approx \Phi_k,\quad n,k=1,\ldots,m,
\label{eq:constraint_2a}
\end{equation}
where the ``$\approx$'' is to be made precise in the algorithm introduced below. Since the initial pore radius is constant within each band, \cref{eq:band_porosity} simplifies to
\begin{equation}
\Phi_k := \frac{\frac{\pi}{2} R_k^2 \sum_{e_{ij}\in \mathcal{E}} L_{k,ij}}{W^3/m}.
\label{eq:band_porosity_const}
\end{equation}
Using \cref{eq:constraint_2a,eq:band_porosity_const}, for arbitrary bands $n$ and $k$, we have
\begin{equation}
\frac{W^3}{m}\Phi_n = \frac{\pi}{2}R_{n}^{2} \sum_{e_{ij}\in \mathcal{E}} L_{n,ij} \approx \frac{\pi}{2}R_k^{2}\sum_{e_{ij}\in \mathcal{E}}L_{k,ij} = \frac{W^3}{m}\Phi_k, \qquad \implies \qquad R_{n}^{2} \sum_{e_{ij}\in \mathcal{E}} L_{n,ij} \approx R_k^{2}\sum_{e_{ij}\in \mathcal{E}}L_{k,ij}.
\label{eq:local_porosity_approx}
\end{equation}

We now relate $N_k$, the number of pore junctions randomly placed in the $k^{\rm th}$ band, to the sum in \cref{eq:local_porosity_approx} and use the result to provide an estimate for $N_k$, to generate pore networks that satisfy the constraints given in \cref{eq:constraint_1a,eq:constraint_2a}. We use basic probabilistic arguments to deduce that $\sum_{e_{ij}\in \mathcal{E}} L_{k,ij}$, total edge length in the $k^{\rm th}$ band, scales with $N_k\left(N_k-1\right)$ and other terms common to all other bands (so that they cancel from each side of \cref{eq:local_porosity_approx}). Details of this derivation are in \cref{app:band_estimate}. \cref{eq:local_porosity_approx} then reduces to
\begin{equation}
R_n^{2}N_n\left(N_n-1\right) \approx R_k^{2}N_k\left(N_k-1\right),\quad \forall n,k=1,\ldots,m,\quad n\neq k.
\label{eq:num_k_relationship_1}
\end{equation}
Since we consider only situations where pore size decreases in the membrane depth, a nonzero pore-size gradient $s>0$ implies that $R_m$, the radius in the $m^{\rm th}$ band, is the smallest (per~\cref{eq:band_radius}). Thus, the $m^{\rm th}$ band requires more points than other bands to satisfy \cref{eq:num_k_relationship_1}, and motivates our initializing our algorithm with this band (it is more computationally efficient, since this choice minimizes the errors incurred in the sequential process outlined next). To estimate the $N_k$ sequentially, we first prescribe $s$, the radius gradient. This fixes each band radius $R_k$ via {\bf Constraint 1}, \cref{eq:constraint_1a}. We then make several guesses for $N_m$ and stop when we find a value such that $\Phi_m\approx \Phi$. With $N_m$ determined, we employ \cref{eq:band_radius,eq:num_k_relationship_1} to obtain the relationship
\begin{equation}
R_m^{2}N_m\left(N_m-1\right) \approx \left(R_m + \left(m-k\right)s\right)^{2}N_k\left(N_k-1\right),
\label{eq:num_k_relationship_2}
\end{equation}
which we solve to estimate the number of junctions, $N_k$, for all other bands\footnote{We recall that the initial domain is a rectangular prism of height $2W$ with square cross-sections of side length $W$. The regions, $0\leq Z<0.5W$ and $1.5W < Z \leq 2W$, outside the central cube (see~\cref{fig:3d_banded_network}) before cutting, have the same point density as the $1^{\rm st}$ band and the $m^{\rm th}$ band, respectively.}. 

The relationship \cref{eq:num_k_relationship_2} is by no means exact, and can fail by some margin to guarantee equal porosity for each band when the gradient $s$ becomes too large, but it provides a useful starting point. After estimating the $N_k$ values as described, we compute the corresponding $\Phi_k$ and check their proximity to the prescribed value $\Phi$. We correct $\Phi_k$ to $\Phi$ by adding or removing nodes randomly. This correction procedure starts with the $m^{\rm th}$ band and proceeds upstream. We iterate this procedure until the network achieves band porosities close to $\Phi$ within a prescribed relative tolerance $\epsilon=0.005$. Variations in porosity as such are sufficiently small that they have a negligible effect on our results for performance metrics (see~\cite{gu_jms_2022} for details on the effect of porosity variations).

\subsection{Governing Equations}\label{sec:governing}

In this section, we briefly describe the dimensional governing equations for flow of the foulant-laden feed solution and transport and deposition of foulant in a single pore, then extend the description to a network of such pores using conservation laws at pore junctions. For more details of the derivation, we refer the reader to Gu {\it et al.}~\cite{gu_network_2021}. 

\subsubsection{Fluid Flow}
The feed is assumed to be a Newtonian fluid with viscosity $\mu$, driven through a cylindrical pore with small aspect ratio via a pressure difference $P_0$. The flux $Q_{ij}$ between junctions $i$ and $j$ is characterized by the Hagen-Poiseuille equation~\cite{probstein},
\begin{equation}
    Q_{ij}=\mathbb{K}_{ij}\left(P_i - P_j\right), \quad e_{ij}\in \mathcal{E},
    \label{eq:HP}
\end{equation}
where $\mathbb{K}_{ij}$ is the conductance of the pore $e_{ij}$, given by 
\begin{equation}
\mathbb{K}_{ij}=\begin{cases}
\frac{\pi R_{ij}^{4}}{8\mu A_{ij}}, & e_{ij}\in \mathcal{E},\\
0, & \text{otherwise},
\end{cases}
\label{eq:grad_conductance}
\end{equation}
where $R_{ij}$ and $A_{ij}$ are the radius and length of $e_{ij}$. Enforcing conservation of flux at pore junctions leads to a system of equations for the pressure at each interior junction, subject to the boundary conditions
\begin{equation}
    P_{i}=
    \begin{cases}
    P_{0}, & i\in \mathcal{V}_{{\rm in}},\\
    0, & i\in \mathcal{V}_{{\rm out}},
    \end{cases}
\end{equation}
where $\mathcal{V}_{{\rm in}}$ and $\mathcal{V}_{{\rm out}}$ are the set of membrane pore inlets and outlets, respectively (see \cref{eq:grad_vtop_vbot} for their precise definitions). Once the pressures are known, we use \cref{eq:HP} to find the flux in each individual pore. The scales for these equations are defined in \cref{sec:grad_scales}.

\subsubsection{Foulant Transport}
In this work only adsorptive fouling is considered, in which foulant particles much smaller than the pore radius are transported by the flow and adhere to the pore wall due to a variety of chemical or physical effects that depend on the affinity between the particles and the membrane material. $C_{ij}$, the concentration of foulant particles in pore $e_{ij}$, satisfies the steady state advection equation,
\begin{equation}
    Q_{ij}\frac{\partial C_{ij}}{\partial Y}=-\Lambda R_{ij}C_{ij},
    \label{eq:advection}
\end{equation}
where $\Lambda$ is an affinity parameter that describes the interaction between foulant particles and the membrane material; and $Y$ is a local coordinate along the pore in the direction of flux $Q_{ij}$. \cref{eq:advection} is paired with a boundary condition
\begin{equation}
    C_{ij}\left(0,T\right)=\begin{cases}
C_{0}, & i\in \mathcal{V}_{{\rm in}},\\
C_{i}\left(T\right), & \text{otherwise},
\end{cases}
\end{equation}
where $C_{i}\left(T\right)$ is the (unknown) concentration at junction $i$, determined by enforcing conservation of foulant particle flux, $Q_{ij}C_{ij}$, at each junction $i$ (for all adjacent $j$). 

\subsubsection{Pore Radius Evolution}
Once we obtain the foulant concentration at each junction, we model pore-size evolution with a decay rate directly proportional to the concentration at the upstream inlet, {\it i.e.}, the pore junction with higher pressure (for a detailed justification of this model choice, which assumes pores remain cylindrical throughout their evolution, see \cite{gu_network_2021}). More precisely, $R_{ij}$, the radius of pore $e_{ij}$, satisfies
\begin{equation}
    \frac{dR_{ij}}{dT}=-\Lambda\alpha {C}_{i},\quad \alpha = \frac{V_p}{2\pi}, \quad e_{ij}\in \mathcal{E},
    \label{eq:adsorption}
\end{equation}
where $V_p$ is the effective volume of each foulant particle. This ODE is subject to the initial condition 
\begin{equation}
    \quad R_{ij}\left(0\right) = R_{k}, \quad e_{ij}\in \mathcal{E}_k, \quad k = 1,\ldots,m,
    \label{eq:radius_initial_condition}
\end{equation}
that is, we assign an initial pore radius $R_k$ to each pore that lies in the $k^{\rm th}$ band.

\section{Performance Metrics}\label{sec:perf_metrics}
We now define the performance metrics used in this work.

\begin{enumerate}
    \item Membrane Lifetime, $T_{\rm final}$. Two possible criteria defining membrane filter lifetime are considered:
        \begin{enumerate}
            \item (Flux extinction) $T_{\rm final}$ is the earliest time at which there exists no path from the top surface to the bottom. This corresponds to the time when flux is zero.
            \item (Flux threshold) $T_{\rm final}$ is the time at which flux level reaches a prescribed lower threshold. 
        \end{enumerate}
    \item Filtrate Throughput $H\left(T\right)$,
    \begin{align}
        H\left(T\right)=\int_{0}^{T}Q_{\rm out}\left(T^{\prime}\right)dT^{\prime}, \\
        Q_{{\rm out}}\left(T\right)=\sum_{v_{j}\in \mathcal{V}_{\rm out}}\sum_{i:e_{ij}\in \mathcal{E}}{Q}_{ij}\left(T\right), \label{eq:q_out}
    \end{align}
    where $Q_{{\rm out}}\left(T\right)$ is the total flux exiting the filter and $\mathcal{V}_{\rm out}$ is the set of pore outlets at the membrane bottom surface (see its detailed definition in \cref{eq:grad_vtop_vbot}). In particular, we are interested in $H_{\rm final}:=H\left(T_{\rm final}\right)$, the total volume of filtrate processed by the filter over its lifetime. 
    \item Initial Flux $Q_{\rm out}\left(T=0\right)$.
    \item Accumulated Concentration of Foulant at Membrane Outlet $C_{{\rm aco}}\left(T\right)$,
    \begin{equation}
       C_{{\rm aco}}\left(T\right)=\frac{\int_{0}^{T}C_{{\rm out}}\left(T^{\prime}\right)Q_{{\rm out}}\left(T^{\prime}\right)dT^{\prime}}{\int_{0}^{T}Q_{{\rm out}}\left(T^{\prime}\right)dT^{\prime}} = \frac{\int_{0}^{T}C_{{\rm out}}\left(T^{\prime}\right)Q_{{\rm out}}\left(T^{\prime}\right)dT^{\prime}}{H\left(T\right)}, 
       \label{eq:aco}
    \end{equation}
    where 
    \begin{equation}
    C_{{\rm out}}\left(T\right)=\frac{{\displaystyle \sum_{v_{j}\in \mathcal{V}_{\rm out}} \sum_{i:e_{ij}\in \mathcal{E}}}{C}_{j}\left(T\right){Q}_{ij}\left(T\right)}{Q_{{\rm out}}\left(T\right)} \label{eq:c_out}
    \end{equation}
    is the instantaneous foulant concentration at the membrane outlet. Of particular interest is $C_{\rm final}:=C_{{\rm aco}}\left(T_{\rm final}\right)$, which provides a measure of the purity of the total volume of filtrate collected over the filter lifetime (assuming a batch process). 
    
    \item Band porosity $\Phi_k$ and membrane porosity $\Phi$ as functions of time per~\cref{eq:band_porosity,eq:membrane_porosity} respectively, and their changes over the filter lifetime, referred to as {\bf band and membrane porosity usage} respectively,
    \begin{subequations}
        \begin{align}
            \Delta \Phi_k &= \Phi_k\left(0\right) - \Phi_k\left(T_{\rm final}\right), \quad k=1,\ldots,m, \label{eq:delta_band_porosity}\\
            \Delta \Phi &= \Phi\left(0\right) - \Phi\left(T_{\rm final}\right). \label{eq:delta_membrane_porosity}
        \end{align}
    \end{subequations}
    Per~\cref{eq:constraint_2a}, $\Phi_k\left(0\right) \approx \Phi\left(0\right) \approx 0.6$ where $\approx$ is up to some tolerance $\epsilon$.
    
\end{enumerate}

\section{Algorithm}\label{sec:algorithm}
We summarize the procedures described in \cref{sec:const_radius} in \cref{al:algorithm1}. 
\begin{algorithm}[!ht]
\caption{\bf Filtration on Pore-Size-Graded Networks}
\label{al:algorithm1}
\begin{enumerate}
    \item Choose maximum and minimum pore lengths $A_{\rm max}$, $A_{\rm min}$, average pore radius $R_0$ and porosity $\Phi$. \label{al:const_step1}
    \item Initialise radius gradient $s \geq 0$. Find $R_k$ as constrained by $R_0$ via \cref{eq:constraint_1a}. \label{al:const_step2}
    \item Generate $N_G$ banded networks parametrized by $A_{\rm max}$ and $R_k$:
        \begin{enumerate}
            \item Guess $N_m$ such that $\Phi_m \approx \Phi$.
            \item Determine $N_k$ for $k=1,\ldots,m-1$ via \cref{eq:num_k_relationship_2}.
            \item Generate networks using $N_k$ random junctions in the $k^{\rm th}$ band, connected according to the metric defined in \cref{eq:connection_metric}.
            \item Correct membrane porosity by adding or deleting junctions until it is within relative tolerance $\epsilon = 0.005$ of $\Phi$. More precisely, we perform this procedure until $ \left|1-\frac{\Phi_{\rm corr}}{\Phi}\right|<\epsilon$ where $\Phi_{\rm corr}$ is the porosity during this iterative procedure. \label{al:correction}
        \end{enumerate}    
    \item Compute the performance metrics (defined in \cref{sec:perf_metrics}).
    \item Go back to \cref{al:const_step2} by varying $s$.
    \item Go back to \cref{al:const_step1} by varying $A_{\rm max}$.
\end{enumerate}
\end{algorithm}

\section{Scales}\label{sec:grad_scales}

We nondimensionalize the model and key quantities introduced in \cref{sec:modelling}, and the performance metrics defined in \cref{sec:perf_metrics}, with the following scales,
\begin{equation}\label{eq:grad_scaling}
    \begin{gathered}
    P=P_{0}p,\qquad \left(A_{ij},L_{k,ij}\right)=W\left(a_{ij},l_{k,ij}\right),\\
    \left(A_{\rm min},A_{{\rm max}}\right)=W\left(a_{\rm min},a_{{\rm max}}\right),\qquad \left(R_{ij},R_0\right)=W\left(r_{ij},r_0\right),\\
    Q_{ij}=\frac{\pi W^{3}P_{0}}{8\mu}q_{ij},\qquad \mathbb{K}_{ij}=\frac{\pi W^{3}}{8\mu}\mathbb{k}_{ij},\qquad\mathbb{k}_{ij}=\frac{r_{ij}^{4}}{a_{ij}},\\
    \left(C_{ij},C_{\rm aco}\right) = C_0 \left(c_{ij},c_{\rm aco}\right),\qquad Z=Wz,\qquad\Lambda=\frac{\pi WP_{0}}{8\mu}\lambda,\qquad T=\frac{W}{\Lambda\alpha C_{0}}t.
    \end{gathered}
\end{equation}
\cref{eq:band_radius} for the initial pore radius in layer $k$, in dimensionless form now reads
\begin{equation}
    r_k = r_m + \left(m-k\right)s.
    \label{eq:dimless_band_radius}
\end{equation}
Using the scales for pore radius $R_{ij}$ and length $L_{k,ij}$ in the $k^{\rm th}$ band, band porosity becomes 
\begin{equation}
\Phi_k\left(t\right) := \frac{\frac{\pi}{2} \sum_{e_{ij}\in \mathcal{E}} r_{ij}^2\left(t\right) l_{k,ij}}{1/m}.
\label{eq:band_porosity_const_nondim}
\end{equation}
The dimensionless performance metrics become (with upper-case dimensional quantities replaced by their lower-case equivalents)
\begin{subequations}
    \begin{align}
    h\left(t\right)=\frac{1}{\lambda}\int_{0}^{t}q_{{\rm out}}\left(t^{\prime}\right)dt^{\prime},\quad & q_{{\rm out}}\left(t\right)=\sum_{v_{j}\in \mathcal{V}_{{\rm out}}}\sum_{v_{i}:\left(v_{i},v_{j}\right)\in \mathcal{E}}q_{ij}\left(t\right),\label{eq:dimless_tt}\\
    c_{{\rm aco}}\left(t\right)=\frac{\int_{0}^{t}c_{{\rm out}}\left(t^{\prime}\right)q_{{\rm out}}\left(t^{\prime}\right)dt^{\prime}}{\int_{0}^{t}q_{{\rm out}}\left(t^{\prime}\right)dt^{\prime}},\quad & c_{{\rm out}}\left(t\right)=\frac{{\displaystyle \sum_{v_{j}\in \mathcal{V}_{{\rm out}}}\sum_{v_{i}:\left(v_{i},v_{j}\right)\in \mathcal{E}}}c_{j}\left(t\right)q_{ij}\left(t\right)}{q_{{\rm out}}\left(t\right)},\label{eq:dimless_cacm}
    \end{align}
    \label{eq:dimless_metrics}
\end{subequations}

\cref{table:parameter_values,table:quantity} summarize the key nondimensional parameters and quantities, their symbols/definitions and range of values.

\begin{table}
\caption{\label{table:parameter_values}Key nondimensional parameters.}
\begin{ruledtabular}
\begin{tabular}{lll}
Parameter & Symbol & Values/Range \\
\midrule
Maximum pore length & $a_{\rm max}$ & $0.1,0.15,0.2$  \\
Minimum pore length & $a_{\rm min}$ & $0.06$  \\
Radius in $k^{\rm th}$ band & $r_k$ & $\left[2.5\times 10^{-3},0.016\right]$  \\
Uniform Radius (also average initial radius across the bands) & $r_0$ & $0.01$  \\
Radius (pore size) gradient & $s$ & $\left[0,4\times 10^{-3}\right]$  \\
Total number of bands & $m$ & $4$  \\
Deposition coefficient & $\lambda$ & $5\times 10^{-7}$  \\
Initial membrane porosity & $\Phi\left(0\right)$ & $0.6$ \\
Initial band porosity & $\Phi_k\left(0\right)$ & $0.6$ \\
Relative error for porosity correction (see \cref{al:algorithm1}) & $\epsilon$ & $0.005$
\end{tabular}
\end{ruledtabular}
\end{table}

\begin{table}
\caption{\label{table:quantity}Key nondimensional quantities.}
\begin{ruledtabular}
\begin{tabular}{lll}
Quantity & Symbol & Formula/Range of Values\\
\midrule
Pore Length & $a_{ij}$ & $\left[a_{\rm min}, a_{\rm max}\right]$ \\
Pore Radius & $r_{ij}\left(t\right)$ & $\left[2.5\times 10^{-3}, 0.016\right]$  \\
Pore length in the $k^{\rm th}$ band & $l_{k,ij}$ & see \cref{app:vertex_edge_defn} \\
Porosity of the $k^{\rm th}$ band & $\Phi_k\left(t\right)$ & $\frac{m\pi}{2}\sum_{e_{ij}\in \mathcal{E}}  r^2_{ij}\left(t\right)l_{k,ij}$ \\
Membrane Porosity & $\Phi\left(t\right)$ & $\frac{1}{m}\sum_{k=1}^m \Phi_k\left(t\right)$  \\
Total Throughput & $h\left(t\right)$ & $\frac{1}{\lambda}\int_{0}^{t}q_{\rm out}\left(t^{\prime}\right)dt^{\prime}$  \\
ACO (see \cref{eq:aco} for nomenclature)& $c_{\rm aco}\left(t\right)$ & $\frac{\int_{0}^{t}c_{{\rm out}}\left(t^{\prime}\right)q_{{\rm out}}\left(t^{\prime}\right)dt^{\prime}}{\lambda h\left(t\right)}$ 
\end{tabular}
\end{ruledtabular}
\end{table}

\section{Results and Discussions}\label{sec:grad_results}
In this section, we present results on the performance metrics of banded networks as the pore-radius gradient $s$ and maximum pore length $a_{\rm max}$ are varied. For each chosen value of $s$, we generate $1000$ networks independently using the banded network generation protocol described in \cref{sec:graded} and collect the mean and standard deviation of each performance metric defined in \cref{sec:perf_metrics}. The system of ODEs \cref{eq:adsorption} is solved using a simple forward Euler method with a time step size of $2.5\times 10^{-4}$. We investigate the trend of the mean performance metrics against $s$ and $a_{\rm max}$.

\subsection{Filter Performance until Flux Extinction}

\begin{figure}[!ht]
    \centering
    \subfloat[Total throughput]{\label{fig:tt}\includegraphics[width=.45\textwidth,height=.4\textwidth]{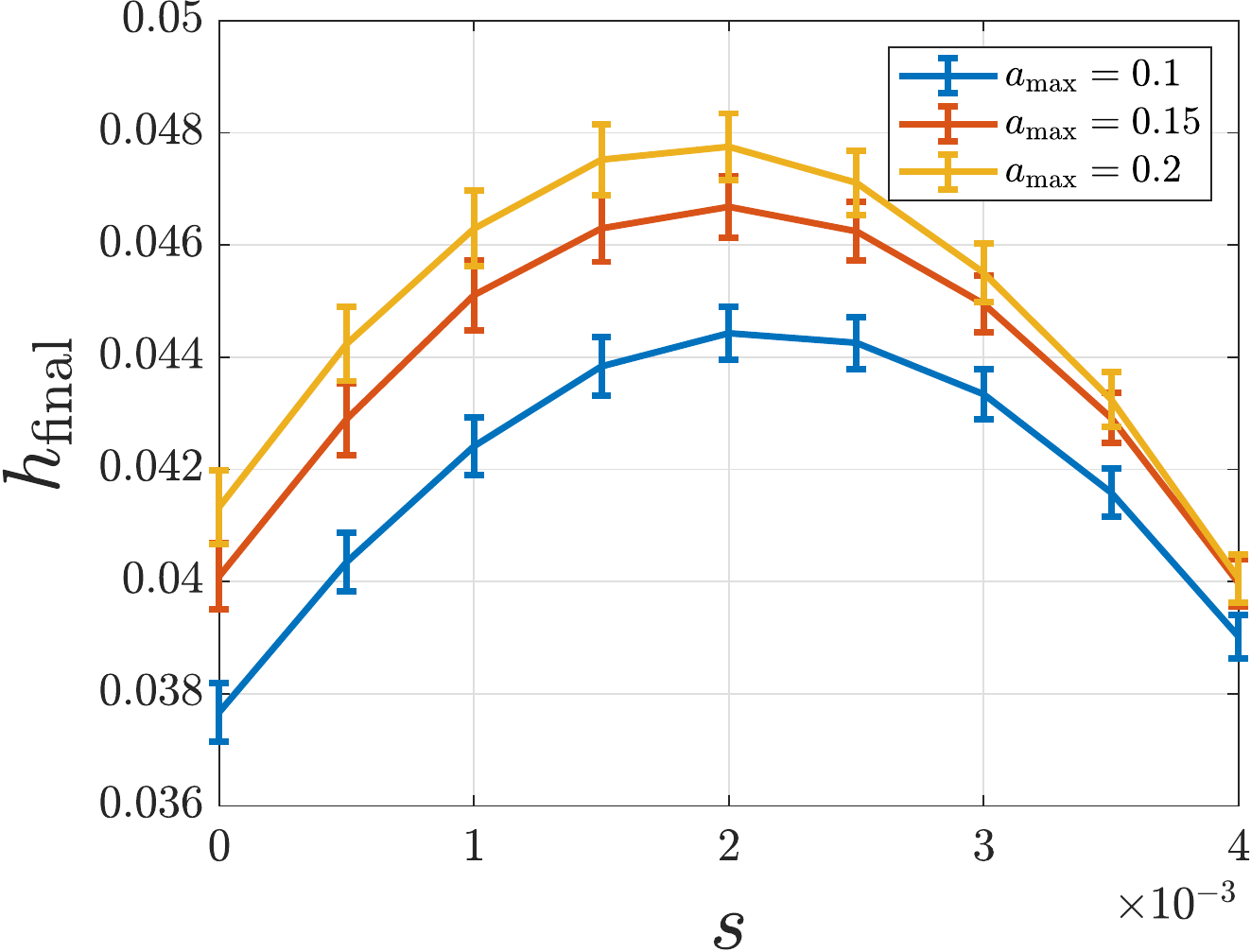}}\quad
    \subfloat[Initial flux]{\label{fig:q0}\includegraphics[width=.45\textwidth,height=.4\textwidth]{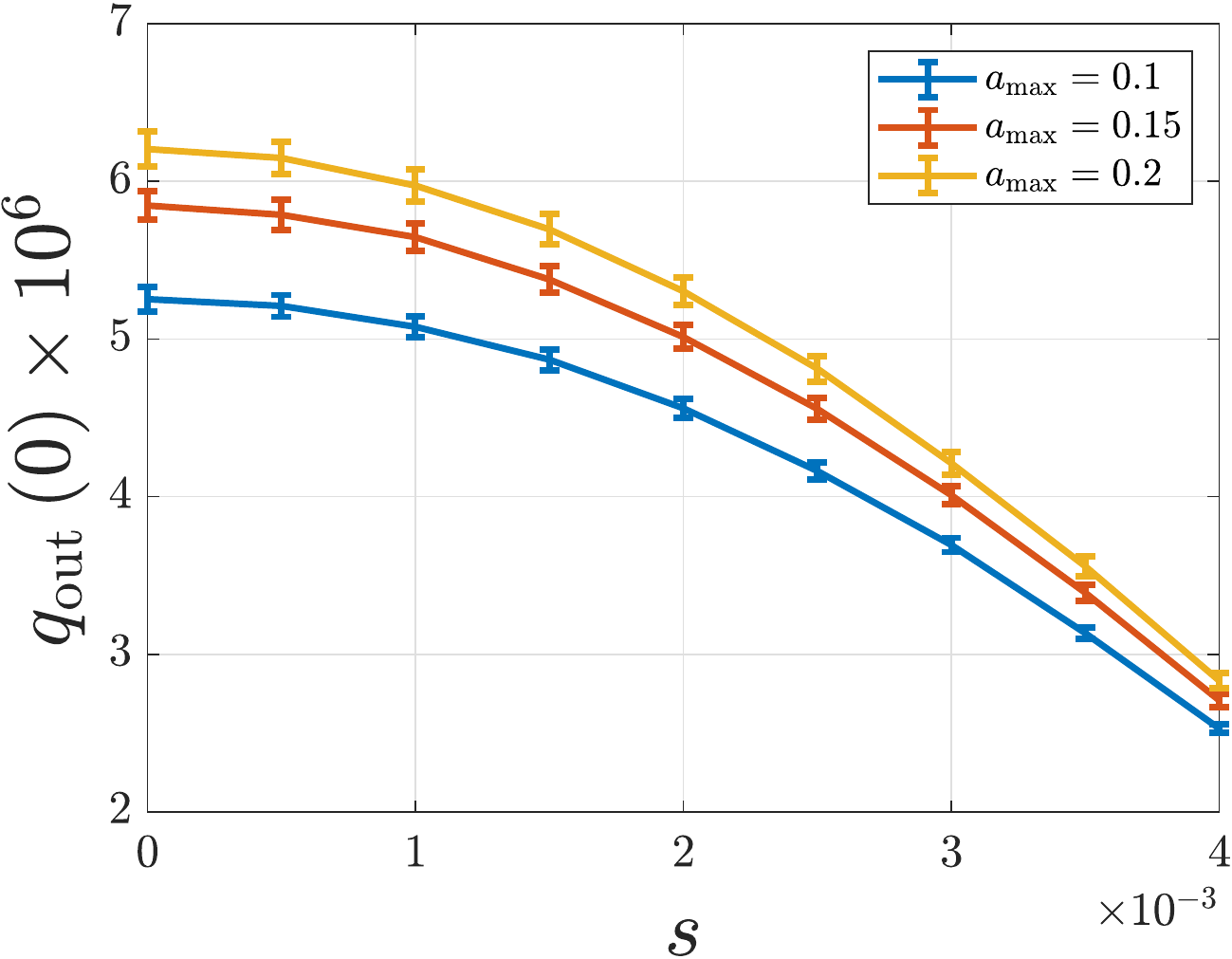}}\\
    \subfloat[Final accumulated foulant concentration]{\label{fig:c}\includegraphics[width=.45\textwidth,height=.4\textwidth]{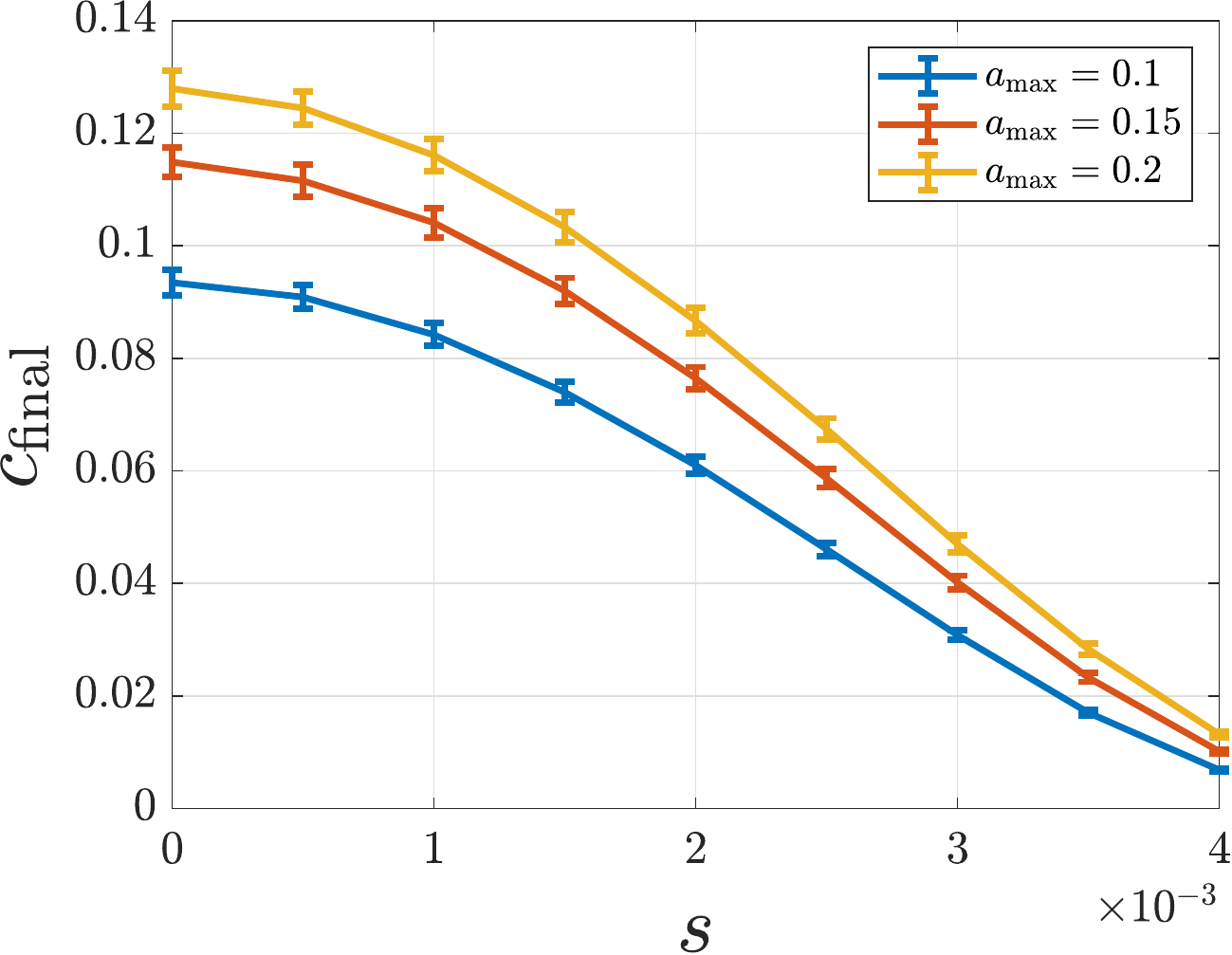}}\quad
    \caption{Performance metrics against radius gradient.}
    \label{fig:main}
\end{figure}

%%% description
\cref{fig:main} presents how performance metrics vary with $s$ in banded networks with $m=4$ bands, within each of which the initial pore radius is constant (specified by \cref{eq:dimless_band_radius}). Results are shown for three different values of $a_{\rm max}$. \cref{fig:tt} shows total filtrate throughput against $s$. We observe that for each $a_{\rm max}$ value considered, we have a non-monotone trend with a clear maximum in total throughput at $s=2\times 10^{-3}$ (independent of $a_{\rm max}$, though the total filtrate throughput achieved in all cases is monotone increasing in $a_{\rm max}$). \cref{fig:q0} plots results for initial flux through the membrane, showing it to be a monotone decreasing function in $s$, and monotone increasing in  $a_{\rm max}$. \cref{fig:c} plots final accumulated foulant concentration at the membrane outlet against pore-size gradient $s$. The trend is monotone decreasing in $s$ for all $a_{\rm max}$ values considered (that is, more strongly graded networks provide better foulant control); and monotone increasing in $a_{\rm max}$.  

%%%% discussion of trend in s
Before presenting further results, we first discuss the trends observed in \cref{fig:main}. We rationalize the existence of a throughput-maximizing value for $s$ by comparing networks with extreme values of $s$. Banded networks with large pore-size gradient tend to have low flux due to the high-resistance small outlets in the bottom band (evident in \cref{fig:q0} for all $a_{\rm max}$ values considered). At the same time, uniform networks ($s=0$) have the shortest lifetime due to their small pore inlets upstream (we hypothesize that pores always close first at the upstream side of the membrane, a claim supported by an analytical result in \cref{app:analytical}). Therefore, pore-size graded networks admit a trade-off between initial flux and filter lifetime as the gradient value varies, suggesting that an intermediate value should exist (and is found at $s=2\times 10^{-3}$ for the present choice of parameters) that maximizes total throughput.  Furthermore, the monotone trend of concentration against $s$ observed in \cref{fig:c} is not unexpected; the smaller pores in the bottom band of strongly graded networks are much more effective at removing foulant particles. 

%%%% discussion of trend in a_max
We also briefly comment on the trend of the performance metrics as $a_{\rm max}$ varies in \cref{fig:main}. First, each metric (total throughput, initial flux and concentration) is monotonically increasing with $a_{\rm max}$. With the prescribed membrane porosity level $\Phi = 0.6$, these observed trends are consistent with the findings of Gu {\it et al.}~\cite{gu_network_2021}, in work that focused exclusively on networks with uniform pore radius (gradient $s=0$). Second, the reason for the higher foulant concentration from networks with larger $a_{\rm max}$ is that networks with longer pores tend to be less tortuous than those with shorter pores (smaller $a_{\rm max}$).\footnote[1]{The membrane {\bf tortuosity} is defined as the average (normalized) distance travelled by a fluid element from membrane top surface to bottom. See~\cite{gu_network_2021} for its detailed definition and discussion of the negative exponential relationship between concentration and tortuosity.}

\begin{figure}[!ht]
    \centering
    \subfloat[$a_{\rm max}=0.1$]{\label{fig:pore_evo_01}\includegraphics[width=.45\textwidth,height=.4\textwidth]{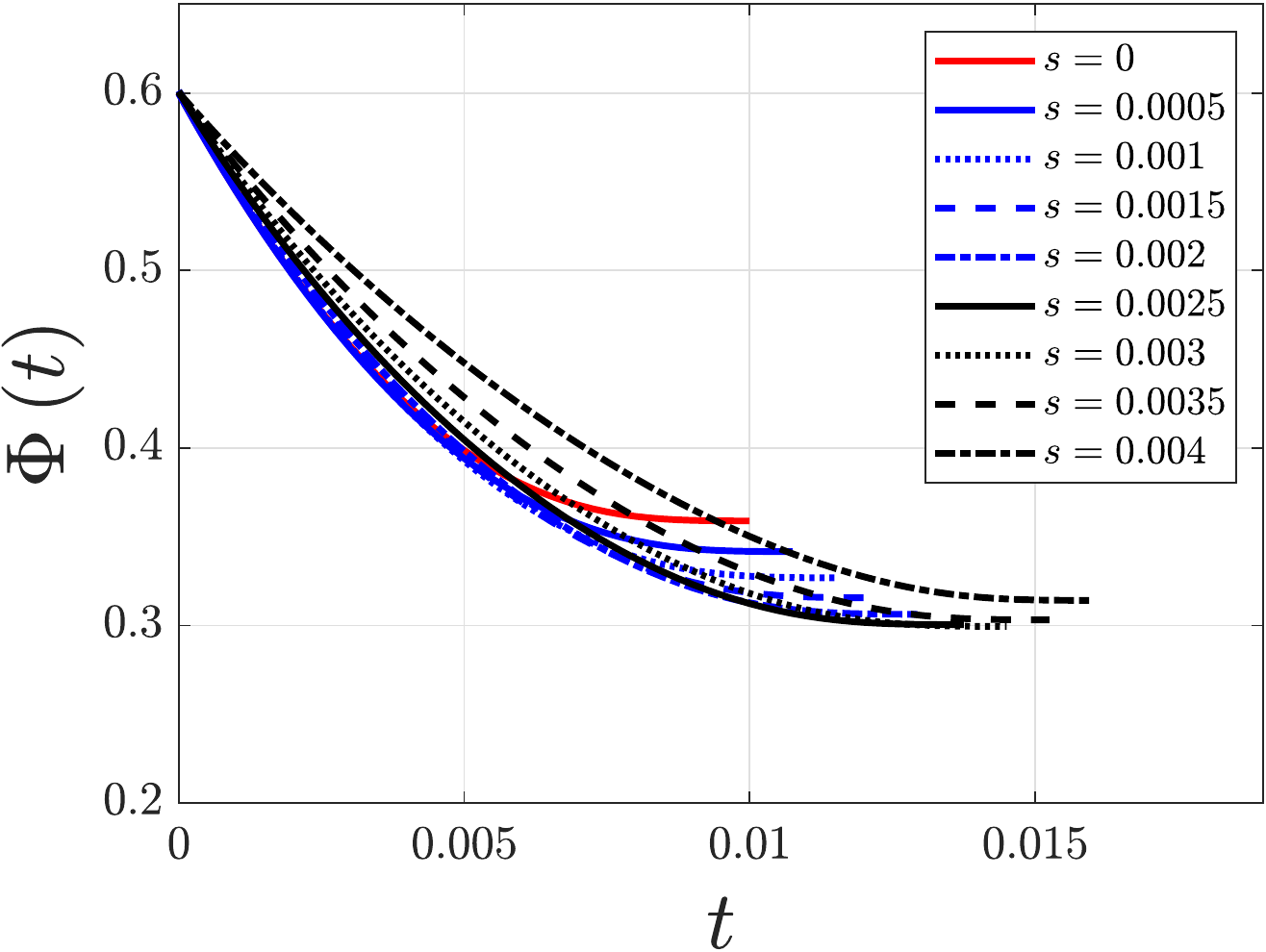}}\quad
    \subfloat[$a_{\rm max}=0.15$]{\label{fig:pore_evo_015}\includegraphics[width=.45\textwidth,height=.4\textwidth]{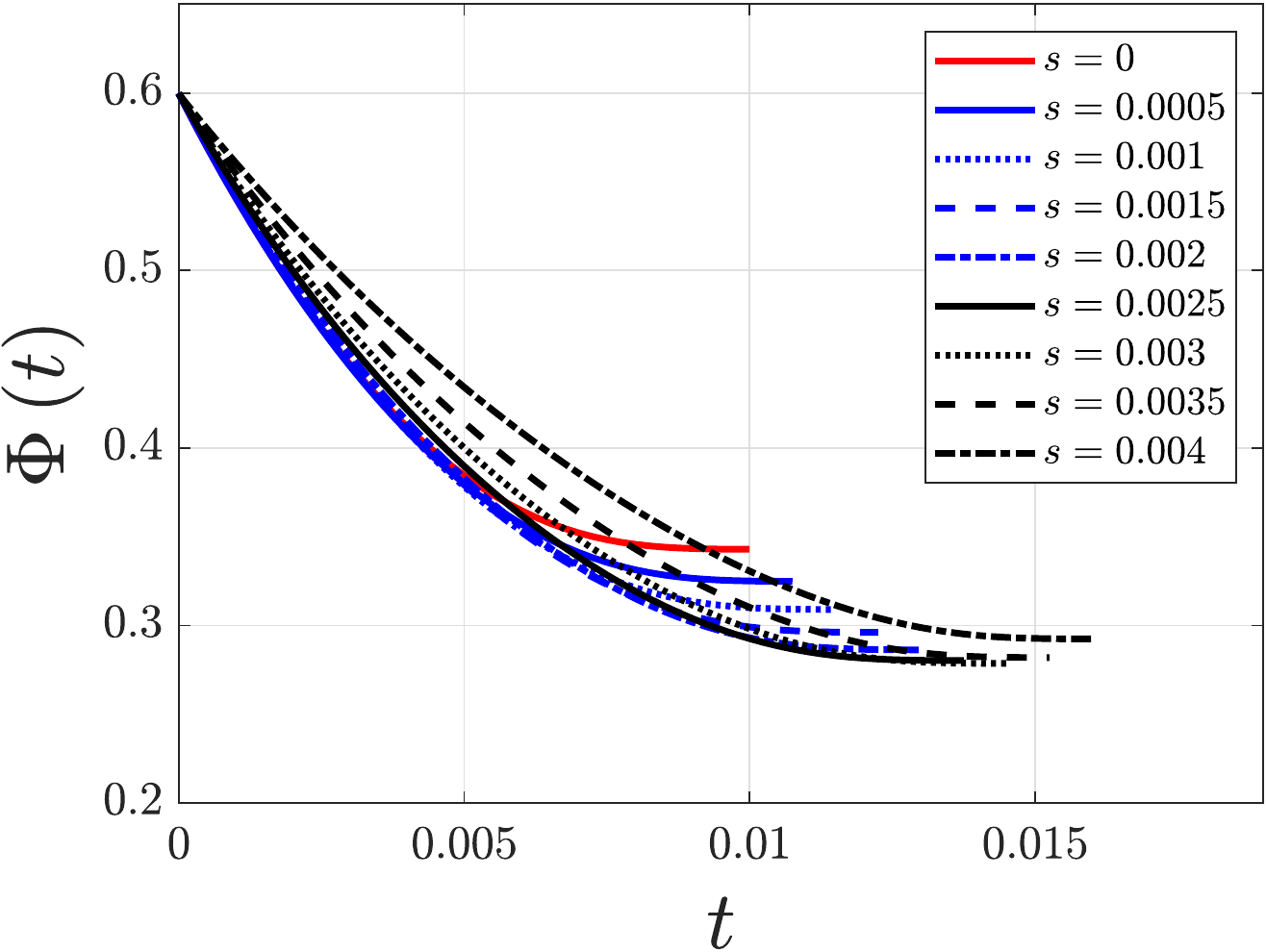}}\\
    \subfloat[$a_{\rm max}=0.2$]{\label{fig:pore_evo_02}\includegraphics[width=.45\textwidth,height=.4\textwidth]{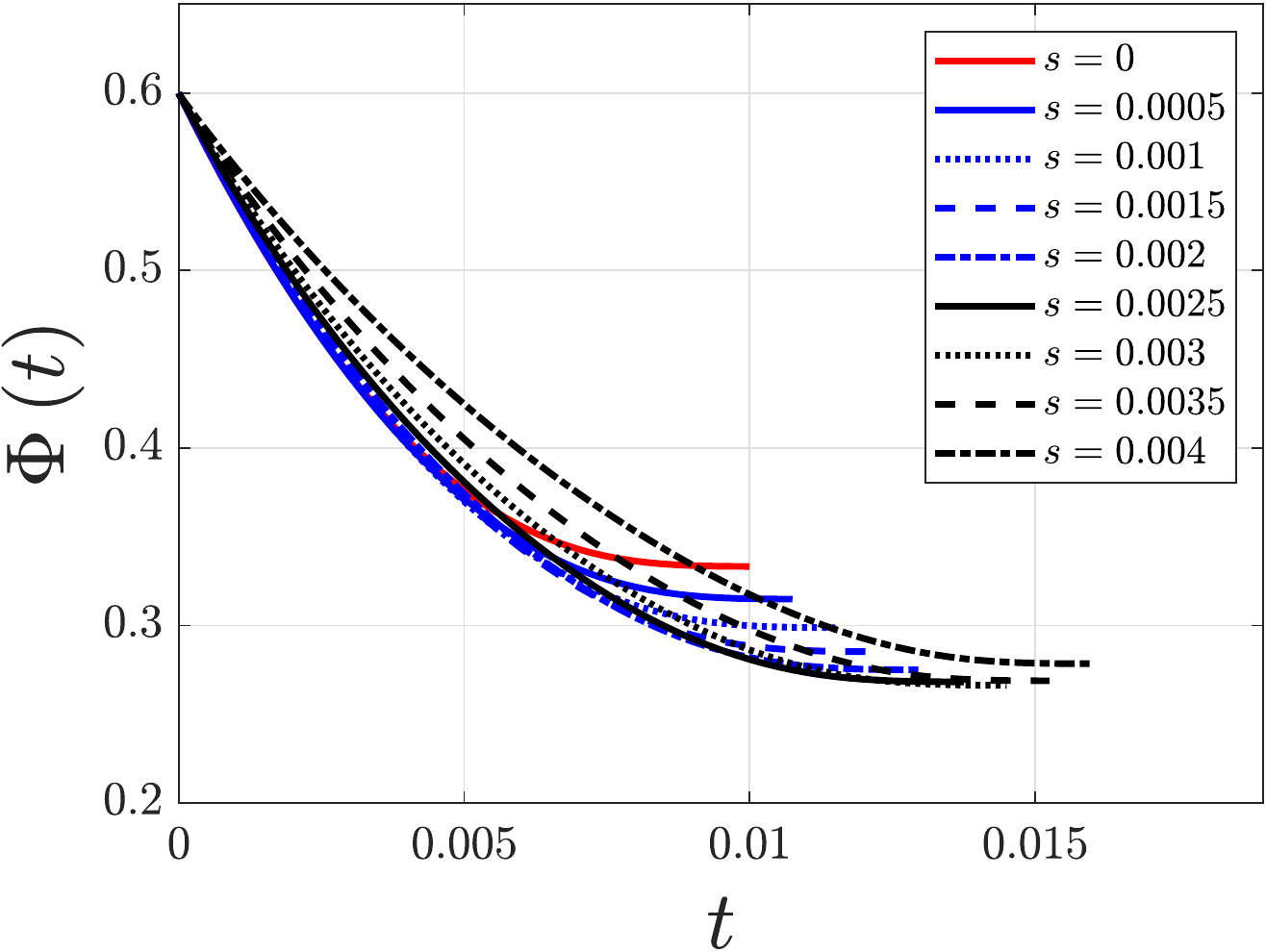}}\quad 
    \subfloat[Porosity usage]{\label{fig:pore_final}\includegraphics[width=.45\textwidth,height=.4\textwidth]{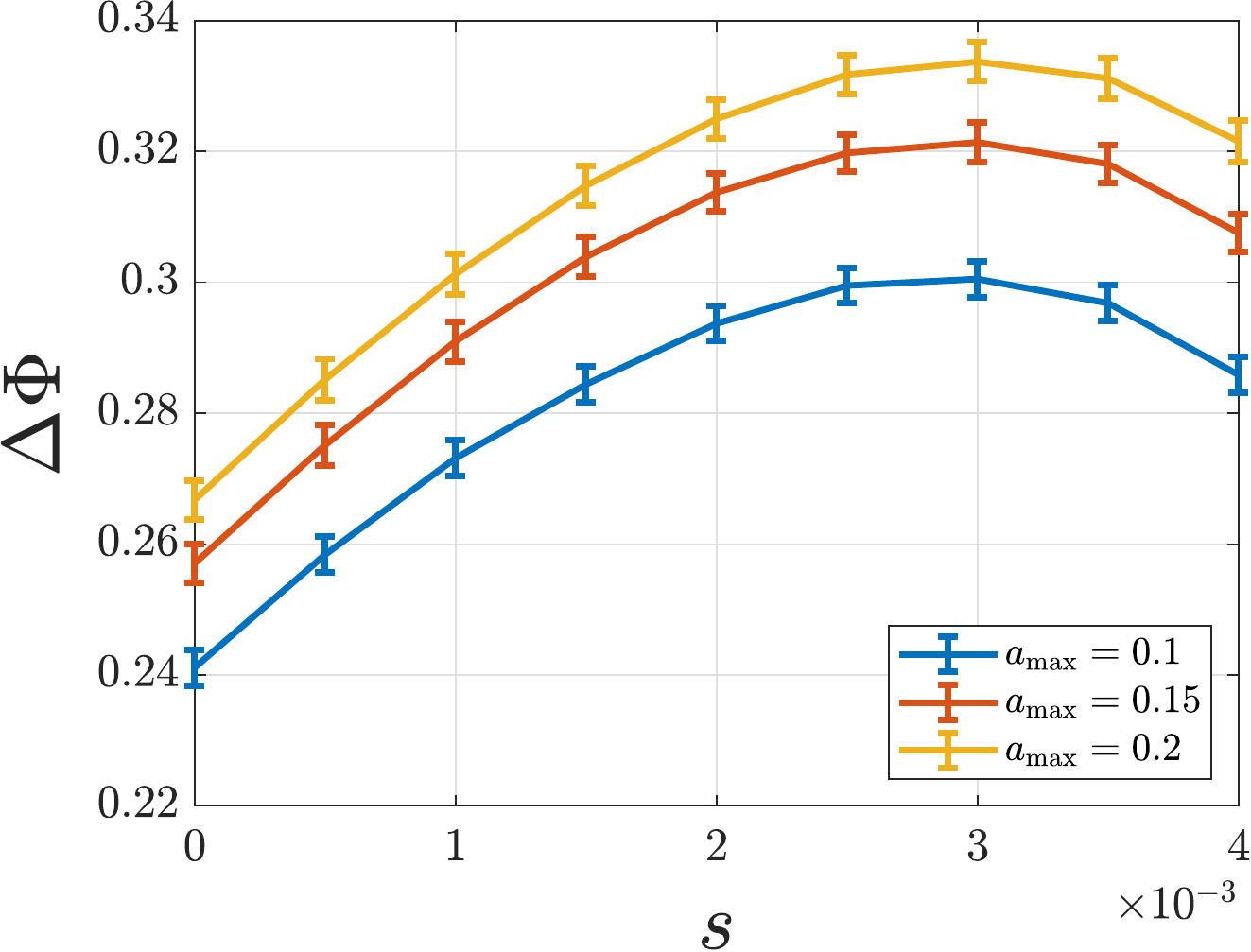}}
    \caption{(a-c) Total porosity evolution for different values of $a_{\rm max}$. In (d), membrane porosity usage, \cref{eq:delta_membrane_porosity}, is plotted against radius gradient $s$ for each $a_{\rm max}$. $\Phi\left(0\right) = 0.6$ is the initial porosity.}
    \label{fig:pore_evo}
\end{figure}

\subsubsection{Total Porosity Evolution}\label{sec:porosity_evo}

Porosity inevitably decreases over time as fouling occurs. In this subsection, we present how overall membrane porosity evolves in time as $a_{\rm max}$ and $s$ are varied. Our discussion focuses on the changes in membrane porosity and the final values achieved when filtration ceases, in particular the difference $\Delta\Phi$ between the initial $\Phi\left(0\right)=0.6$ and the final porosity, which we refer to as {\bf porosity usage} (see \cref{eq:delta_membrane_porosity}).

In each of \cref{fig:pore_evo_01,fig:pore_evo_015,fig:pore_evo_02} we show, for each $a_{\rm max}$, the evolution of membrane porosity in time, for all $s$-values considered. Filter lifetime may be inferred from the various curves by noting the time at which they stop (due to flux reaching zero). To showcase the porosity usages $\Delta\Phi$ of networks with different values of $s$, we condense them into \cref{fig:pore_final}, which clearly shows that $\Delta\Phi$ is a non-monotone function of $s$. In particular, for each $a_{\rm max}$, we find an optimal value of $s$ that incurs the largest porosity change. The figure also shows that networks with longer pores (the largest $a_{\rm max}$ value, \cref{fig:pore_evo_02}) incur the largest porosity changes over the filter lifetime. 

\subsubsection{Band Porosity Evolution}

While overall membrane porosity evolution shines light on the behavior of pore-radius-graded networks, individual band porosity evolution helps us identify the depth of foulant penetration in the membrane, for each pore-radius gradient value $s$ considered. In this subsection, we explore how band porosities change as $s$ varies and aim to draw further insight from this evolution into indicators of good pore-size-graded filters. The following discussion again focuses on the quantitative changes of band porosities, and the final porosity values when filtration stops.

\begin{figure}[!ht]
    \centering
    \subfloat[Band 1 porosity]{\label{fig:band1_evo_01}\includegraphics[width=.45\textwidth,height=.4\textwidth]{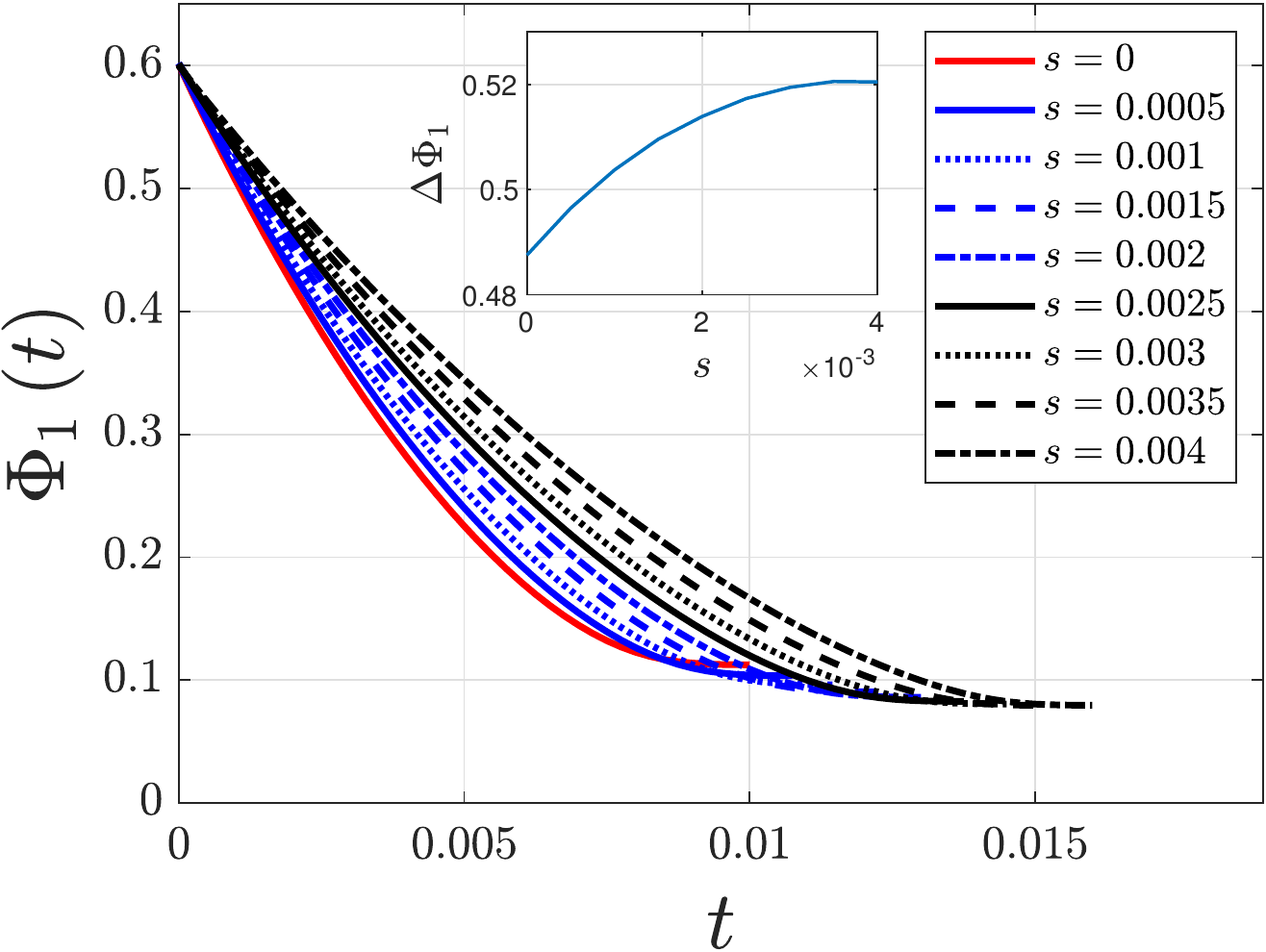}}\quad
    \subfloat[Band 2 porosity]{\label{fig:band2_evo_01}\includegraphics[width=.45\textwidth,height=.4\textwidth]{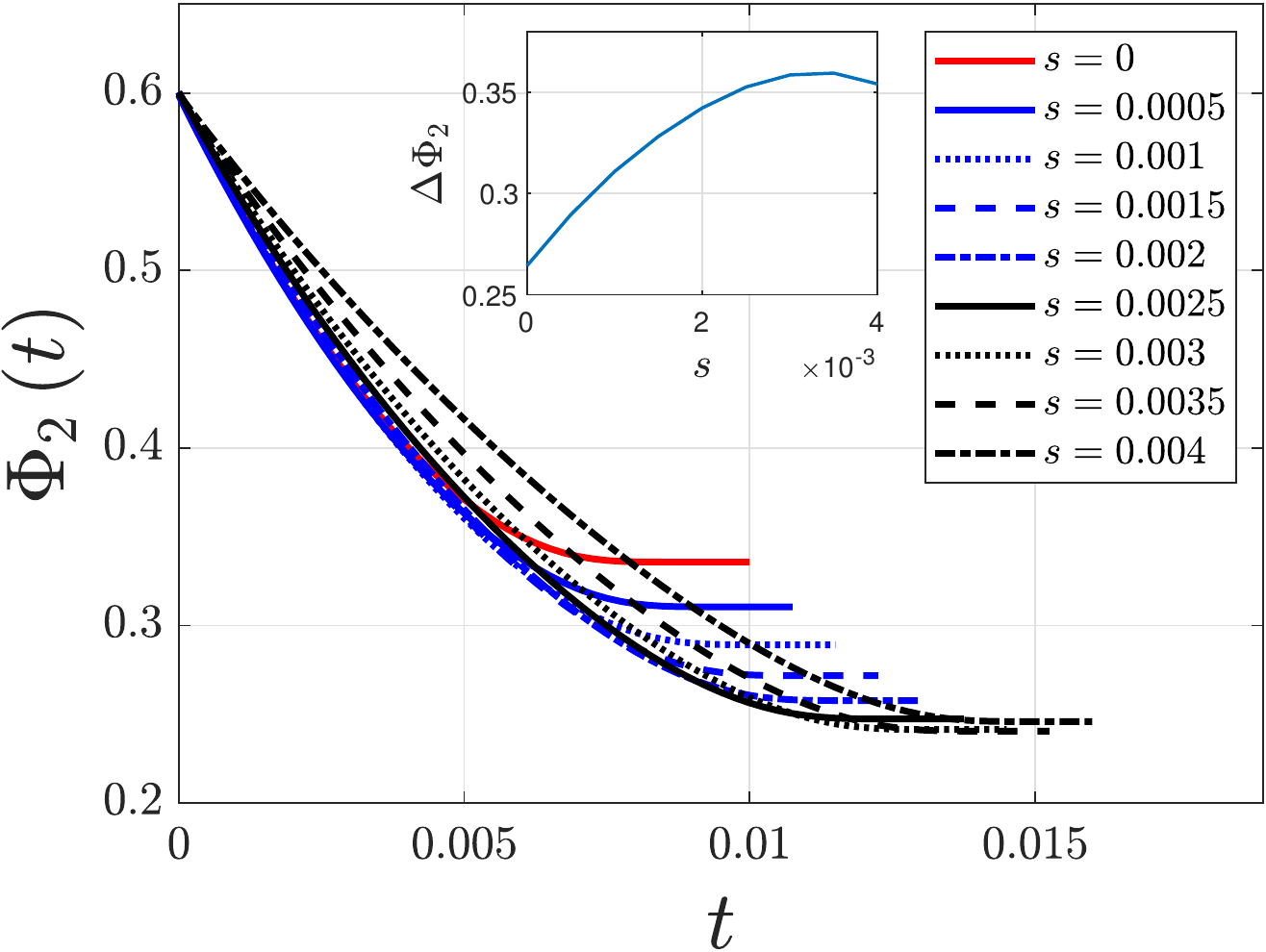}}\\
    \subfloat[Band 3 porosity]{\label{fig:band3_evo_01}\includegraphics[width=.45\textwidth,height=.4\textwidth]{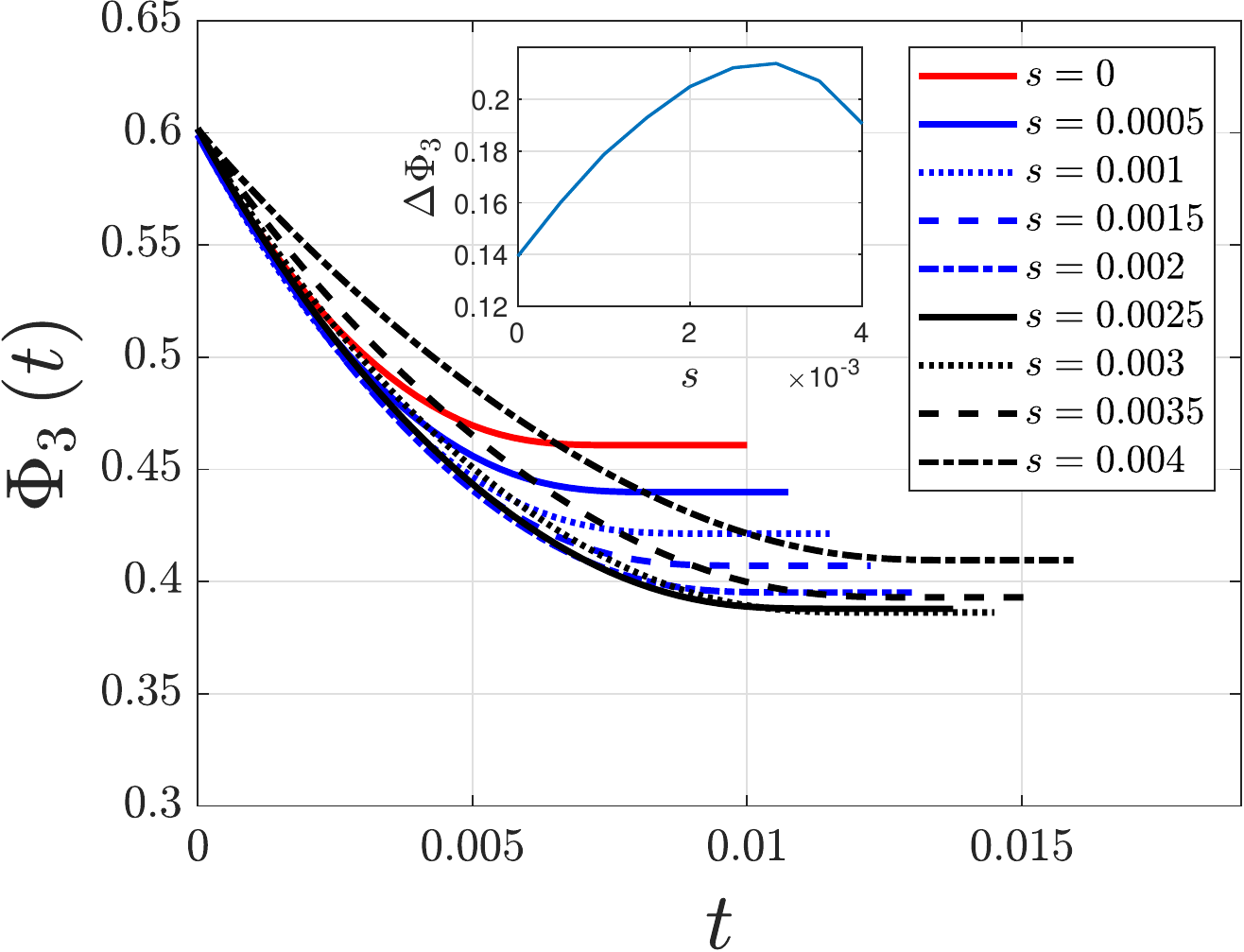}}\quad
    \subfloat[Band 4 porosity]{\label{fig:band4_evo_01}\includegraphics[width=.45\textwidth,height=.4\textwidth]{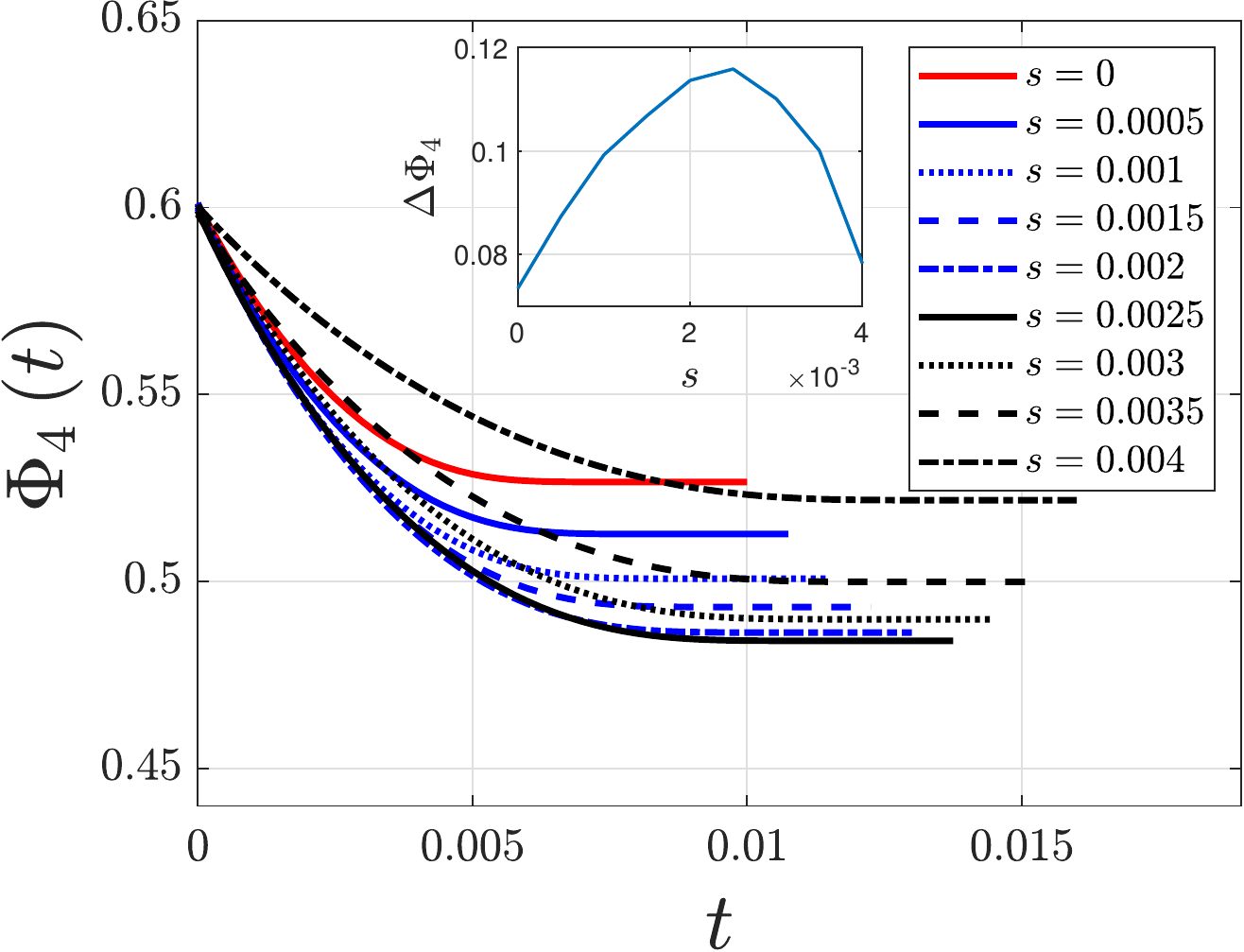}}
    \caption{Band porosity evolution with $a_{\rm max} = 0.1$. Inset subfigures plot band porosity usage (the total change in band porosity over the filter lifetime) as a function of radius gradient $s$. }
    \label{fig:band_evo_01}
\end{figure}

\begin{figure}[!ht]
    \centering
    \subfloat[Band 1 porosity]{\label{fig:band1_evo_015}\includegraphics[width=.45\textwidth,height=.4\textwidth]{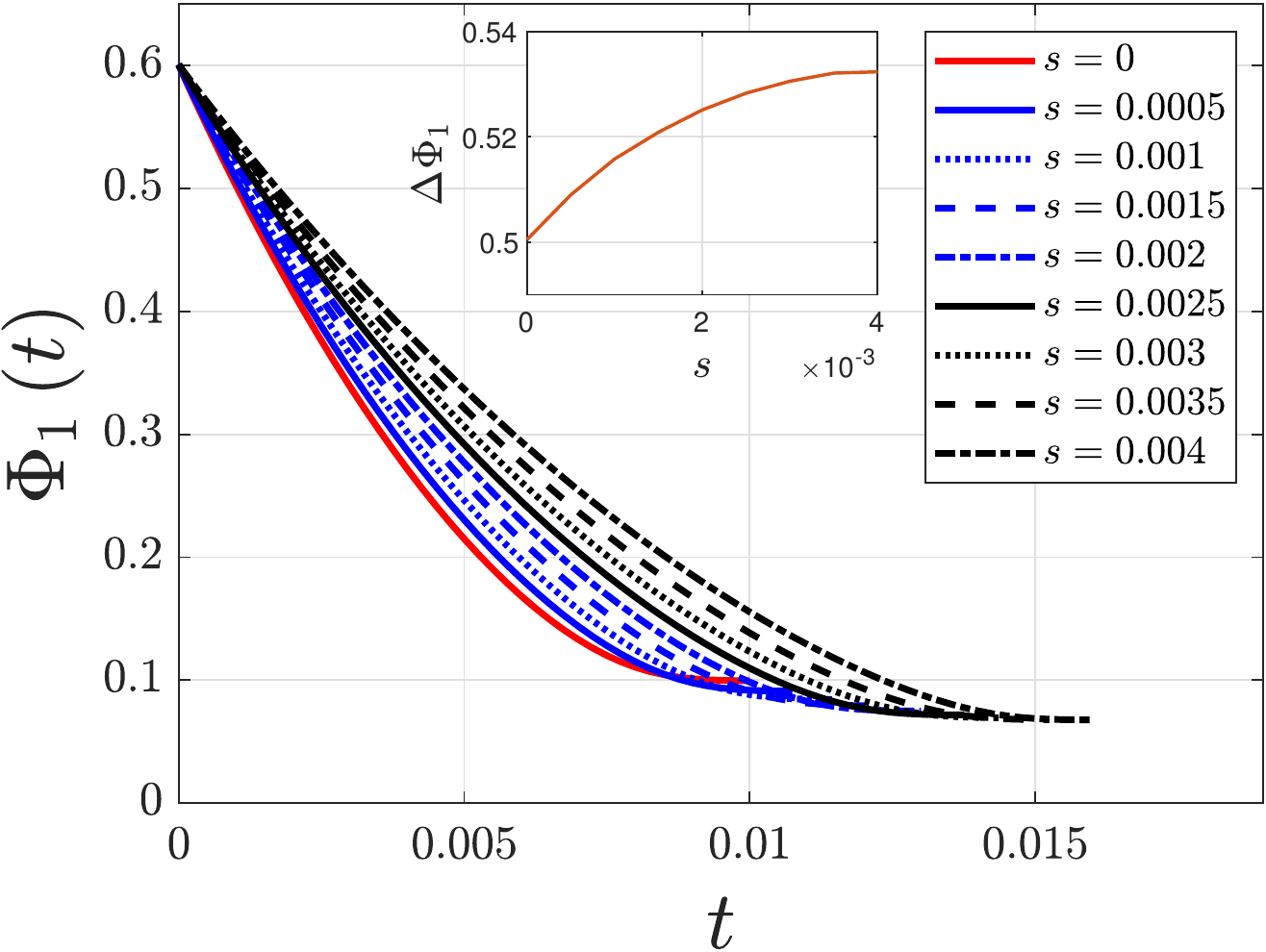}}\quad
    \subfloat[Band 2 porosity]{\label{fig:band2_evo_015}\includegraphics[width=.45\textwidth,height=.4\textwidth]{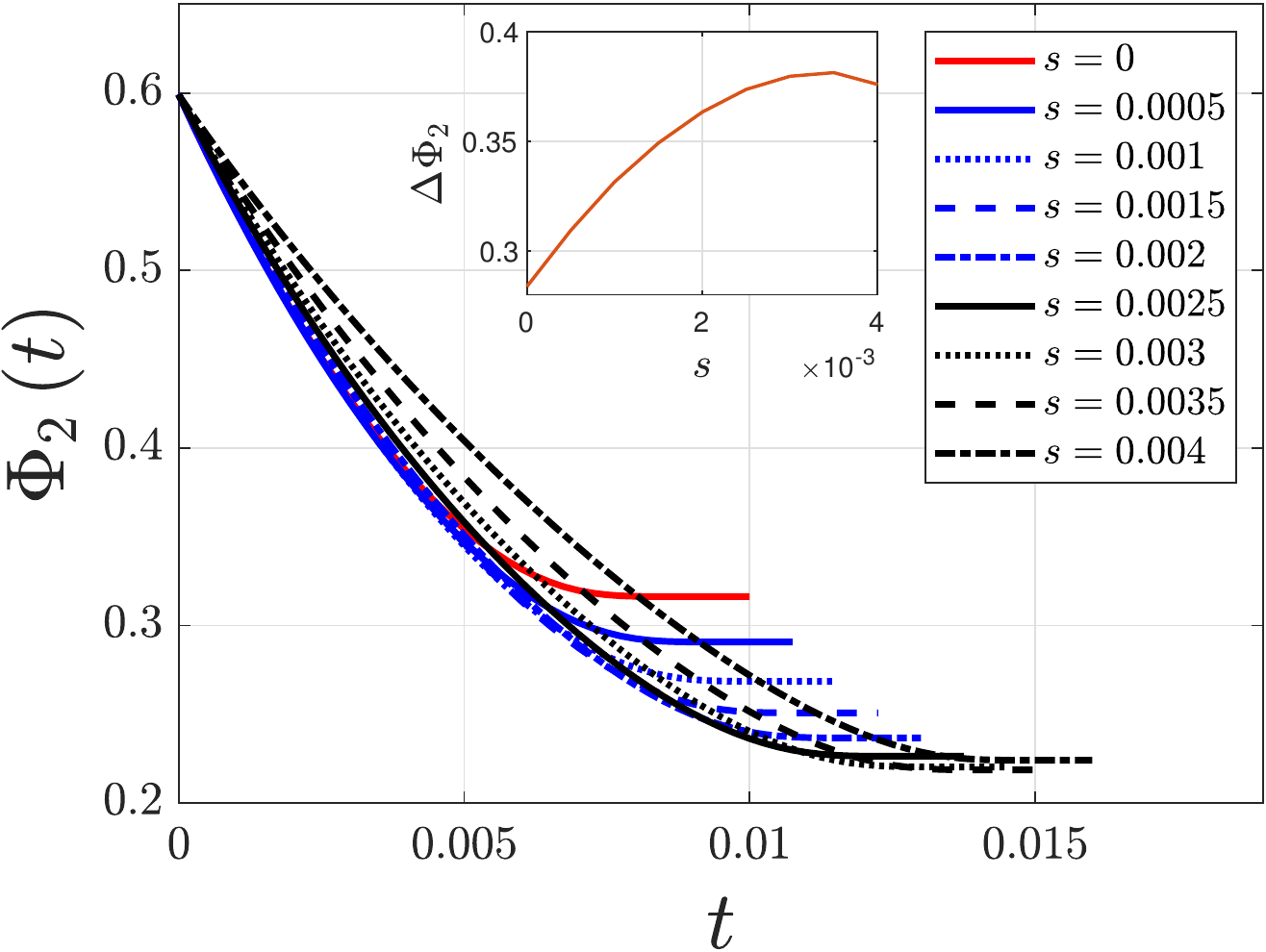}}\\
    \subfloat[Band 3 porosity]{\label{fig:band3_evo_015}\includegraphics[width=.45\textwidth,height=.4\textwidth]{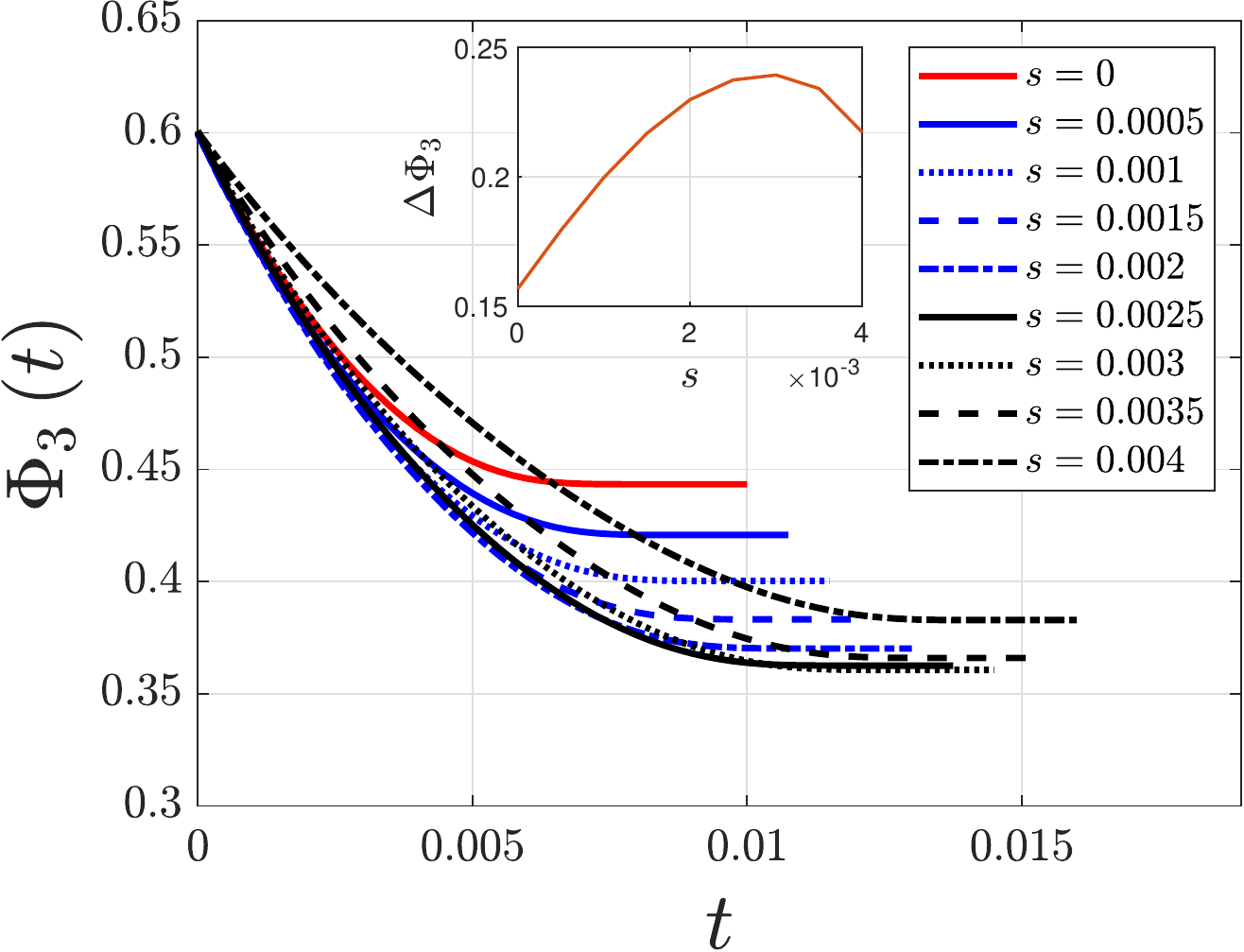}}\quad
    \subfloat[Band 4 porosity]{\label{fig:band4_evo_015}\includegraphics[width=.45\textwidth,height=.4\textwidth]{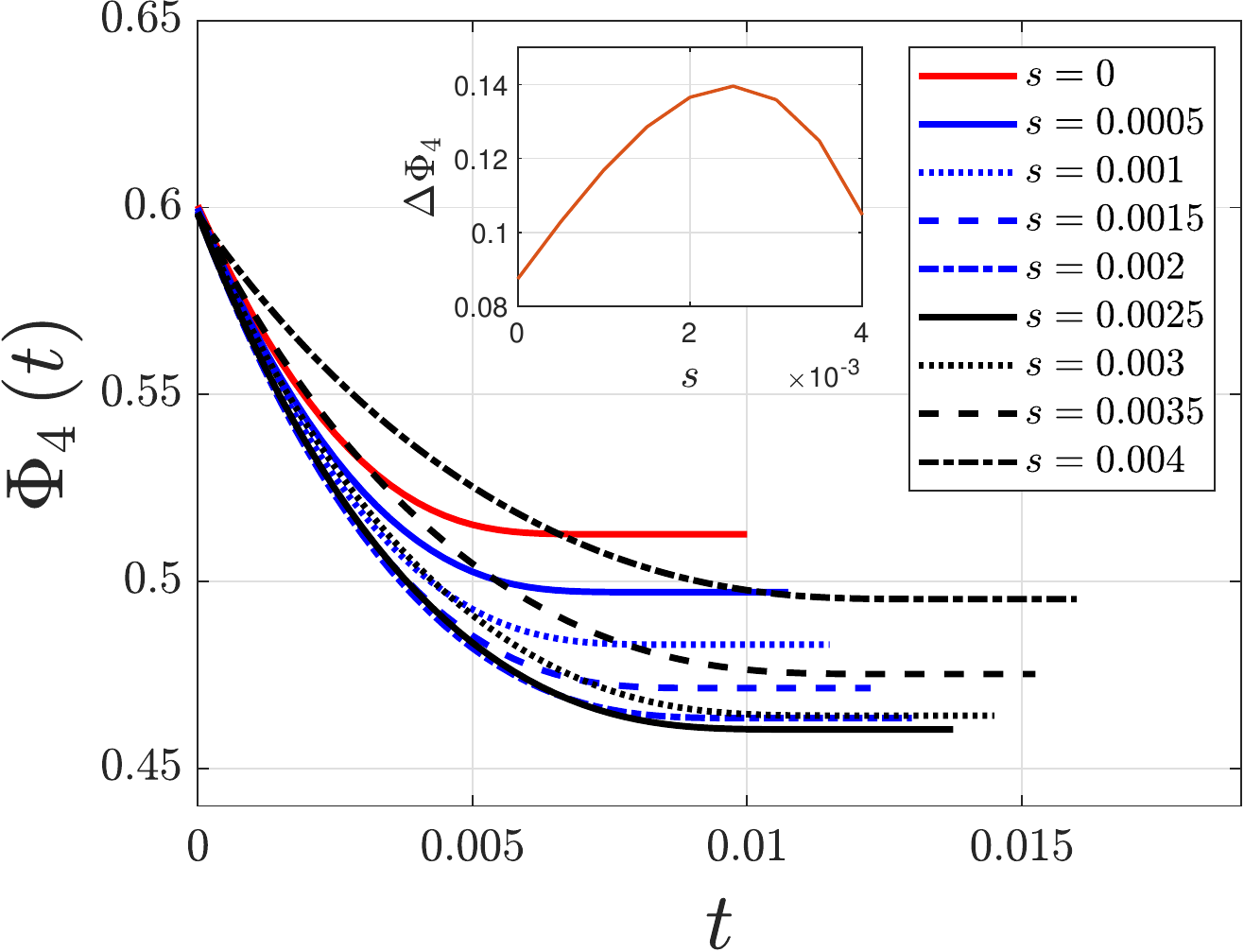}}
    \caption{Same setup as \cref{fig:band_evo_01}, with $a_{\rm max} = 0.15$.}
    \label{fig:band_evo_015}
\end{figure}

\begin{figure}[!ht]
    \centering
    \subfloat[Band 1 porosity]{\label{fig:band1_evo_02}\includegraphics[width=.45\textwidth,height=.4\textwidth]{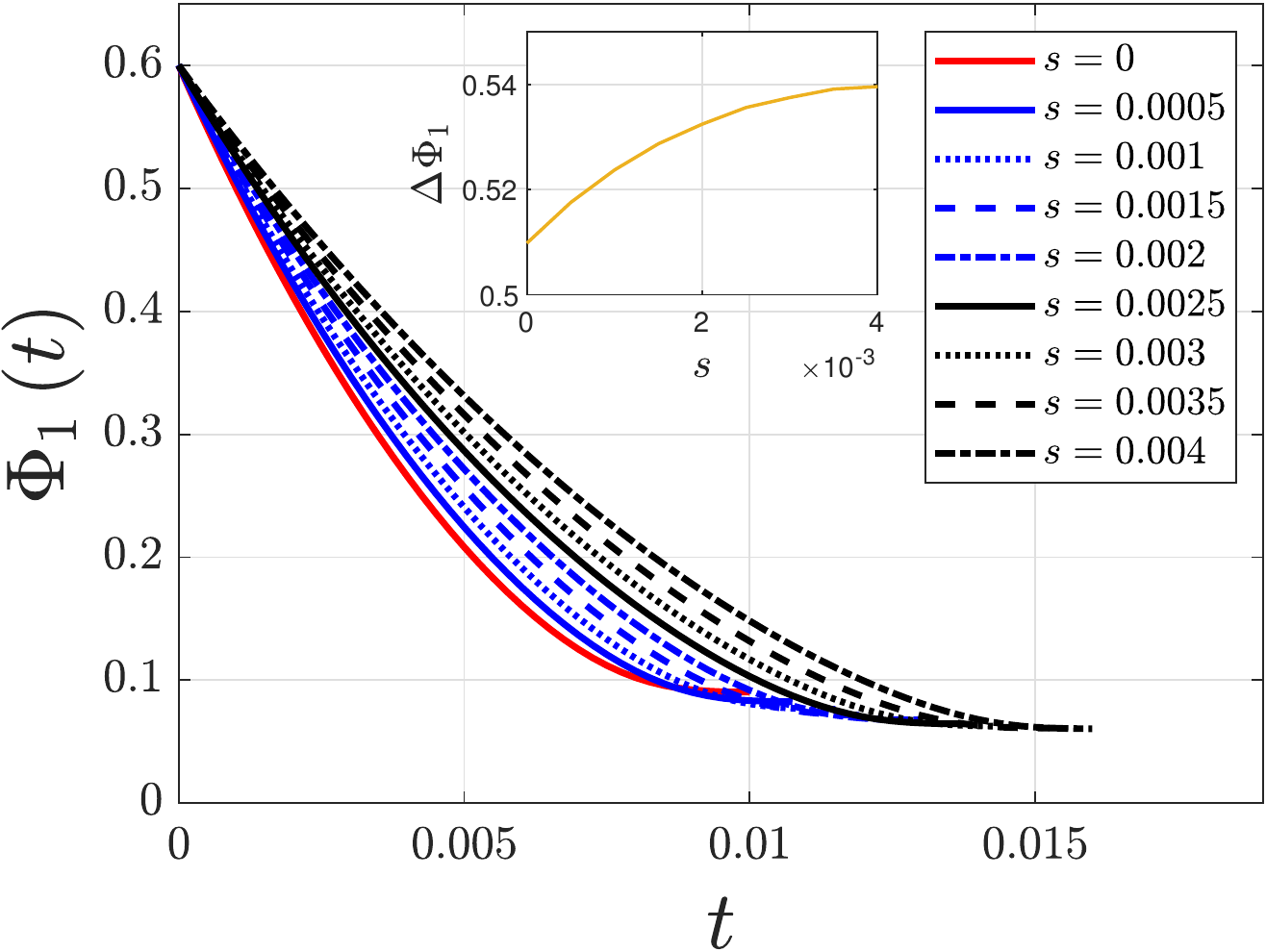}}\quad
    \subfloat[Band 2 porosity]{\label{fig:band2_evo_02}\includegraphics[width=.45\textwidth,height=.4\textwidth]{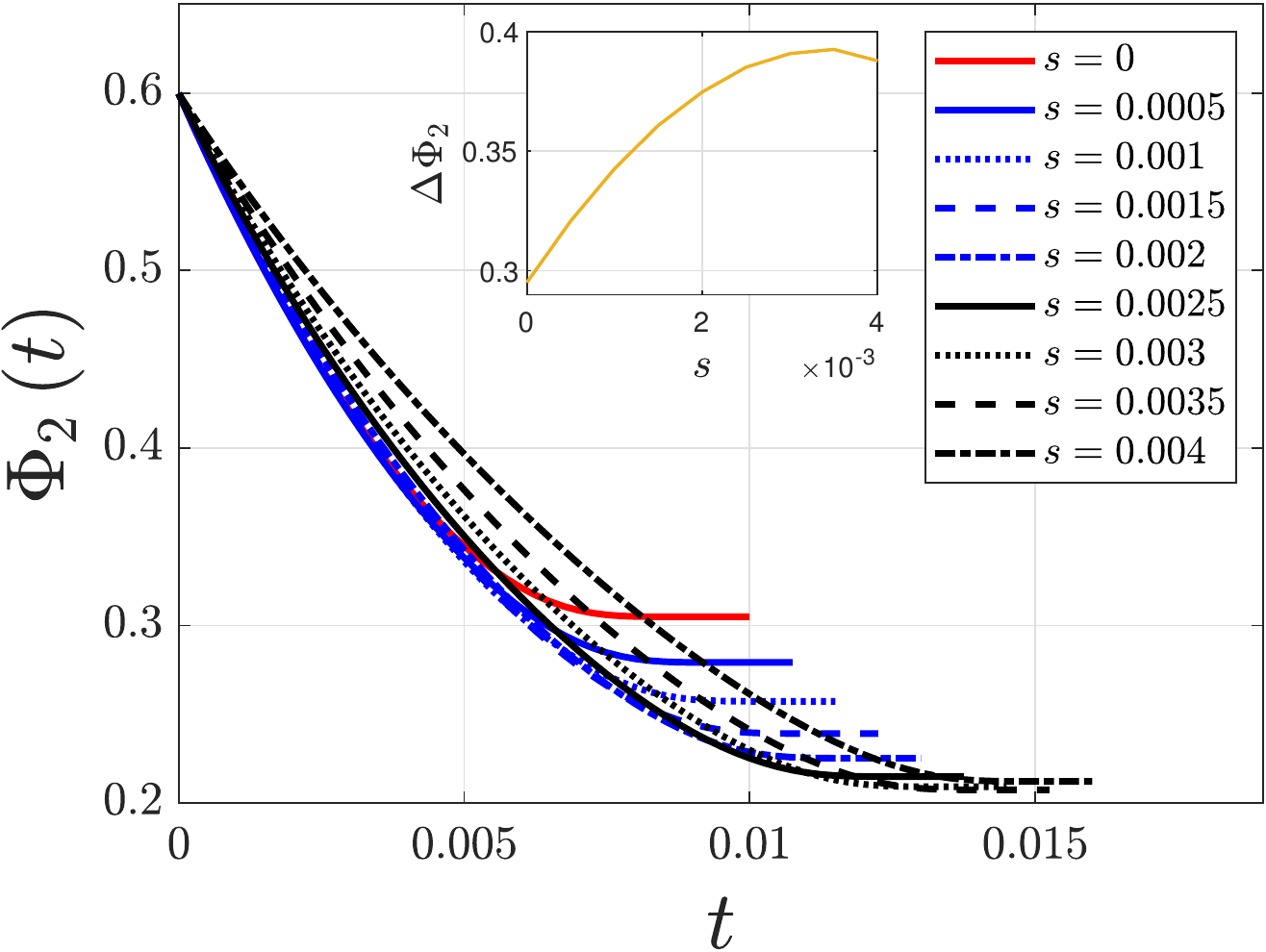}}\\
    \subfloat[Band 3 porosity]{\label{fig:band3_evo_02}\includegraphics[width=.45\textwidth,height=.4\textwidth]{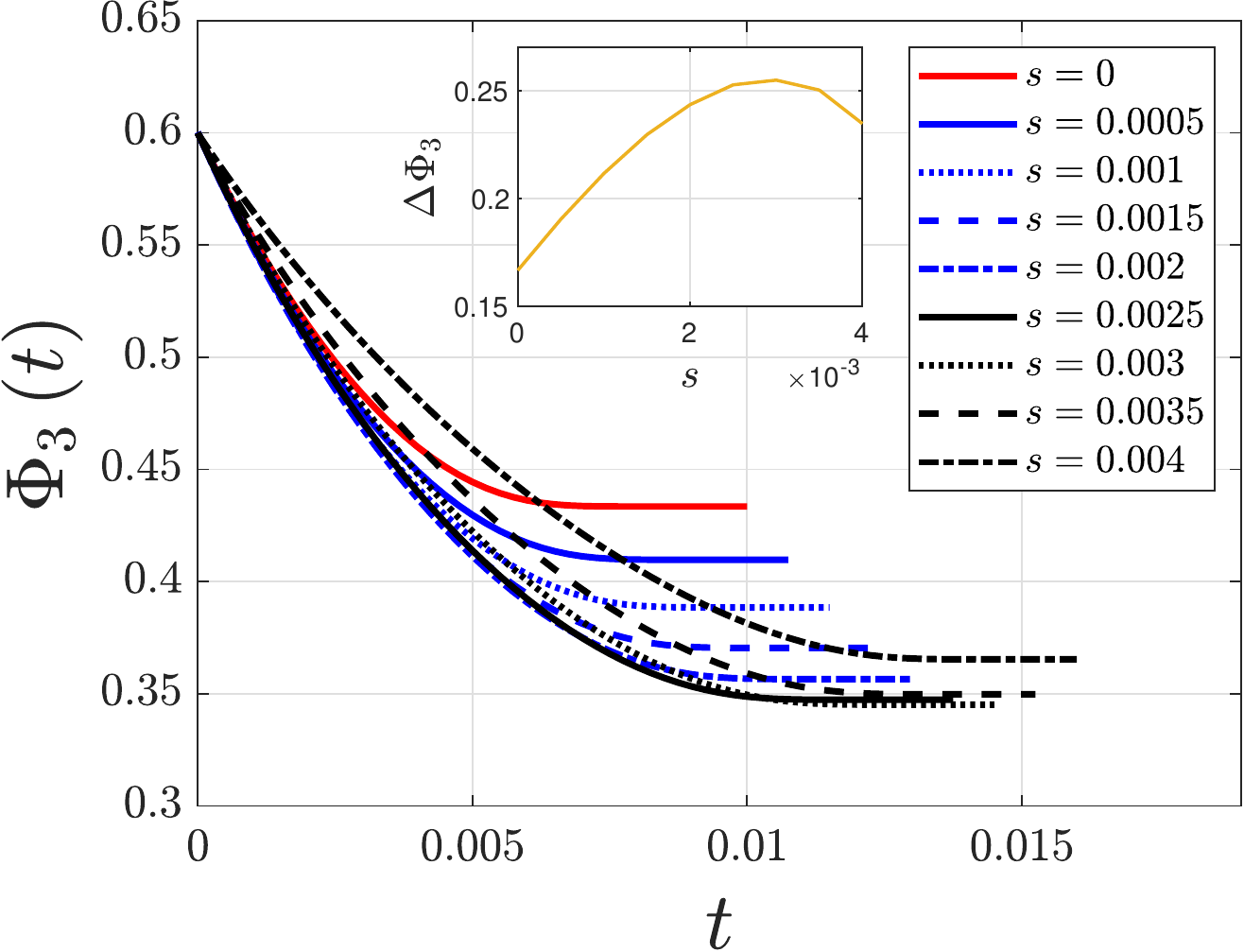}}\quad
    \subfloat[Band 4 porosity]{\label{fig:band4_evo_02}\includegraphics[width=.45\textwidth,height=.4\textwidth]{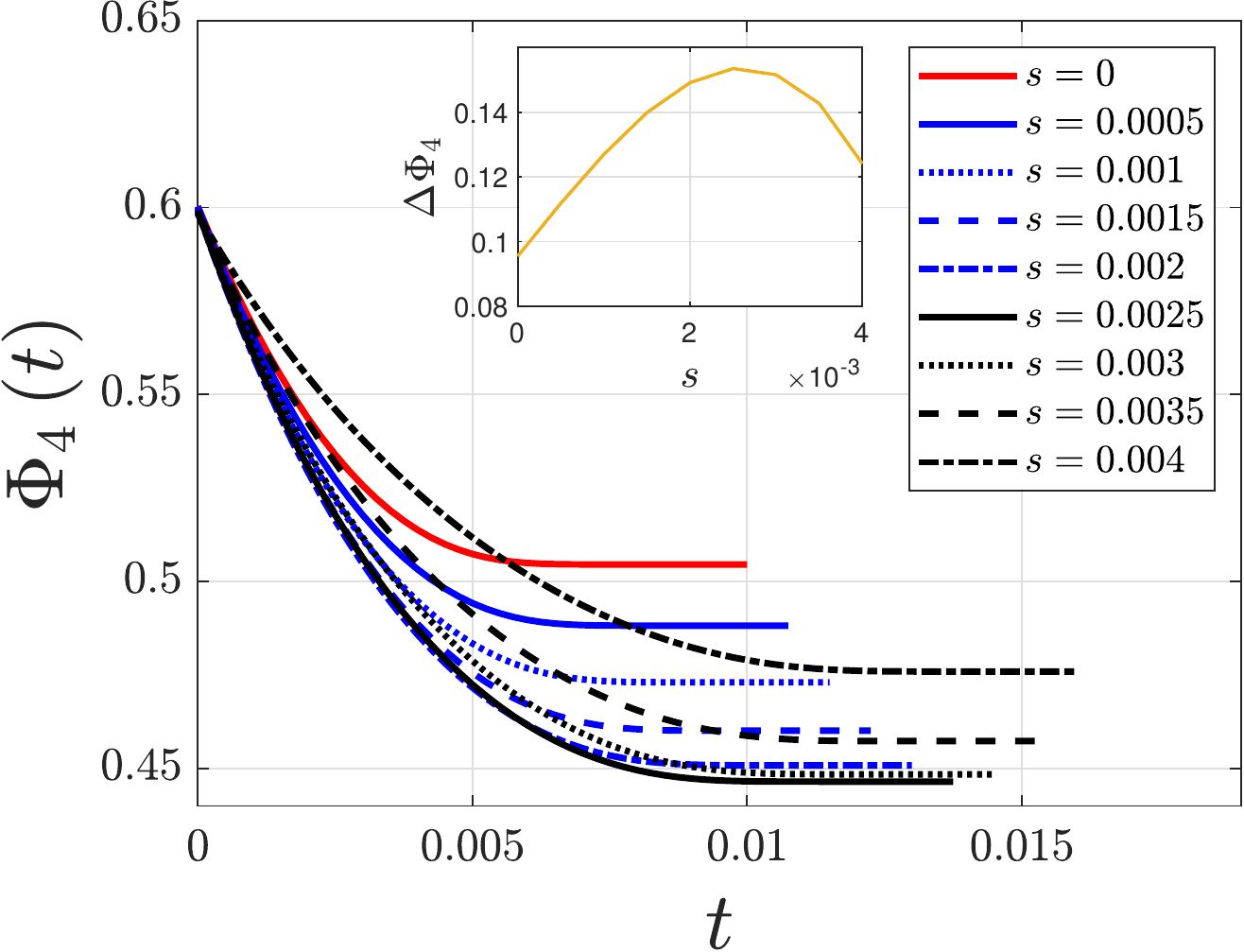}}
    \caption{Same setup as \cref{fig:band_evo_01}, with $a_{\rm max} = 0.2$.}
    \label{fig:band_evo_02}
\end{figure}

\cref{fig:band_evo_01} shows the evolution of band porosities $\Phi_k$ for the smallest $a_{\rm max}$-value considered (shortest pores, $a_{\rm max} = 0.1$), and for all radius gradient values $s$; the inset subfigure plots band porosity usage (the total change in band porosity, $\Delta\Phi_k$, see \cref{eq:delta_band_porosity}) as a function of $s$. In \cref{fig:band1_evo_01}, the evolution of the first band porosity (the upstream band) shows that the larger the value of $s$, the larger the porosity decline over the filter lifetime (see inset). This is because the largest $s$-value yields the largest first band pore radius and thereby the longest filter lifetime, allowing more particles to adsorb and thus using more empty space in the interior. We note in passing that, though the lifetimes of the networks (evidenced by the times at which each porosity curve stops) are quite different for different $s$-values, the final values of first band porosity are quite similar (exemplified by the small range of vertical axis in the inset), which implies that the first band processes foulants similarly regardless of the pore-size gradient $s$. \cref{fig:band2_evo_01} plots the evolution of the second band porosity against time for each $s$. Here, the final porosity values are clearly separated according to their $s$-values, in contrast to \cref{fig:band1_evo_01}. In particular, the uniform networks (in {\color{red}red}) clearly undergo the smallest band 2 porosity change over the filter lifetime, and thus retain the smallest mass of particles within this $2^{\rm nd}$ layer. The porosity usage of this band increases with $s$ until some value in the range $s\in [3\times 10^{-3}, 3.5\times 10^{-3}]$, for which the largest total change in band porosity is observed (see inset of \cref{fig:band2_evo_01}). In \cref{fig:band3_evo_01} where we show third band porosity evolution, the largest porosity change occurs in networks with $s=3\times 10^{-3}$, implying that foulant particles penetrate deeper into membrane pore networks with this gradient value. Lastly, we see in \cref{fig:band4_evo_01} that networks with radius gradient $s=2.5\times 10^{-3}$ (solid black) experience the largest change in fourth band porosity (corresponding to the maximum in the inset), indicating that such networks allow the deepest penetration of foulants, at least for the present choice of parameters. 

Meanwhile, membrane networks with $s=0$ (uniform pore size) and $s=4\times 10^{-3}$ (the steepest pore-size gradient) each perform relatively poorly in terms of porosity usage in the $4^{\rm th}$ band (their $4^{\rm th}$ band porosity does not change as appreciably as that of networks with other $s$-values). To explain this, in the case of uniform networks, their upstream pores close earlier than those of the graded counterparts (which have larger inlets due to the radius constraint via \cref{eq:band_radius,eq:constraint_1a}), thus prohibiting flow at an early stage by fouling upstream pores too quickly; in the case of $s=4\times 10^{-3}$ (the largest pore-size gradient used), the smaller downstream pores with their high resistance slow down the overall flow, causing the majority of fouling to take place upstream. 

In \cref{fig:band_evo_015,fig:band_evo_02}, following \cref{fig:band_evo_01}, we plot the band porosity evolution, for $a_{\rm max} = 0.15$ and $a_{\rm max} = 0.2$ respectively. We discover very similar correlations, between porosity usage and the fouling of downstream bands, to those just discussed for \cref{fig:band_evo_01} ($a_{\rm max} = 0.1$). The $1^{\rm st}$ band porosity change, $\Delta\Phi_1$, is always monotone increasing in pore-radius gradient $s$, and uniform networks incur the least porosity usage in their $4^{\rm th}$ band, indicating less deep penetration of foulants into the membrane and inefficient membrane usage (those with $s$-values that are too large also exhibit poor foulant penetration). Altogether, when taking total throughput, accumulated concentration of foulants and porosity usage all into account, we emphasize that an optimal value of radius gradient, largely independent of maximal pore length $a_{\rm max}$, exists. With the parameter values of Table \ref{table:parameter_values}, we find that membrane pore networks with $s=2.5\times 10^{-3}$ make the most favourable filters under the filtration strategy of flux exhaustion.

\subsection{Filter Performance with A Flux Threshold}

The results discussed so far are based on performance metrics evaluated at the end of the filter's lifetime, when there is no feasible flow path and flux falls to zero. In practice, when users observe a low flux level in the filtration process, they tend to discard the fouled filters and replace with fresh ones. In this section we mimic this procedure by imposing a minimal threshold for the flux level at which we halt the process and collect statistics of the performance metrics up to this critical time. The symbols for each performance metric $F$ evaluated with an imposed flux threshold are labelled with a subscript, $F_{\rm ths}$.

\begin{figure}[!ht]
    \centering
    \subfloat[Total throughput]{\label{fig:h_flux_threshold}\includegraphics[width=.45\textwidth,height=.4\textwidth]{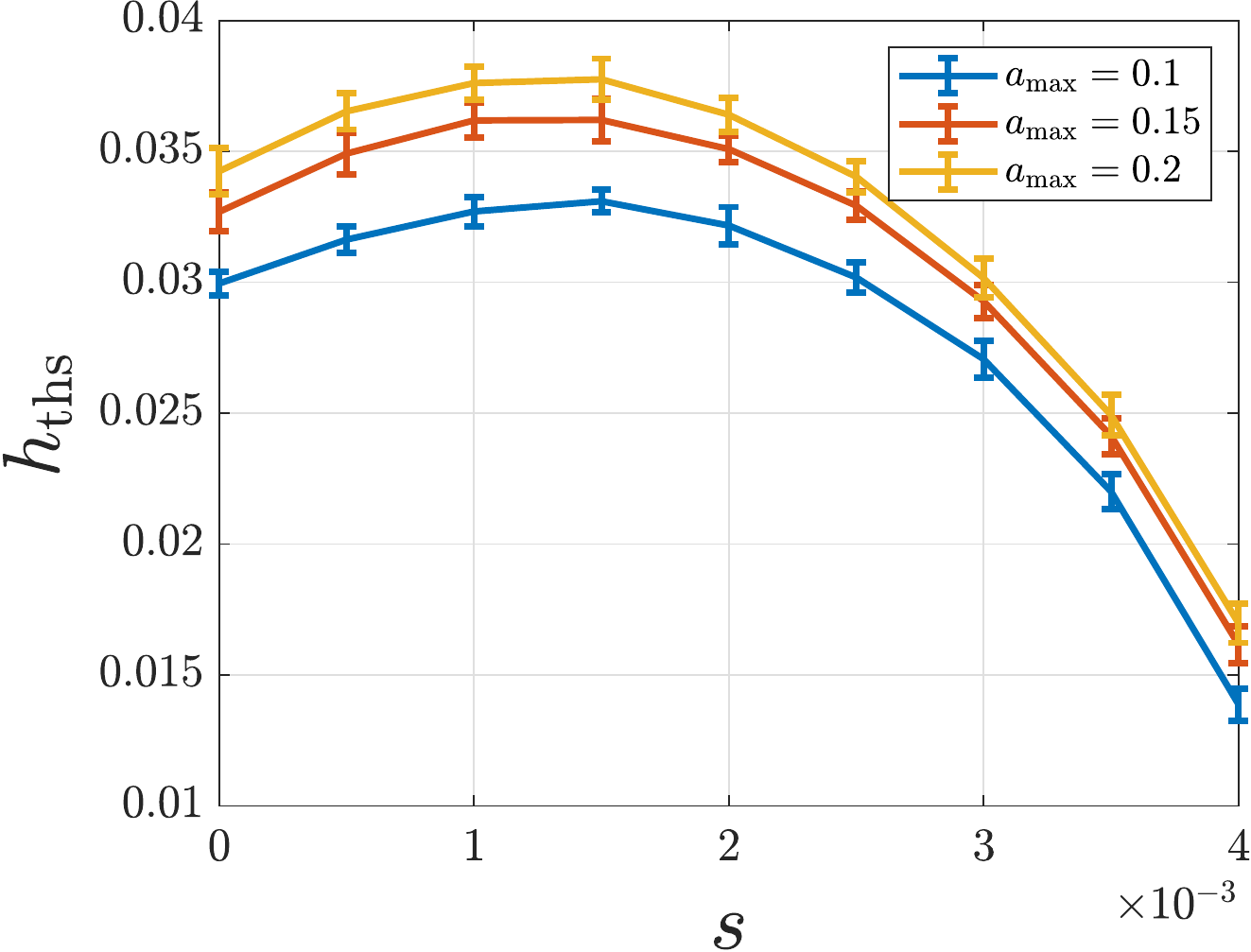}}\quad
    \subfloat[Final accumulated foulant concentration]{\label{fig:c_flux_threshold}\includegraphics[width=.45\textwidth,height=.4\textwidth]{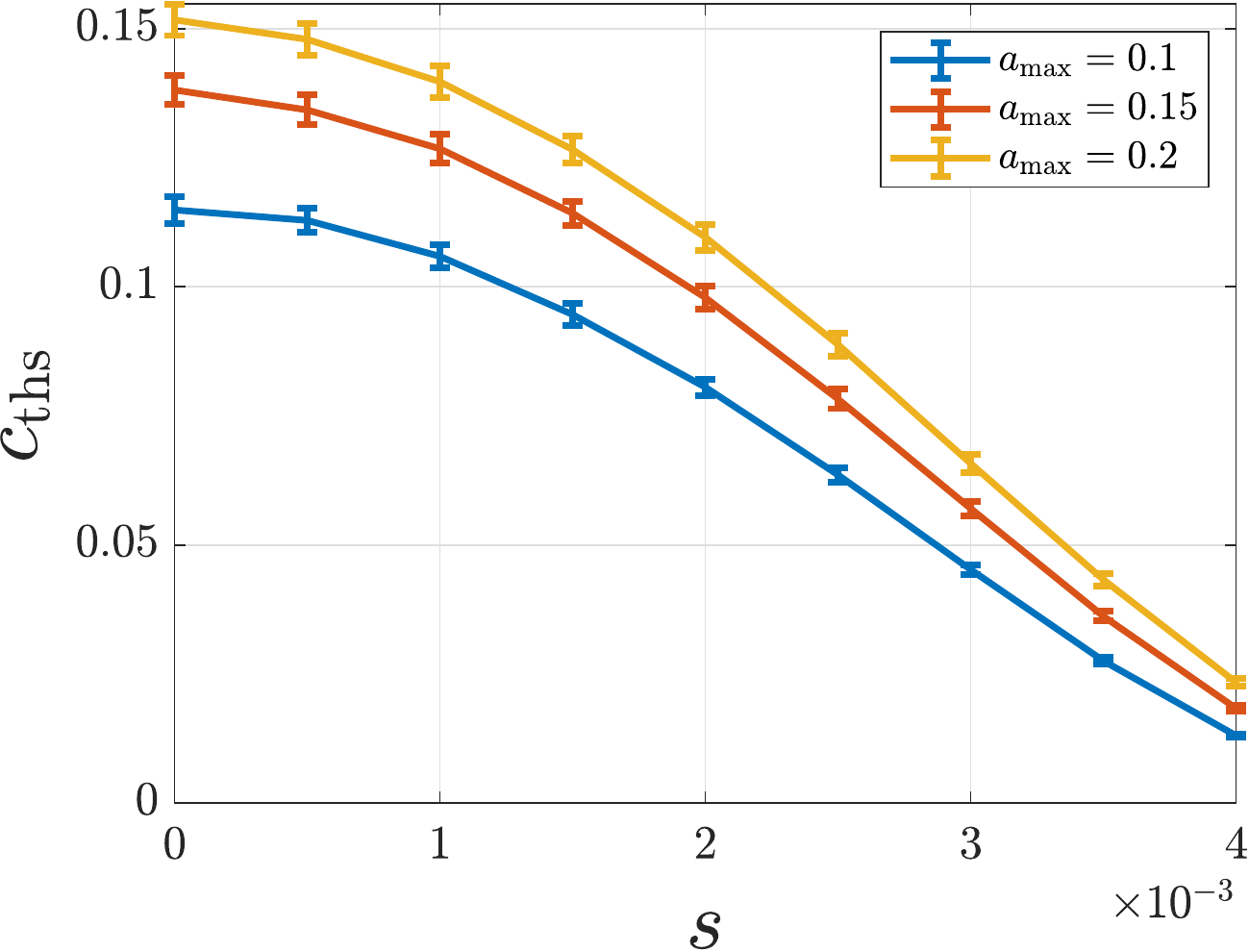}}\\
    \subfloat[Porosity usage]{\label{fig:usage_flux_threshold}\includegraphics[width=.45\textwidth,height=.4\textwidth]{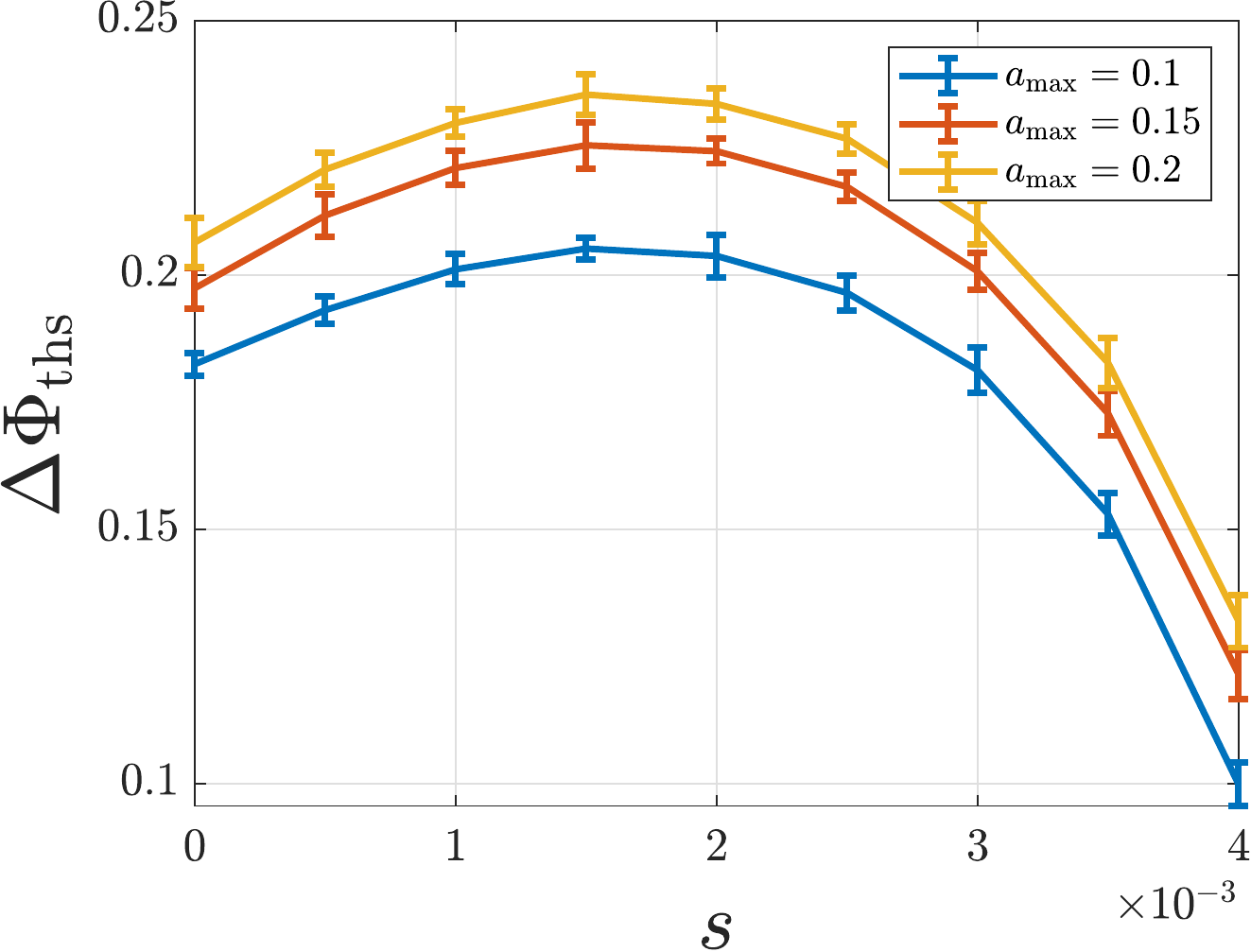}}
    \caption{Performance metrics under flux threshold versus radius gradient. Flux Threshold is set at $2\times 10^{-6}$.}
    \label{fig:flux_threshold}
\end{figure}

\cref{fig:flux_threshold} shows the performance metrics of radius-graded membrane networks where filtration is halted after the flux level drops below $2\times 10^{-6}$. This threshold level is approximately $30\%$ of the initial flux for uniform networks, and roughly $80\%$ of that for the steepest-graded network with gradient value $s=4\times 10^{-3}$ (see the vertical scale of \cref{fig:q0}). From this observation on the flux threshold level alone, we anticipate that filters with smaller initial fluxes, namely, networks with large pore-size gradients, are more prone to halt filtration prematurely and are thereby disadvantageous under this filtration mode. 

\cref{fig:h_flux_threshold} shows total filtrate throughput against radius gradient $s$ for pore-size-graded networks that operate until they reach the imposed flux level. With the chosen parameters and stopping criterion, we again observe a maximizing gradient value at $s=1.5\times 10^{-3}$, a value smaller than that in \cref{fig:tt} ($s=2\times 10^{-3}$), where networks operate until flux extinction. Thus, under the new threshold-based stopping criterion, filters with smaller gradient (and hence larger initial flux) are more favored in terms of throughput production. In fact, networks with $s\geq 3\times 10^{-3}$ underperform quite significantly, even relative to uniform networks, because their total filtering time is greatly shortened under the imposed flux threshold. \cref{fig:c_flux_threshold} shows final accumulated concentration of foulant (measured at the membrane outlet) against $s$. Here, we observe a monotone trend in both $s$ and in $a_{\rm max}$. These trends maintain qualitatively the same features as in \cref{fig:c}, though we observe that here the concentration is pointwise (for every $s$) larger than that in \cref{fig:c}. This is expected because filtration is stopped prematurely (in our simple model the particle capture capability of the membrane improves continuously as fouling occurs and pores shrink). In \cref{fig:usage_flux_threshold}, we show the relationship between the membrane porosity usage $\Delta \Phi_{\rm ths}$ and radius gradient $s$ under the imposed flux threshold. We observe a clear maximum in $\Delta \Phi_{\rm ths}$ at $s=1.5\times 10^{-3}$. 

The results in \cref{fig:flux_threshold} imply that with the imposed lower threshold on fluid flux, membrane networks with a radius gradient of $s=1.5\times 10^{-3}$ should be preferred over others due to their combined score of filtrate production, particle retention capabilities, and porosity usage. Once again, we note that this optimal value is almost independent of the value of $a_{\rm max}$ considered. We would, however, anticipate that the optimal radius gradient will decrease if the imposed lower flux threshold is increased (and it would, of course, change if model parameters were changed). 

\section{Conclusion}\label{sec:conclusion_grad}

In conclusion, we have devised a general procedure to generate pore-size-graded banded membrane pore networks, representing multilayered membrane filters. We have studied the influence of the pore-size (radius) gradient $s$, and maximum pore length $a_{\rm max}$, on selected performance metrics of these networks, under two setups of relevance to applications -- filtration until flux extinction, or until a flux lower threshold is reached. For the chosen model parameters, we have also determined optimizing pore-size gradient values for some of the performance metrics considered (compiled in \cref{table:optimal_values}).

When filters run to extinction, we find that total filtrate throughput exhibits a non-monotone trend against pore-radius gradient. More precisely, for the parameters we considered, membrane networks with a pore-radius gradient value of $s=2\times 10^{-3}$ achieve maximal total filtrate throughput. However, accumulated foulant concentration at the membrane outlet is monotonically decreasing in $s$, suggesting that, for foulant control purposes only, one should prefer membrane networks with a pore-radius gradient as large as possible. To examine the extent of membrane fouling, we also study the porosity change of the entire membrane over its lifetime (per \cref{eq:delta_membrane_porosity}). This quantity is found to be non-monotone in $s$, with a pronounced maximum achieved at a pore-radius-gradient value of $s=3\times 10^{-3}$ (for the chosen parameter values). To determine the extent of foulant penetration, we also investigate the change in porosity of each band (per \cref{eq:delta_band_porosity}). We find that the porosity change in the first band is monotone in $s$, while that in other bands has a clear (but different) maximizing pore-size gradient value. In particular, the gradient values that maximize porosity changes for downstream ($3^{\rm rd}$ and $4^{\rm th}$) bands are very close to the gradient value that maximizes total throughput, suggesting a strong correlation between these performance measures.

However, when we stop the filtration at a prescribed minimum flux level, we observe that the optimal pore-radius gradient for each performance metric is smaller than when filters are run until flux extinction. For the chosen model parameters, total filtrate throughput and porosity usage are all maximized at a gradient value of $s=1.5\times 10^{-3}$ under the flux threshold criterion (final accumulated foulant concentration at the membrane pore outlets remains a monotone decreasing function in gradient $s$). The fact that we observe a smaller optimal gradient here than with the flux-exhaustion stopping criterion is mainly because of the advantage given by the flux threshold to filters with large initial flux. Uniform networks ``benefit" from this practice and rise up the ranks into the better performing filters. At the same time, graded networks with large pore-radius gradients perform poorly because filtration tends to halt at an early stage due to the small initial fluxes inflicted by the high-resistance downstream pores. We also anticipate that the optimal gradient value(s) for performance metrics considered in this work will depend on the flux threshold we impose (indeed, on the basis of \cref{fig:q0} we expect that $s=0$ may become the optimal value when the imposed flux threshold is high enough).

\begin{table}
\caption{\label{table:optimal_values}Optimal Radius Gradient Value for Each Performance Metric.}
\begin{ruledtabular}
\begin{tabular}{p{0.45\linewidth} | p{0.18\linewidth}|p{0.3\linewidth} }
Performance Metric (\cref{sec:perf_metrics}) & Metric Symbol & Optimal Radius Gradient \\
\midrule
{\bf Until flux extinction} & & \\
\hline
Total throughput & $h_{\rm final}$ & $2\times 10^{-3}$  \\
Initial flux & $q_{\rm out}\left(0\right)$ & $0$  \\
Accumulated concentration of foulant at membrane outlet & $c_{\rm final}$ & $0$  \\
Membrane porosity usage & $\Delta \Phi$ & $3\times 10^{-3}$  \\
\midrule
{\bf Until flux threshold} & & \\
\hline
Total throughput  & $h_{\rm ths}$ & $1.5\times 10^{-3}$  \\
Accumulated concentration of foulant at membrane outlet & $c_{\rm ths}$ & $0$  \\
Membrane porosity usage & $ \Delta \Phi_{\rm ths}$ & $1.5\times 10^{-3}$  \\
\end{tabular}
\end{ruledtabular}
\end{table}

We also found that the observed trends in pore-radius gradient persist for all values of maximal pore length $a_{\rm max}$ considered. This suggests that our findings of how performance metrics depend on pore-radius gradient are largely independent of variations in membrane interior microstructure as characterized by $a_{\rm max}$. 

We note that we did not vary $\lambda$ (a band-independent parameter that captures particle-membrane affinity) in this study. However, the effect of variations in $\lambda$ simply varies the time scale of the problem, which we anticipate will not affect the overall trends of performance metrics against pore-radius gradient that were observed with the chosen value of $\lambda$. One idea for future work would be to introduce band-specific $\lambda_k$'s, which represent multilayered membrane filters consisting of different materials.

Future work should also include intra-layer pore-size variations in pore-radius graded banded networks. A preliminary analytical result (per~\cref{app:analytical}) based on the governing equations derived from our model of this paper suggests that membrane pore networks with constant band radius always close upstream; that is, the membrane will never stop functioning due to critical pore closures in the interior of the network but only when the radii of all inlets on the top surface are zero. With intra-layer pore-size variations, adsorptive behaviors at the global scale may become more complicated and more interesting than the constant band radius case. Additionally, other fouling mechanisms such as sieving could be modeled to provide a more complete picture of the membrane filtration process.

\begin{acknowledgments}
This work was supported by NSF Grants No. DMS-1615719, DMS-2133255 and DMS-2201627.
\end{acknowledgments}

\appendix
\section{Junctions and Pores in a Band}\label{app:vertex_edge_defn}
In this appendix, we define the set of junctions (vertices) and pores (edges), and their respective band-specific counterparts. We work exclusively with the dimensionless variables defined in \cref{sec:grad_scales}. Our work treats junctions and pores as points and straight lines respectively, which lie in our dimensionless domain -- the unit cube (though the notions of vertices and edges are generally more abstract in classical graph theory). Each junction $v$ of the junction set $\mathcal{V}$ has a Euclidean coordinate $v = \left(v_x,v_y,v_z\right) \in \left[0,1\right]^3$, with $z$ measured from the membrane top ($z=0$) to bottom surface ($z=1$), with $z=(Z-0.5)/W$. The junction coordinates are generated randomly as described in \cref{sec:band_and_membrane_porosity}. We further define the set of membrane pore inlets and outlets, 
\begin{subequations}
    \begin{align}
        \mathcal{V}_{{\rm in}} & =\left\{ v\in V:v_{z}=0\right\}, \\
        \mathcal{V}_{{\rm out}} & =\left\{ v\in V:v_{z}=1\right\}. 
    \end{align}
    \label{eq:grad_vtop_vbot}
\end{subequations}

The set of edges $\mathcal{E}$ is formed by connecting the junctions via
\begin{equation}
    \mathcal{E}=\left\{ e_{vw}\in \mathcal{V}\times \mathcal{V}:a_{\rm min}<\chi\left(v,w\right)<a_{\rm max}\right\}, \,\, a_{\rm max} \leq \frac{1}{m},
    \label{eq:edge}
\end{equation}
where $a_{\rm min}$ and $a_{\rm max}$ are the dimensionless minimum and maximum distance allowed between two junctions respectively; $\chi\left(\cdot,\cdot\right)$ is a periodic metric, defined by
\begin{equation}
    \chi\left(v,w\right) = \min_{\left(w_x,w_y\right)}\left\Vert v-\left(w_x,w_y,0\right)\mid w_x,w_y\in \left\{\pm 1, 0\right\}\right\Vert _{2},
    \label{eq:connection_metric}
\end{equation}
that is, junctions close to the four sides parallel to the $z$-direction may be connected through the boundary. We constrain $a_{\rm max}$ so that it does not exceed the thickness of a band, otherwise edges may cross more than two bands and reduce or defeat the purpose of having a gradient of pore radii. 

Next, we define precisely junctions and edges within a given band. Denoting the $k^{\rm th}$ band as the set of coordinates
\begin{equation}
    \mathcal{V}_{k}=\left\{ \left. v\in\left[0,1\right]^3\right|\frac{k-1}{m}\leq v_{z}<\frac{k}{m}\right\},
    \label{eq:banded_vertex_set}
\end{equation}
we say a junction $w$ lies in the $k^{\rm th}$ band if $w\in \mathcal{V}_k$. We treat each edge as a straight line in the unit cube, 
\[
e_{vw}=\left\{ u\in\left[0,1\right]^{3}\mid u=\zeta v+\left(1-\zeta\right)w,\quad0\leq \zeta \leq1\right\}.
\]
Let $L\left(e_{vw}\right)$ be the one-dimensional Lebesgue measure of $e_{vw}$ such that $L\left(e_{vw}\right) = \chi\left(v,w\right)$. Define a band-specific length measure $L_k$ such that
\begin{equation}
    L_{k}\left(e_{vw}\right) = L\left(e_{vw}\cap \mathcal{V}_k\right),
    \label{eq:intersection_measure}
\end{equation}
which computes the length of the edge strictly inside the $k^{\rm th}$ band (known in general as an intersection measure). We say that a pore belongs to the $k^{\rm th}$ band when the largest proportion of its length lies strictly inside the $k^{\rm th}$ band. More precisely, we define the set of the pores in the $k^{\rm th}$ band as
\begin{equation}
   \mathcal{E}_k = \left\{e_{vw}\in \mathcal{E} \mid L_{k}\left(e_{vw}\right) = \max_{n} L_{n}\left(e_{vw}\right)\right\}.
   \label{eq:edge_set}
\end{equation}
In this definition, we see that if $v,w\in \mathcal{V}_k$, then $e_{vw}\cap \mathcal{V}_k = e_{vw}$ while $e_{vw}\cap \mathcal{V}_n = \emptyset$ for all $n\neq k$, that is, the edge $e_{vw}$ lies strictly in the $k^{\rm th}$ band.

The formula \cref{eq:intersection_measure} also facilitates the computation of $\Phi_k$, the band porosity of the $k^{\rm th}$ band (\cref{eq:band_porosity}), in the sense that we consider the lengths of edges that strictly lie in $\mathcal{V}_k$; edges reaching two bands will contribute to the band porosities of each band separately. We simplify the notation as
\[
L_{k,vw}:=L_{k}\left(e_{vw}\right).
\]

\section{Number of Random Points in Each Band}\label{app:band_estimate}
With prescribed $\Phi$ and $m$, we provide an estimate of how many random points, $N_k$, should be used in the $k^{\rm th}$ band. We write total pore length as $\chi_{ij}:=\chi\left(i,j\right)$ per \cref{eq:connection_metric}.  More precisely, we use basic arguments to deduce, via the sequence of approximations and equalities below, that total edge length in the $k^{\rm th}$ band scales with $N_k^2$,
\begin{equation}
\sum_{e_{ij}\in \mathcal{E}} l_{k,ij} \overset{\left(\textbf{A}\right)}{\approx} \sum_{e_{ij}\in \mathcal{E}_k} \chi_{ij} \overset{\left(\textbf{B}\right)}{=}  \overline{\chi} \left|\mathcal{E}_k\right| \overset{\left(\textbf{C}\right)}{=} \overline{\chi} \frac{\overline{d}_k N_k}{2}
\overset{\left(\textbf{D}\right)}{=} \frac{\overline{\chi}}{2} \left[\left(N_k-1\right) p\right] N_k,
\label{eq:approximation}
\end{equation}
where $\overline{\chi}$, $\overline{D}_k$ and $p$ are the average edge length, average number of neighbors in the $k^{\rm th}$ band and the probability of two random points being connected, respectively.

The first approximation $\left(\textbf{A}\right)$ relies on the estimate that $l_{k,ij}\approx \chi_{ij}$, given $e_{ij}\in \mathcal{E}_k$. The first equality $\left(\textbf{B}\right)$ is trivial. The second equality $\left(\textbf{C}\right)$ expresses the number of edges in the $k^{\rm th}$ band, $\left|\mathcal{E}_k\right|$, as the average number of neighbors, $\overline{d}_k$, times the number of junctions, $N_k$, divided by 2 (to account for double counting). The third equality $\left(\textbf{D}\right)$ expresses $\overline{d}_k$, as the total number of neighbors a junction could have, $N_k-1$, times the probability of obtaining a neighbor, $p$. 

We now briefly discuss the terms to the right of the last equality $\left(\textbf{D}\right)$ in \cref{eq:approximation} and the dependence of these quantities on band number $k$. First, since $\overline{\chi}$ is a sample mean of edge length with $\left|\mathcal{E}_k\right|$ as the sample size, it can be approximated by the expected edge length (based on hypercube line-picking~\cite{hypercube}), which is a constant independent of $k$ (but dependent on $a_{\min}$ and $a_{\max}$). The probability of connecting two uniformly random points in the cube, $p$, depends on $a_{\min}$ and $a_{\max}$ but not on band number $k$.  These simple estimates provide the basis that justifies the step from \cref{eq:local_porosity_approx} to \cref{eq:num_k_relationship_1}, that is, one can cancel $\overline{\chi}$ and $p$ from both sides of \cref{eq:local_porosity_approx} after re-expressing $\sum_{e_{ij}\in \mathcal{E}} l_{k,ij}$ using \cref{eq:approximation}.

We note that $\overline{\chi}$ does depend on $k$ since each band is expected to have different node density. Nonetheless, using the expected value of average edge length as an approximation is a reasonable starting point to help estimate the number of nodes needed in each band.  The procedure of edge addition/removal performed in \cref{al:correction} of \cref{al:algorithm1} is considerably sped up with the guided initial guesses.

\section{Analytical Results on Pore Closure Time}\label{app:analytical}

\begin{figure}
    \centering
    \includegraphics[scale=0.4]{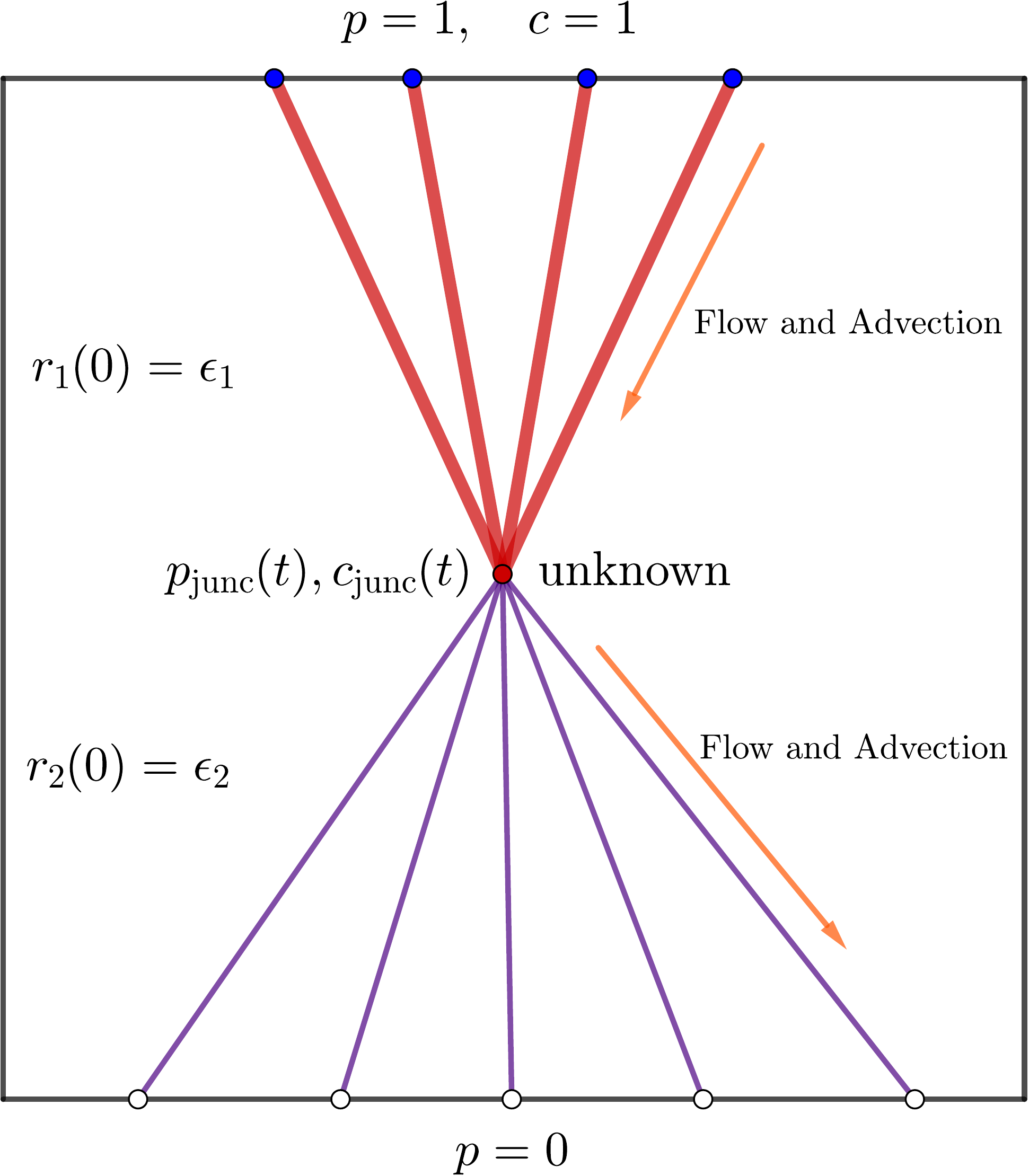}
    \caption{Schematic of a simplified setup for \cref{thm:thm1}. Colored junctions and pores correspond to each band as follows: {\color{red}red upstream pores} and {\color{indigo}indigo downstream pores}. {\color{blue}Blue dots} are inlets. {\color{red}The red dot} is an interior junction. White dots are outlets. Here $n_{\rm up} = 4$ and $n_{\rm down}=5$.
    }
    \label{eq:analytical_schematic}
\end{figure}

In this section, we show that the lifetime of a simple subclass of pore networks is governed by the radius of the inlets (pores in the upstream surface). More precisely, for such simple networks, the radii of the inlets will always go to zero earlier than those of downstream pores, {\bf independent} of the initial upstream and downstream pore radii and model parameters. This result forms the basis for the conjecture that the more general networks considered in this work will also only clog at the membrane inlet. It also serves as an instructive worked example for the general network solver (see Gu {\it et al.}~\cite{gu_network_2021} for another such example). 

A membrane pore network ceases to function as a filter when there no longer exists a feasible path connecting any inlet on the top surface to any outlet on the bottom surface. The critical event leading to filtration arrest is when the radius of a pore vanishes as the ``last straw", breaking the main network into at least two disconnected subnetworks, such that each subnetwork contains only a subset of the inlets or outlets, but not both. We here consider the subclass of networks (depicted in \cref{eq:analytical_schematic}) consisting of a single arbitrary pore junction connecting $n_{\rm up}$ upstream inlets and $n_{\rm down}$ downstream outlets. 

The upstream and downstream pores are assumed all to have unit length (though they appear to have different lengths in \cref{eq:analytical_schematic}; the choice of a common length simplifies the presentation but does not affect our result). We solve the dimensionless governing equations (per scales presented in \cref{sec:grad_scales}) for the unknown pressure $p_{\text{junc}}\left(t\right)$ and concentration $c_{\text{junc}}\left(t\right)$ at the interior junction. Since the pore lengths are assumed the same, and the upstream (resp. downstream) pores obey the same boundary conditions for pressure and concentration, the radius and concentration evolution in these pores are therefore also the same. As a result, we simply monitor the evolution of quantities for one upstream and one downstream pore. Let $r_1\left(t\right)$ and $r_2\left(t\right)$ be the radius of each upstream and downstream pore respectively. 

Fluxes through each upstream and downstream pore, labelled $q_1\left(t\right)$ and $q_2\left(t\right)$ respectively, satisfy the dimensionless Hagen-Poiseuille equations according to the pressure drop across them,
\begin{align*}
q_{1}\left(t\right) & =\left(1-p_{\text{junc}}\left(t\right)\right)r_{1}^{4}\left(t\right),\\
q_{2}\left(t\right) & =p_{\text{junc}}\left(t\right)r_{2}^{4}\left(t\right),
\end{align*}
where $p_{\text{junc}}\left(t\right)$ is the (unknown) pressure at the interior junction. Conservation of flux yields
\begin{equation}
    n_{{\rm up}}q_{1}\left(t\right)=n_{{\rm down}}q_{2}\left(t\right)
    \label{eq:analytical_conservation_of_flux}
\end{equation}
and therefore 
\begin{equation}
p_{\text{junc}}\left(t\right)=\frac{n_{{\rm up}}r_{1}^{4}\left(t\right)}{n_{{\rm up}}r_{1}^{4}\left(t\right)+n_{{\rm down}}r_{2}^{4}\left(t\right)} \qquad \implies \qquad q_{1}\left(t\right)=\frac{n_{{\rm down}}r_{1}^{4}\left(t\right)r_{2}^{4}\left(t\right)}{n_{{\rm up}}r_{1}^{4}\left(t\right)+n_{{\rm down}}r_{2}^{4}\left(t\right)}.
\label{eq:analytical_flux}
\end{equation}

Foulant concentration in the upstream pore, $c_1\left(y,t\right)$, satisfies the dimensionless advection equation \cref{eq:advection},
\[
q_{1}\frac{\partial c_{1}}{\partial y}=-\lambda r_{1}c_{1},\quad c_{1}\left(0,t\right)=1,
\]
which has an analytical solution
\[
c_{1}\left(y,t\right)=c_1\left(0,t\right)\exp\left(-\frac{\lambda yr_{1}\left(t\right)}{q_{1}\left(t\right)}\right)\overset{\cref{eq:analytical_flux}}{=}\exp\left(-\lambda y\left[\frac{n_{{\rm up}}r_{1}\left(t\right)}{n_{{\rm down}}r_{2}^{4}\left(t\right)}+\frac{1}{r_{1}^{3}\left(t\right)}\right]\right).
\]
The evolution of the pore radii satisfies \cref{eq:adsorption},
\begin{align*}
\frac{dr_{1}}{dt} & =-1,\quad r_{1}\left(0\right)=\epsilon_{1} \qquad \implies \qquad  r_{1}\left(t\right)=\epsilon_{1}-t,\\
\frac{dr_{2}}{dt} & =-c_{\text{junc}}\left(t\right),\quad r_{2}\left(0\right)=\epsilon_{2}.
\end{align*}
By conservation of particle flux at the junction, we have $n_{{\rm down}}q_{2}\left(t\right)c_{\text{junc}}\left(t\right)=n_{{\rm up}}q_{1}\left(t\right)c_{1}\left(1,t\right)$. Conservation of flux (per~\cref{eq:analytical_conservation_of_flux}) reduces this to $c_{\text{junc}}\left(t\right)=c_{1}\left(1,t\right)$. Hence,
\begin{equation}
    \frac{dr_{2}}{dt} =-c_{1}\left(1,t\right)=-\exp\left(-\lambda\left(\frac{n_{{\rm up}}\left(\epsilon_{1}-t\right)}{n_{{\rm down}}r_{2}^{4}\left(t\right)}+\frac{1}{\left(\epsilon_{1}-t\right)^{3}}\right)\right),\quad r_2\left(0\right) = \epsilon_2.
    \label{eq:r2}
\end{equation}

\begin{theorem}\label{thm:thm1}
The solution $r_2\left(t\right)$ to \cref{eq:r2} satisfies $r_{2}\left(t\right)>0$ for all $t\in\mathcal{T}:=\left[0,\epsilon_{1}\right]$, for all $\epsilon_1,\epsilon_2,\lambda>0$ and arbitrary positive integers $n_{\rm up}$ and $n_{\rm down}$.
\end{theorem}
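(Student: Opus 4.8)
The plan is to argue by contradiction, exploiting two features of \cref{eq:r2}: its right-hand side is \emph{strictly negative}, so $r_2$ is strictly decreasing; and the exponent inside the exponential is bounded below by a quantity that blows up like $1/r_2^3$ as $r_2\to 0^+$, \emph{uniformly} in $t\in[0,\epsilon_1)$. Concretely, suppose $r_2$ vanishes somewhere on $\mathcal{T}=[0,\epsilon_1]$ and let $T\in(0,\epsilon_1]$ be the first such time. On $[0,T)$ the right-hand side of \cref{eq:r2} is finite, negative and locally Lipschitz in $r_2$, so the solution is well defined, strictly decreasing with $0<r_2(t)<\epsilon_2$, and (being Lipschitz, with $|r_2'|\le 1$) extends continuously to $T$ with $r_2(T)=0$. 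The goal is then to show this is impossible.

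The key step is the uniform lower bound: for all $t\in[0,\epsilon_1)$ and all $r>0$,
\[
\frac{n_{\rm up}(\epsilon_1-t)}{n_{\rm down}\,r^4}+\frac{1}{(\epsilon_1-t)^3}\ \ge\ \frac{C}{r^3},\qquad C:=\frac{4}{3^{3/4}}\left(\frac{n_{\rm up}}{n_{\rm down}}\right)^{3/4}>0.
\]
I would obtain this from weighted AM--GM applied to $A\xi+\xi^{-3}$ with $A=n_{\rm up}/(n_{\rm down}r^4)$ and $\xi=\epsilon_1-t>0$: writing $A\xi+\xi^{-3}=\tfrac{A\xi}{3}+\tfrac{A\xi}{3}+\tfrac{A\xi}{3}+\xi^{-3}\ge 4\left(\tfrac{A^3}{27}\right)^{1/4}=\tfrac{4}{3^{3/4}}A^{3/4}$, and noting $A^{3/4}=(n_{\rm up}/n_{\rm down})^{3/4}r^{-3}$. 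The point, and the part I expect to be the main obstacle, is precisely this: one must realize that the two singular terms (with $\xi$-powers $1$ and $-3$) should be \emph{combined} into a single $1/r_2^3$ blow-up rather than handled in separate regimes, since naively splitting into "$r_2$ small" and "$t$ near $\epsilon_1$" fails to close the estimate when $\epsilon_2$ is tiny.

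Given the bound, \cref{eq:r2} yields the differential inequality $r_2'(t)\ge -\exp\!\left(-\lambda C/r_2^3(t)\right)$ on $[0,T)$. Multiplying through by $\exp\!\left(\lambda C/r_2^3(t)\right)>0$ gives $\exp\!\left(\lambda C/r_2^3(t)\right)r_2'(t)\ge -1$, and integrating from $0$ to $t$ with the substitution $s=r_2(\tau)$ (valid since $r_2$ is strictly decreasing) produces
\[
\int_{r_2(t)}^{\epsilon_2} e^{\lambda C/s^3}\,ds\ \le\ t\qquad\text{for all }t\in[0,T).
\]
Letting $t\to T^-$, so that $r_2(t)\to 0^+$, the left-hand side converges to $\int_0^{\epsilon_2} e^{\lambda C/s^3}\,ds$, which is $+\infty$ since $e^{\lambda C/s^3}\ge \lambda C\,s^{-3}$ and $s^{-3}$ is not integrable near $s=0$; yet the right-hand side is at most $T\le\epsilon_1<\infty$. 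This contradiction shows $r_2$ cannot vanish on $\mathcal{T}$, i.e. $r_2(t)>0$ for all $t\in[0,\epsilon_1]$, for every $\epsilon_1,\epsilon_2,\lambda>0$ and all positive integers $n_{\rm up},n_{\rm down}$. Everything beyond the uniform bound is a routine separation-of-variables comparison.
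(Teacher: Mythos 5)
Your proof is correct, and while it shares the paper's overall strategy (argue by contradiction that $r_2$ cannot reach zero on $\mathcal{T}$ because the exponential suppression of the deposition rate overwhelms any algebraic degeneration), the mechanics are genuinely different. The paper divides \cref{eq:r2} by $r_2$ and integrates to obtain $\log r_2(t)=\log\epsilon_2-I(t)$ as in \cref{eq:contradiction_step}, then argues that the integrand of $I$ in \cref{eq:integrand} remains bounded: it discards the $1/(\epsilon_1-\tau)^3$ term as ``unimportant'' and applies L'H\^opital's rule to $\frac{1}{r_2}\exp\bigl(-\kappa(\epsilon_1-\tau)/r_2^4\bigr)$ as $r_2\to0$. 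You instead keep both terms in the exponent and combine them via weighted AM--GM into the uniform lower bound $C/r_2^3$, then separate variables against the weight $e^{\lambda C/s^3}$ and invoke the divergence of $\int_0^{\epsilon_2}e^{\lambda C/s^3}\,ds$. The paper's route is lighter, since boundedness of the integrand is a weaker fact than your uniform estimate; but its L'H\^opital step tacitly requires $(\epsilon_1-\tau)/r_2^4\to\infty$, which is automatic only when the hypothesized vanishing time $t^*$ is strictly interior to $\mathcal{T}$ --- in the corner case $t^*=\epsilon_1$ one must reinstate the discarded $1/(\epsilon_1-\tau)^3$ term to close the argument. Your AM--GM combination handles both regimes in one stroke, which is exactly the point you flag yourself, and as a by-product gives a quantitative integral lower bound on $r_2$. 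One cosmetic note: the divergence of $\int_0^{\epsilon_2}e^{\lambda C/s^3}\,ds$ follows even more simply from $e^x\ge 1+x\ge x$, as you use, so no further justification is needed there.
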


\begin{proof}
We note first that $r_{2}\left(t\right)\geq0$ for all $t\in\mathcal{T}$ since
the initial condition is positive ($\epsilon_{2}>0$), and the right hand side of \cref{eq:r2} is a nonpositive function, which goes to zero as $r_2\rightarrow 0$, {\it i.e.}, $\frac{dr_{2}}{dt}\rightarrow 0^{-}$ as $r_{2}\rightarrow0^{+}$. Thus, the radius of the downstream pore will decrease to zero until it reaches zero and will never attain a negative value. To prove the claim that $r_{2}\left(t\right)>0$ for all $t\in\mathcal{T}$, we suppose that there exists $t^* \in \mathcal{T}$ such that $r_2\left(t^*\right) = 0$ and arrive at a contradiction as follows.

The above shows that $r_2\left(t\right)$ is a monotone decreasing function, and in fact, is equal to $0$ for all $t\geq t^*$. While $r_2\left(t\right)>0$, for $0\leq t<t^*$, we divide both sides of \cref{eq:r2} by $r_2\left(t\right)$ and integrate to obtain 
\begin{equation}
    \log \left(r_2\left(t\right)\right) = \log\left(\epsilon_2\right) - I\left(t\right),
    \label{eq:contradiction_step}
\end{equation}
where 
\begin{equation}
    I\left(t\right) = \int_{0}^{t}\frac{1}{r_{2}\left(\tau\right)}\exp\left(-\lambda\left[\frac{n_{{\rm up}}\left(\epsilon_{1}-\tau\right)}{n_{{\rm down}}r_{2}^{4}\left(\tau\right)}+\frac{1}{\left(\epsilon_{1}-\tau\right)^{3}}\right]\right)d\tau.
    \label{eq:integrand}
\end{equation}
Note that under the assumption that $r_2\left(t^*\right) = 0$, \cref{eq:contradiction_step} requires $I\left(t\right) \rightarrow +\infty$ as $t\nearrow t^*$. It therefore suffices to check that the integrand defining $I(t)$ in \cref{eq:integrand} is a bounded function 
for all $t \in \mathcal{T}$ to obtain our contradiction. 

The term $\frac{1}{\left(\epsilon_{1}-\tau\right)^{3}}$ is unimportant in the exponential of \cref{eq:integrand} since the integral without it bounds $I\left(t\right)$ from above. We focus on checking the boundedness of the following part
\[
f\left(\tau\right) := \frac{1}{r_{2}\left(\tau\right)}\exp\left(-\kappa\frac{\left(\epsilon_{1}-\tau\right)}{r_{2}^{4}\left(\tau\right)}\right),
\]
where $\kappa = \frac{\lambda n_{{\rm up}}}{n_{{\rm down}}}$.

Away from $r_2 = 0$, the integrand is clearly bounded. As $r_2\rightarrow 0$ (or as $\tau \rightarrow t^*)$, by L'H{\^ o}pital's rule, $f\left(\tau\right) \rightarrow 0$. The integrand is thus void of singularities and does not blow up on $[0,t]$. This implies that the RHS of \cref{eq:contradiction_step} is bounded for all $t\in \mathcal{T}$, while the LHS gives $-\infty$ as $t\rightarrow t^*$. This contradiction shows that there is no $t^* \in \mathcal{T}$ such that $r_2(t^*) = 0$.
\end{proof}

\cref{thm:thm1} shows that when initial upstream (resp. downstream) pore sizes are the same, pore closure always occurs upstream. This result is useful for the majority of the pore junctions considered in this work, {\it i.e.}, those that connect upstream and downstream pores with the same initial radius respectively. 

However, \cref{thm:thm1} did not consider the possibility for a junction to have downstream pores with different initial radii, as may occur when a junction connects to downstream pores that belong to different bands. In this case, we show via a numerical example that the radius of a downstream pore can go to zero before the upstream ones. Consider an inverted-Y shaped network with one upstream pore and two downstream pores (corresponding to $n_{\rm top} = 1$ and $n_{\rm bot} = 2$ in \cref{eq:analytical_schematic}), with initial radii $\epsilon_1$, $\epsilon_2$ and $\epsilon_3$ respectively. Without loss of generality, we assume $\epsilon_2>\epsilon_3$. Note that this situation is not included in the premise of \cref{thm:thm1}.

Using calculations similar to those used for \cref{thm:thm1} above, we find that the outlet radii $r_2\left(t\right)$ and $r_3\left(t\right)$ satisfy the set of coupled ODEs, 
\begin{equation}
    \frac{dr_2}{dt} = \frac{dr_3}{dt}= -\exp\left(-\lambda\left(\frac{\epsilon_{1}-t}{r_{2}^{4}\left(t\right)+r_{3}^{4}\left(t\right)}+\frac{1}{\left(\epsilon_{1}-t\right)^{3}}\right)\right), \quad r_2\left(0\right)=\epsilon_2, \, r_3\left(0\right)=\epsilon_3.
\end{equation}
Note that $r_2$ and $r_3$ simply differ by a constant, $\epsilon_2 - \epsilon_3$. Thus, we can further deduce that
\begin{equation}
    \frac{dr_3}{dt}= -\exp\left(-\lambda\left(\frac{\epsilon_{1}-t}{\left(r_3\left(t\right)+\epsilon_2-\epsilon_3\right)^4+r_{3}^{4}\left(t\right)}+\frac{1}{\left(\epsilon_{1}-t\right)^{3}}\right)\right).
\end{equation}
From this, we observe that one can make $\epsilon_2-\epsilon_3$ sufficiently large so that regardless of how small $r_3$ becomes, $r_3$ decreases at a nontrivial rate. This is a scenario different than that in \cref{eq:r2}. An explicit condition  involving $\epsilon_1,\epsilon_2,\epsilon_3$ and $\lambda$ for this to happen may be derived. We have confirmed this conclusion numerically with $\epsilon_1= \epsilon_2 = 0.01$, $\epsilon_3=3\times 10^{-3}$, $\lambda = 5\times 10^{-7}$ (the same value used in \cref{sec:grad_results}), finding that $r_3\left(t\right)$ goes to zero earlier than $r_1\left(t\right)$. 

We conclude that the difference in initial conditions for pore radii does play a role in driving the dynamics of each pore. Downstream pore closure can be earlier than the upstream one. However, we stress that even if one downstream pore closes earlier, the local structure at the junction always reduces to the case where we have multiple upstream pores and one single downstream one, which is the setup used in \cref{thm:thm1} with $n_{\rm up}$ arbitrary and $n_{\rm down} = 1$. In other words, the moment any junction has one downstream pore, its upstream pores will always close first. With this heuristic argument (that can be argued inductively upstream), we believe that a general membrane network with varied initial conditions on the pore radii will only close on the top surface under adsorptive fouling. 

\nocite{*}

\bibliography{main.bib}% Produces the bibliography via BibTeX.

%apsrev4-2.bst 2019-01-14 (MD) hand-edited version of apsrev4-1.bst
%Control: key (0)
%Control: author (8) initials jnrlst
%Control: editor formatted (1) identically to author
%Control: production of article title (0) allowed
%Control: page (0) single
%Control: year (1) truncated
%Control: production of eprint (0) enabled
\providecommand{\noopsort}[1]{}\providecommand{\singleletter}[1]{#1}%
\begin{thebibliography}{48}%
\makeatletter
\providecommand \@ifxundefined [1]{%
 \@ifx{#1\undefined}
}%
\providecommand \@ifnum [1]{%
 \ifnum #1\expandafter \@firstoftwo
 \else \expandafter \@secondoftwo
 \fi
}%
\providecommand \@ifx [1]{%
 \ifx #1\expandafter \@firstoftwo
 \else \expandafter \@secondoftwo
 \fi
}%
\providecommand \natexlab [1]{#1}%
\providecommand \enquote  [1]{``#1''}%
\providecommand \bibnamefont  [1]{#1}%
\providecommand \bibfnamefont [1]{#1}%
\providecommand \citenamefont [1]{#1}%
\providecommand \href@noop [0]{\@secondoftwo}%
\providecommand \href [0]{\begingroup \@sanitize@url \@href}%
\providecommand \@href[1]{\@@startlink{#1}\@@href}%
\providecommand \@@href[1]{\endgroup#1\@@endlink}%
\providecommand \@sanitize@url [0]{\catcode `\\12\catcode `\$12\catcode
  `\&12\catcode `\#12\catcode `\^12\catcode `\_12\catcode `\%12\relax}%
\providecommand \@@startlink[1]{}%
\providecommand \@@endlink[0]{}%
\providecommand \url  [0]{\begingroup\@sanitize@url \@url }%
\providecommand \@url [1]{\endgroup\@href {#1}{\urlprefix }}%
\providecommand \urlprefix  [0]{URL }%
\providecommand \Eprint [0]{\href }%
\providecommand \doibase [0]{https://doi.org/}%
\providecommand \selectlanguage [0]{\@gobble}%
\providecommand \bibinfo  [0]{\@secondoftwo}%
\providecommand \bibfield  [0]{\@secondoftwo}%
\providecommand \translation [1]{[#1]}%
\providecommand \BibitemOpen [0]{}%
\providecommand \bibitemStop [0]{}%
\providecommand \bibitemNoStop [0]{.\EOS\space}%
\providecommand \EOS [0]{\spacefactor3000\relax}%
\providecommand \BibitemShut  [1]{\csname bibitem#1\endcsname}%
\let\auto@bib@innerbib\@empty
%</preamble>
\bibitem [{\citenamefont {Kuchenrither}\ \emph {et~al.}(2012)\citenamefont
  {Kuchenrither}, \citenamefont {Stone},\ and\ \citenamefont
  {Haug}}]{omni2012}%
  \BibitemOpen
  \bibfield  {author} {\bibinfo {author} {\bibfnamefont {R.~D.}\ \bibnamefont
  {Kuchenrither}}, \bibinfo {author} {\bibfnamefont {L.}~\bibnamefont
  {Stone}},\ and\ \bibinfo {author} {\bibfnamefont {R.~T.}\ \bibnamefont
  {Haug}},\ }\href@noop {} {\emph {\bibinfo {title} {Omni-processor Landscaping
  Project}}},\ \bibinfo {type} {Tech. Rep.}\ (\bibinfo  {institution} {Water
  Environment Research Foundation},\ \bibinfo {year} {2012})\BibitemShut
  {NoStop}%
\bibitem [{\citenamefont {Chiera}\ \emph {et~al.}(2021)\citenamefont {Chiera},
  \citenamefont {Bolisetty}, \citenamefont {Eichler}, \citenamefont
  {Mezzenga},\ and\ \citenamefont {Steinegger}}]{chiera2021}%
  \BibitemOpen
  \bibfield  {author} {\bibinfo {author} {\bibfnamefont {N.~M.}\ \bibnamefont
  {Chiera}}, \bibinfo {author} {\bibfnamefont {S.}~\bibnamefont {Bolisetty}},
  \bibinfo {author} {\bibfnamefont {R.}~\bibnamefont {Eichler}}, \bibinfo
  {author} {\bibfnamefont {R.}~\bibnamefont {Mezzenga}},\ and\ \bibinfo
  {author} {\bibfnamefont {P.}~\bibnamefont {Steinegger}},\ }\bibfield  {title}
  {\bibinfo {title} {Removal of radioactive cesium from contaminated water by
  whey protein amyloids–carbon hybrid filters},\ }\href
  {https://doi.org/10.1039/D1RA05376K} {\bibfield  {journal} {\bibinfo
  {journal} {RSC Adv.}\ }\textbf {\bibinfo {volume} {11}},\ \bibinfo {pages}
  {32454} (\bibinfo {year} {2021})}\BibitemShut {NoStop}%
\bibitem [{\citenamefont {Lipnizki}(2015)}]{Lipnizki2015}%
  \BibitemOpen
  \bibfield  {author} {\bibinfo {author} {\bibfnamefont {F.}~\bibnamefont
  {Lipnizki}},\ }\bibfield  {title} {\bibinfo {title} {Beer clarification},\
  }in\ \href {https://doi.org/10.1007/978-3-642-40872-4_46-1} {\emph {\bibinfo
  {booktitle} {Encyclopedia of Membranes}}},\ \bibinfo {editor} {edited by\
  \bibinfo {editor} {\bibfnamefont {E.}~\bibnamefont {Drioli}}\ and\ \bibinfo
  {editor} {\bibfnamefont {L.}~\bibnamefont {Giorno}}}\ (\bibinfo  {publisher}
  {Springer},\ \bibinfo {address} {Berlin, Heidelberg},\ \bibinfo {year}
  {2015})\ pp.\ \bibinfo {pages} {1--2}\BibitemShut {NoStop}%
\bibitem [{\citenamefont {Dizge}\ \emph {et~al.}(2011)\citenamefont {Dizge},
  \citenamefont {Koseoglu-Imer}, \citenamefont {Karagunduz},\ and\
  \citenamefont {Keskinler}}]{Dizge2011}%
  \BibitemOpen
  \bibfield  {author} {\bibinfo {author} {\bibfnamefont {N.}~\bibnamefont
  {Dizge}}, \bibinfo {author} {\bibfnamefont {D.~Y.}\ \bibnamefont
  {Koseoglu-Imer}}, \bibinfo {author} {\bibfnamefont {A.}~\bibnamefont
  {Karagunduz}},\ and\ \bibinfo {author} {\bibfnamefont {B.}~\bibnamefont
  {Keskinler}},\ }\bibfield  {title} {\bibinfo {title} {Effects of cationic
  polyelectrolyte on filterability and fouling reduction of submerged membrane
  bioreactor (mbr)},\ }\href {https://doi.org/10.1016/j.memsci.2011.04.048}
  {\bibfield  {journal} {\bibinfo  {journal} {J. Membr. Sci.}\ }\textbf
  {\bibinfo {volume} {377}},\ \bibinfo {pages} {175} (\bibinfo {year}
  {2011})}\BibitemShut {NoStop}%
\bibitem [{\citenamefont {Barnaby}\ \emph {et~al.}(2017)\citenamefont
  {Barnaby}, \citenamefont {Liefeld}, \citenamefont {Jackson~B.}, \citenamefont
  {Hampton~T.},\ and\ \citenamefont {Stanton~B.}}]{barnaby2017}%
  \BibitemOpen
  \bibfield  {author} {\bibinfo {author} {\bibfnamefont {R.}~\bibnamefont
  {Barnaby}}, \bibinfo {author} {\bibfnamefont {A.}~\bibnamefont {Liefeld}},
  \bibinfo {author} {\bibfnamefont {P.}~\bibnamefont {Jackson~B.}}, \bibinfo
  {author} {\bibfnamefont {H.}~\bibnamefont {Hampton~T.}},\ and\ \bibinfo
  {author} {\bibfnamefont {A.}~\bibnamefont {Stanton~B.}},\ }\bibfield  {title}
  {\bibinfo {title} {Effectiveness of table top water pitcher filters to remove
  arsenic from drinking water},\ }\href
  {https://doi.org/10.1016/j.envres.2017.07.018} {\bibfield  {journal}
  {\bibinfo  {journal} {Environ Res.}\ }\textbf {\bibinfo {volume} {158}},\
  \bibinfo {pages} {610} (\bibinfo {year} {2017})}\BibitemShut {NoStop}%
\bibitem [{\citenamefont {Yu}\ \emph {et~al.}(2009)\citenamefont {Yu},
  \citenamefont {Yiin}, \citenamefont {(Tina)~Fan},\ and\ \citenamefont
  {Rhoads}}]{Yu2009}%
  \BibitemOpen
  \bibfield  {author} {\bibinfo {author} {\bibfnamefont {C.~H.}\ \bibnamefont
  {Yu}}, \bibinfo {author} {\bibfnamefont {L.-M.}\ \bibnamefont {Yiin}},
  \bibinfo {author} {\bibfnamefont {Z.-H.}\ \bibnamefont {(Tina)~Fan}},\ and\
  \bibinfo {author} {\bibfnamefont {G.~G.}\ \bibnamefont {Rhoads}},\ }\bibfield
   {title} {\bibinfo {title} {Evaluation of hepa vacuum cleaning and dry steam
  cleaning in reducing levels of polycyclic aromatic hydrocarbons and house
  dust mite allergens in carpets},\ }\href {https://doi.org/10.1039/B807821A}
  {\bibfield  {journal} {\bibinfo  {journal} {J. Environ. Monit.}\ }\textbf
  {\bibinfo {volume} {11}},\ \bibinfo {pages} {205} (\bibinfo {year}
  {2009})}\BibitemShut {NoStop}%
\bibitem [{\citenamefont {Vijayan}\ \emph {et~al.}(2015)\citenamefont
  {Vijayan}, \citenamefont {Paramesh}, \citenamefont {Salvi},\ and\
  \citenamefont {Dalal}}]{vijayan2015}%
  \BibitemOpen
  \bibfield  {author} {\bibinfo {author} {\bibfnamefont {V.~K.}\ \bibnamefont
  {Vijayan}}, \bibinfo {author} {\bibfnamefont {H.}~\bibnamefont {Paramesh}},
  \bibinfo {author} {\bibfnamefont {S.~S.}\ \bibnamefont {Salvi}},\ and\
  \bibinfo {author} {\bibfnamefont {A.~A.~K.}\ \bibnamefont {Dalal}},\
  }\bibfield  {title} {\bibinfo {title} {Enhancing indoor air quality --the air
  filter advantage},\ }\bibfield  {journal} {\bibinfo  {journal} {Lung India}\
  }\textbf {\bibinfo {volume} {32}},\ \href
  {https://doi.org/10.4103/0970-2113.164174} {10.4103/0970-2113.164174}
  (\bibinfo {year} {2015})\BibitemShut {NoStop}%
\bibitem [{\citenamefont {Sublett}(2011)}]{Sublett2011}%
  \BibitemOpen
  \bibfield  {author} {\bibinfo {author} {\bibfnamefont {J.~L.}\ \bibnamefont
  {Sublett}},\ }\bibfield  {title} {\bibinfo {title} {Effectiveness of air
  filters and air cleaners in allergic respiratory diseases: A review of the
  recent literature},\ }\href {https://doi.org/10.1007/s11882-011-0208-5}
  {\bibfield  {journal} {\bibinfo  {journal} {Current Allergy and Asthma
  Reports}\ }\textbf {\bibinfo {volume} {11}},\ \bibinfo {pages} {395}
  (\bibinfo {year} {2011})}\BibitemShut {NoStop}%
\bibitem [{\citenamefont {Liu}\ \emph {et~al.}(2017)\citenamefont {Liu},
  \citenamefont {Xiao}, \citenamefont {Zhang}, \citenamefont {Gal},
  \citenamefont {Chen}, \citenamefont {Liu}, \citenamefont {Pan}, \citenamefont
  {Wu}, \citenamefont {Tang},\ and\ \citenamefont
  {Clements-Croome}}]{LIU2017375}%
  \BibitemOpen
  \bibfield  {author} {\bibinfo {author} {\bibfnamefont {G.}~\bibnamefont
  {Liu}}, \bibinfo {author} {\bibfnamefont {M.}~\bibnamefont {Xiao}}, \bibinfo
  {author} {\bibfnamefont {X.}~\bibnamefont {Zhang}}, \bibinfo {author}
  {\bibfnamefont {C.}~\bibnamefont {Gal}}, \bibinfo {author} {\bibfnamefont
  {X.}~\bibnamefont {Chen}}, \bibinfo {author} {\bibfnamefont {L.}~\bibnamefont
  {Liu}}, \bibinfo {author} {\bibfnamefont {S.}~\bibnamefont {Pan}}, \bibinfo
  {author} {\bibfnamefont {J.}~\bibnamefont {Wu}}, \bibinfo {author}
  {\bibfnamefont {L.}~\bibnamefont {Tang}},\ and\ \bibinfo {author}
  {\bibfnamefont {D.}~\bibnamefont {Clements-Croome}},\ }\bibfield  {title}
  {\bibinfo {title} {A review of air filtration technologies for sustainable
  and healthy building ventilation},\ }\href
  {https://doi.org/10.1016/j.scs.2017.04.011} {\bibfield  {journal} {\bibinfo
  {journal} {Sustain. Cities Soc.}\ }\textbf {\bibinfo {volume} {32}},\
  \bibinfo {pages} {375} (\bibinfo {year} {2017})}\BibitemShut {NoStop}%
\bibitem [{\citenamefont {Wong}\ \emph {et~al.}(2007)\citenamefont {Wong},
  \citenamefont {Law},\ and\ \citenamefont {Lai}}]{Wong2007}%
  \BibitemOpen
  \bibfield  {author} {\bibinfo {author} {\bibfnamefont {N.~H.}\ \bibnamefont
  {Wong}}, \bibinfo {author} {\bibfnamefont {P.~L.}\ \bibnamefont {Law}},\ and\
  \bibinfo {author} {\bibfnamefont {S.~H.}\ \bibnamefont {Lai}},\ }\bibfield
  {title} {\bibinfo {title} {Field tests on a grease trap effluent filter},\
  }\href {https://doi.org/10.1007/BF03326292} {\bibfield  {journal} {\bibinfo
  {journal} {International Journal of Environmental Science {\&} Technology}\
  }\textbf {\bibinfo {volume} {4}},\ \bibinfo {pages} {345} (\bibinfo {year}
  {2007})}\BibitemShut {NoStop}%
\bibitem [{\citenamefont {{Van der Bruggen}}(2018)}]{bruggen2018}%
  \BibitemOpen
  \bibfield  {author} {\bibinfo {author} {\bibfnamefont {B.}~\bibnamefont {{Van
  der Bruggen}}},\ }\bibfield  {title} {\bibinfo {title} {Chapter 2 -
  microfiltration, ultrafiltration, nanofiltration, reverse osmosis, and
  forward osmosis},\ }in\ \href
  {https://doi.org/10.1016/B978-0-12-813483-2.00002-2} {\emph {\bibinfo
  {booktitle} {Fundamental Modelling of Membrane Systems}}},\ \bibinfo {editor}
  {edited by\ \bibinfo {editor} {\bibfnamefont {P.}~\bibnamefont {Luis}}}\
  (\bibinfo  {publisher} {Elsevier},\ \bibinfo {year} {2018})\ pp.\ \bibinfo
  {pages} {25--70}\BibitemShut {NoStop}%
\bibitem [{\citenamefont {Goswami}\ and\ \citenamefont
  {Pugazhenthi}(2020)}]{Goswami2020}%
  \BibitemOpen
  \bibfield  {author} {\bibinfo {author} {\bibfnamefont {K.~P.}\ \bibnamefont
  {Goswami}}\ and\ \bibinfo {author} {\bibfnamefont {G.}~\bibnamefont
  {Pugazhenthi}},\ }\bibfield  {title} {\bibinfo {title} {Credibility of
  polymeric and ceramic membrane filtration in the removal of bacteria and
  virus from water: A review},\ }\href
  {https://doi.org/10.1016/j.jenvman.2020.110583} {\bibfield  {journal}
  {\bibinfo  {journal} {Journal of Environmental Management}\ }\textbf
  {\bibinfo {volume} {268}},\ \bibinfo {pages} {110583} (\bibinfo {year}
  {2020})}\BibitemShut {NoStop}%
\bibitem [{\citenamefont {Kotobuki}\ \emph {et~al.}(2021)\citenamefont
  {Kotobuki}, \citenamefont {Gu}, \citenamefont {Zhang},\ and\ \citenamefont
  {Wang}}]{cera_poly_composite}%
  \BibitemOpen
  \bibfield  {author} {\bibinfo {author} {\bibfnamefont {M.}~\bibnamefont
  {Kotobuki}}, \bibinfo {author} {\bibfnamefont {Q.}~\bibnamefont {Gu}},
  \bibinfo {author} {\bibfnamefont {L.}~\bibnamefont {Zhang}},\ and\ \bibinfo
  {author} {\bibfnamefont {J.}~\bibnamefont {Wang}},\ }\bibfield  {title}
  {\bibinfo {title} {Ceramic-polymer composite membranes for water and
  wastewater treatment: Bridging the big gap between ceramics and polymers},\
  }\href {https://doi.org/10.3390/molecules26113331} {\bibfield  {journal}
  {\bibinfo  {journal} {Molecules}\ }\textbf {\bibinfo {volume} {26}},\
  \bibinfo {pages} {3331} (\bibinfo {year} {2021})}\BibitemShut {NoStop}%
\bibitem [{\citenamefont {Ranjbarzadeh-Dibazar}\ \emph
  {et~al.}(2017)\citenamefont {Ranjbarzadeh-Dibazar}, \citenamefont {Barzin},\
  and\ \citenamefont {Shokrollahi}}]{RanjbarzadehDibazar2017}%
  \BibitemOpen
  \bibfield  {author} {\bibinfo {author} {\bibfnamefont {A.}~\bibnamefont
  {Ranjbarzadeh-Dibazar}}, \bibinfo {author} {\bibfnamefont {J.}~\bibnamefont
  {Barzin}},\ and\ \bibinfo {author} {\bibfnamefont {P.}~\bibnamefont
  {Shokrollahi}},\ }\bibfield  {title} {\bibinfo {title} {Microstructure
  crystalline domains disorder critically controls formation of
  nano-porous/long fibrillar morphology of eptfe membranes},\ }\href
  {https://doi.org/10.1016/j.polymer.2017.06.003} {\bibfield  {journal}
  {\bibinfo  {journal} {Polymer}\ }\textbf {\bibinfo {volume} {121}},\ \bibinfo
  {pages} {75} (\bibinfo {year} {2017})}\BibitemShut {NoStop}%
\bibitem [{\citenamefont {Sano}\ \emph {et~al.}(2020)\citenamefont {Sano},
  \citenamefont {Koga}, \citenamefont {Ito}, \citenamefont {Duc}, \citenamefont
  {Hama},\ and\ \citenamefont {Kawagoshi}}]{SANO2020123015}%
  \BibitemOpen
  \bibfield  {author} {\bibinfo {author} {\bibfnamefont {T.}~\bibnamefont
  {Sano}}, \bibinfo {author} {\bibfnamefont {Y.}~\bibnamefont {Koga}}, \bibinfo
  {author} {\bibfnamefont {H.}~\bibnamefont {Ito}}, \bibinfo {author}
  {\bibfnamefont {L.~V.}\ \bibnamefont {Duc}}, \bibinfo {author} {\bibfnamefont
  {T.}~\bibnamefont {Hama}},\ and\ \bibinfo {author} {\bibfnamefont
  {Y.}~\bibnamefont {Kawagoshi}},\ }\bibfield  {title} {\bibinfo {title}
  {Effects of structural vulnerability of flat-sheet membranes on fouling
  development in continuous submerged membrane bioreactors},\ }\href
  {https://doi.org/10.1016/j.biortech.2020.123015} {\bibfield  {journal}
  {\bibinfo  {journal} {Bioresource Technology}\ }\textbf {\bibinfo {volume}
  {304}},\ \bibinfo {pages} {123015} (\bibinfo {year} {2020})}\BibitemShut
  {NoStop}%
\bibitem [{\citenamefont {Ma}\ \emph {et~al.}(2018)\citenamefont {Ma},
  \citenamefont {Castro-Dominguez}, \citenamefont {Dixon},\ and\ \citenamefont
  {Ma}}]{Ma2018}%
  \BibitemOpen
  \bibfield  {author} {\bibinfo {author} {\bibfnamefont {R.}~\bibnamefont
  {Ma}}, \bibinfo {author} {\bibfnamefont {B.}~\bibnamefont
  {Castro-Dominguez}}, \bibinfo {author} {\bibfnamefont {A.~G.}\ \bibnamefont
  {Dixon}},\ and\ \bibinfo {author} {\bibfnamefont {Y.~H.}\ \bibnamefont
  {Ma}},\ }\bibfield  {title} {\bibinfo {title} {Scalability of multitube
  membrane modules for hydrogen separation: Technical considerations, issues
  and solutions},\ }\href {https://doi.org/10.1016/j.memsci.2018.08.003}
  {\bibfield  {journal} {\bibinfo  {journal} {J. Membr. Sci.}\ }\textbf
  {\bibinfo {volume} {564}},\ \bibinfo {pages} {887} (\bibinfo {year}
  {2018})}\BibitemShut {NoStop}%
\bibitem [{\citenamefont {Basile}\ \emph {et~al.}(2010)\citenamefont {Basile},
  \citenamefont {Gallucci},\ and\ \citenamefont {Morrone}}]{Basile2010}%
  \BibitemOpen
  \bibfield  {author} {\bibinfo {author} {\bibfnamefont {A.}~\bibnamefont
  {Basile}}, \bibinfo {author} {\bibfnamefont {F.}~\bibnamefont {Gallucci}},\
  and\ \bibinfo {author} {\bibfnamefont {P.}~\bibnamefont {Morrone}},\
  }\bibfield  {title} {\bibinfo {title} {6 - advanced carbon dioxide (co2) gas
  separation membrane development for power plants},\ }in\ \href
  {https://doi.org/10.1533/9781845699468.2.143} {\emph {\bibinfo {booktitle}
  {Advanced Power Plant Materials, Design and Technology}}},\ \bibinfo {series
  and number} {Woodhead Publishing Series in Energy},\ \bibinfo {editor}
  {edited by\ \bibinfo {editor} {\bibfnamefont {D.}~\bibnamefont {Roddy}}}\
  (\bibinfo  {publisher} {Woodhead Publishing},\ \bibinfo {year} {2010})\ pp.\
  \bibinfo {pages} {143--186}\BibitemShut {NoStop}%
\bibitem [{\citenamefont {Onur}\ \emph {et~al.}(2018)\citenamefont {Onur},
  \citenamefont {Ng}, \citenamefont {Batchelor},\ and\ \citenamefont
  {Garnier}}]{multi_onur_2017}%
  \BibitemOpen
  \bibfield  {author} {\bibinfo {author} {\bibfnamefont {A.}~\bibnamefont
  {Onur}}, \bibinfo {author} {\bibfnamefont {A.}~\bibnamefont {Ng}}, \bibinfo
  {author} {\bibfnamefont {W.}~\bibnamefont {Batchelor}},\ and\ \bibinfo
  {author} {\bibfnamefont {G.}~\bibnamefont {Garnier}},\ }\bibfield  {title}
  {\bibinfo {title} {Multi-layer filters: Adsorption and filtration mechanisms
  for improved separation},\ }\bibfield  {journal} {\bibinfo  {journal}
  {Frontiers in Chemistry}\ }\textbf {\bibinfo {volume} {6}},\ \href
  {https://doi.org/10.3389/fchem.2018.00417} {10.3389/fchem.2018.00417}
  (\bibinfo {year} {2018})\BibitemShut {NoStop}%
\bibitem [{\citenamefont {Gu}\ \emph {et~al.}(2022{\natexlab{a}})\citenamefont
  {Gu}, \citenamefont {Kondic},\ and\ \citenamefont {Cummings}}]{gu_jms_2022}%
  \BibitemOpen
  \bibfield  {author} {\bibinfo {author} {\bibfnamefont {B.}~\bibnamefont
  {Gu}}, \bibinfo {author} {\bibfnamefont {L.}~\bibnamefont {Kondic}},\ and\
  \bibinfo {author} {\bibfnamefont {L.~J.}\ \bibnamefont {Cummings}},\
  }\bibfield  {title} {\bibinfo {title} {Network-based membrane filters:
  Influence of network and pore size variability on filtration performance},\
  }\href {https://doi.org/10.1016/j.memsci.2022.120668} {\bibfield  {journal}
  {\bibinfo  {journal} {J. Membr. Sci.}\ }\textbf {\bibinfo {volume} {657}},\
  \bibinfo {pages} {120668} (\bibinfo {year} {2022}{\natexlab{a}})}\BibitemShut
  {NoStop}%
\bibitem [{\citenamefont {Apel}(2001)}]{Apel2001}%
  \BibitemOpen
  \bibfield  {author} {\bibinfo {author} {\bibfnamefont {P.}~\bibnamefont
  {Apel}},\ }\bibfield  {title} {\bibinfo {title} {Track etching technique in
  membrane technology},\ }\href {https://doi.org/10.1016/S1350-4487(01)00228-1}
  {\bibfield  {journal} {\bibinfo  {journal} {Radiat. Meas.}\ }\textbf
  {\bibinfo {volume} {34}},\ \bibinfo {pages} {559} (\bibinfo {year} {2001})},\
  \bibinfo {note} {proceedings of the 20th International Conference on Nuclear
  Tracks in Solids}\BibitemShut {NoStop}%
\bibitem [{\citenamefont {Ng}\ \emph {et~al.}(2020)\citenamefont {Ng},
  \citenamefont {Lyu}, \citenamefont {Gu}, \citenamefont {Zhang}, \citenamefont
  {Poh}, \citenamefont {Zhang}, \citenamefont {Wang},\ and\ \citenamefont
  {Ng}}]{Ng2020}%
  \BibitemOpen
  \bibfield  {author} {\bibinfo {author} {\bibfnamefont {T.~C.~A.}\
  \bibnamefont {Ng}}, \bibinfo {author} {\bibfnamefont {Z.}~\bibnamefont
  {Lyu}}, \bibinfo {author} {\bibfnamefont {Q.}~\bibnamefont {Gu}}, \bibinfo
  {author} {\bibfnamefont {L.}~\bibnamefont {Zhang}}, \bibinfo {author}
  {\bibfnamefont {W.~J.}\ \bibnamefont {Poh}}, \bibinfo {author} {\bibfnamefont
  {Z.}~\bibnamefont {Zhang}}, \bibinfo {author} {\bibfnamefont
  {J.}~\bibnamefont {Wang}},\ and\ \bibinfo {author} {\bibfnamefont {H.~Y.}\
  \bibnamefont {Ng}},\ }\bibfield  {title} {\bibinfo {title} {Effect of
  gradient profile in ceramic membranes on filtration characteristics:
  Implications for membrane development},\ }\href
  {https://doi.org/https://doi.org/10.1016/j.memsci.2019.117576} {\bibfield
  {journal} {\bibinfo  {journal} {J. Membr. Sci.}\ }\textbf {\bibinfo {volume}
  {595}},\ \bibinfo {pages} {117576} (\bibinfo {year} {2020})}\BibitemShut
  {NoStop}%
\bibitem [{\citenamefont {Goei}\ \emph {et~al.}(2013)\citenamefont {Goei},
  \citenamefont {Dong},\ and\ \citenamefont {Lim}}]{Goei2013}%
  \BibitemOpen
  \bibfield  {author} {\bibinfo {author} {\bibfnamefont {R.}~\bibnamefont
  {Goei}}, \bibinfo {author} {\bibfnamefont {Z.}~\bibnamefont {Dong}},\ and\
  \bibinfo {author} {\bibfnamefont {T.-T.}\ \bibnamefont {Lim}},\ }\bibfield
  {title} {\bibinfo {title} {High-permeability pluronic-based tio2 hybrid
  photocatalytic membrane with hierarchical porosity: Fabrication,
  characterizations and performances},\ }\href
  {https://doi.org/https://doi.org/10.1016/j.cej.2013.05.068} {\bibfield
  {journal} {\bibinfo  {journal} {Chemical Engineering Journal}\ }\textbf
  {\bibinfo {volume} {228}},\ \bibinfo {pages} {1030} (\bibinfo {year}
  {2013})}\BibitemShut {NoStop}%
\bibitem [{\citenamefont {Amin}\ \emph {et~al.}(2017)\citenamefont {Amin},
  \citenamefont {Merati}, \citenamefont {Bahrami},\ and\ \citenamefont
  {Bagherzadeh}}]{Amin2017}%
  \BibitemOpen
  \bibfield  {author} {\bibinfo {author} {\bibfnamefont {A.}~\bibnamefont
  {Amin}}, \bibinfo {author} {\bibfnamefont {A.~A.}\ \bibnamefont {Merati}},
  \bibinfo {author} {\bibfnamefont {S.~H.}\ \bibnamefont {Bahrami}},\ and\
  \bibinfo {author} {\bibfnamefont {R.}~\bibnamefont {Bagherzadeh}},\
  }\bibfield  {title} {\bibinfo {title} {Effects of porosity gradient of
  multilayered electrospun nanofibre mats on air filtration efficiency},\
  }\href {https://doi.org/10.1080/00405000.2016.1264856} {\bibfield  {journal}
  {\bibinfo  {journal} {The Journal of The Textile Institute}\ }\textbf
  {\bibinfo {volume} {108}},\ \bibinfo {pages} {1563} (\bibinfo {year}
  {2017})}\BibitemShut {NoStop}%
\bibitem [{\citenamefont {Capasso}\ \emph {et~al.}(2020)\citenamefont
  {Capasso}, \citenamefont {Liguori},\ and\ \citenamefont
  {Verdolotti}}]{capasso2020}%
  \BibitemOpen
  \bibfield  {author} {\bibinfo {author} {\bibfnamefont {I.}~\bibnamefont
  {Capasso}}, \bibinfo {author} {\bibfnamefont {B.}~\bibnamefont {Liguori}},\
  and\ \bibinfo {author} {\bibfnamefont {L.}~\bibnamefont {Verdolotti}},\
  }\bibfield  {title} {\bibinfo {title} {Process strategy to fabricate a
  hierarchical porosity gradient in diatomite-based foams by 3d printing.},\
  }\bibfield  {journal} {\bibinfo  {journal} {Sci. Rep.}\ }\textbf {\bibinfo
  {volume} {10}},\ \href {https://doi.org/10.1038/s41598-019-55582-0}
  {10.1038/s41598-019-55582-0} (\bibinfo {year} {2020})\BibitemShut {NoStop}%
\bibitem [{\citenamefont {Guddati}\ \emph {et~al.}(2019)\citenamefont
  {Guddati}, \citenamefont {Kiran},\ and\ \citenamefont {Leavy}}]{guddati2019}%
  \BibitemOpen
  \bibfield  {author} {\bibinfo {author} {\bibfnamefont {S.}~\bibnamefont
  {Guddati}}, \bibinfo {author} {\bibfnamefont {A.~S.~K.}\ \bibnamefont
  {Kiran}},\ and\ \bibinfo {author} {\bibfnamefont {M.}~\bibnamefont {Leavy}},\
  }\bibfield  {title} {\bibinfo {title} {Recent advancements in additive
  manufacturing technologies for porous material applications.},\ }\href
  {https://doi.org/10.1007/s00170-019-04116-z} {\bibfield  {journal} {\bibinfo
  {journal} {Int. J. Adv. Manuf. Technol.}\ }\textbf {\bibinfo {volume}
  {105}},\ \bibinfo {pages} {193–215} (\bibinfo {year} {2019})}\BibitemShut
  {NoStop}%
\bibitem [{\citenamefont {Dong}\ \emph {et~al.}(2022)\citenamefont {Dong},
  \citenamefont {Hua}, \citenamefont {Zhu}, \citenamefont {Huang},
  \citenamefont {Chi}, \citenamefont {Liu}, \citenamefont {Lou},\ and\
  \citenamefont {Lin}}]{Dong2022}%
  \BibitemOpen
  \bibfield  {author} {\bibinfo {author} {\bibfnamefont {T.}~\bibnamefont
  {Dong}}, \bibinfo {author} {\bibfnamefont {Y.}~\bibnamefont {Hua}}, \bibinfo
  {author} {\bibfnamefont {X.}~\bibnamefont {Zhu}}, \bibinfo {author}
  {\bibfnamefont {X.}~\bibnamefont {Huang}}, \bibinfo {author} {\bibfnamefont
  {S.}~\bibnamefont {Chi}}, \bibinfo {author} {\bibfnamefont {Y.}~\bibnamefont
  {Liu}}, \bibinfo {author} {\bibfnamefont {C.-W.}\ \bibnamefont {Lou}},\ and\
  \bibinfo {author} {\bibfnamefont {J.-H.}\ \bibnamefont {Lin}},\ }\bibfield
  {title} {\bibinfo {title} {Highly efficient and sustainable pm filtration
  using piezo nanofibrous membrane with gradient shrinking porous network},\
  }\href {https://doi.org/10.1016/j.seppur.2022.120753} {\bibfield  {journal}
  {\bibinfo  {journal} {Separation and Purification Technology}\ }\textbf
  {\bibinfo {volume} {289}},\ \bibinfo {pages} {120753} (\bibinfo {year}
  {2022})}\BibitemShut {NoStop}%
\bibitem [{\citenamefont {Harley}\ \emph {et~al.}(2006)\citenamefont {Harley},
  \citenamefont {Hastings}, \citenamefont {Yannas},\ and\ \citenamefont
  {Sannino}}]{HARLEY2006866}%
  \BibitemOpen
  \bibfield  {author} {\bibinfo {author} {\bibfnamefont {B.~A.}\ \bibnamefont
  {Harley}}, \bibinfo {author} {\bibfnamefont {A.~Z.}\ \bibnamefont
  {Hastings}}, \bibinfo {author} {\bibfnamefont {I.~V.}\ \bibnamefont
  {Yannas}},\ and\ \bibinfo {author} {\bibfnamefont {A.}~\bibnamefont
  {Sannino}},\ }\bibfield  {title} {\bibinfo {title} {Fabricating tubular
  scaffolds with a radial pore size gradient by a spinning technique},\ }\href
  {https://doi.org/https://doi.org/10.1016/j.biomaterials.2005.07.012}
  {\bibfield  {journal} {\bibinfo  {journal} {Biomaterials}\ }\textbf {\bibinfo
  {volume} {27}},\ \bibinfo {pages} {866} (\bibinfo {year} {2006})}\BibitemShut
  {NoStop}%
\bibitem [{\citenamefont {Kosiol}\ \emph {et~al.}(2018)\citenamefont {Kosiol},
  \citenamefont {Müller}, \citenamefont {Schneider}, \citenamefont {Hansmann},
  \citenamefont {Thom},\ and\ \citenamefont {Ulbricht}}]{Kosiol2018}%
  \BibitemOpen
  \bibfield  {author} {\bibinfo {author} {\bibfnamefont {P.}~\bibnamefont
  {Kosiol}}, \bibinfo {author} {\bibfnamefont {M.~T.}\ \bibnamefont {Müller}},
  \bibinfo {author} {\bibfnamefont {B.}~\bibnamefont {Schneider}}, \bibinfo
  {author} {\bibfnamefont {B.}~\bibnamefont {Hansmann}}, \bibinfo {author}
  {\bibfnamefont {V.}~\bibnamefont {Thom}},\ and\ \bibinfo {author}
  {\bibfnamefont {M.}~\bibnamefont {Ulbricht}},\ }\bibfield  {title} {\bibinfo
  {title} {Determination of pore size gradients of virus filtration membranes
  using gold nanoparticles and their relation to fouling with protein
  containing feed streams},\ }\href
  {https://doi.org/10.1016/j.memsci.2017.11.048} {\bibfield  {journal}
  {\bibinfo  {journal} {J. Membr. Sci.}\ }\textbf {\bibinfo {volume} {548}},\
  \bibinfo {pages} {598} (\bibinfo {year} {2018})}\BibitemShut {NoStop}%
\bibitem [{\citenamefont {Di~Luca}\ \emph {et~al.}(2016)\citenamefont
  {Di~Luca}, \citenamefont {Ostrowska}, \citenamefont {Lorenzo-Moldero},
  \citenamefont {Lepedda}, \citenamefont {Swieszkowski}, \citenamefont
  {Van~Blitterswijk},\ and\ \citenamefont {Moroni}}]{DiLuca2016}%
  \BibitemOpen
  \bibfield  {author} {\bibinfo {author} {\bibfnamefont {A.}~\bibnamefont
  {Di~Luca}}, \bibinfo {author} {\bibfnamefont {B.}~\bibnamefont {Ostrowska}},
  \bibinfo {author} {\bibfnamefont {I.}~\bibnamefont {Lorenzo-Moldero}},
  \bibinfo {author} {\bibfnamefont {A.}~\bibnamefont {Lepedda}}, \bibinfo
  {author} {\bibfnamefont {W.}~\bibnamefont {Swieszkowski}}, \bibinfo {author}
  {\bibfnamefont {C.}~\bibnamefont {Van~Blitterswijk}},\ and\ \bibinfo {author}
  {\bibfnamefont {L.}~\bibnamefont {Moroni}},\ }\bibfield  {title} {\bibinfo
  {title} {Gradients in pore size enhance the osteogenic differentiation of
  human mesenchymal stromal cells in three-dimensional scaffolds},\ }\bibfield
  {journal} {\bibinfo  {journal} {Sci. Rep.}\ }\textbf {\bibinfo {volume}
  {6}},\ \href {https://doi.org/10.1038/srep22898} {10.1038/srep22898}
  (\bibinfo {year} {2016})\BibitemShut {NoStop}%
\bibitem [{\citenamefont {Mannella}\ \emph {et~al.}(2015)\citenamefont
  {Mannella}, \citenamefont {Conoscenti}, \citenamefont {{Carfì Pavia}},
  \citenamefont {{La Carrubba}},\ and\ \citenamefont
  {Brucato}}]{MANNELLA201531}%
  \BibitemOpen
  \bibfield  {author} {\bibinfo {author} {\bibfnamefont {G.}~\bibnamefont
  {Mannella}}, \bibinfo {author} {\bibfnamefont {G.}~\bibnamefont
  {Conoscenti}}, \bibinfo {author} {\bibfnamefont {F.}~\bibnamefont {{Carfì
  Pavia}}}, \bibinfo {author} {\bibfnamefont {V.}~\bibnamefont {{La
  Carrubba}}},\ and\ \bibinfo {author} {\bibfnamefont {V.}~\bibnamefont
  {Brucato}},\ }\bibfield  {title} {\bibinfo {title} {Preparation of polymeric
  foams with a pore size gradient via thermally induced phase separation
  (tips)},\ }\href
  {https://doi.org/https://doi.org/10.1016/j.matlet.2015.07.055} {\bibfield
  {journal} {\bibinfo  {journal} {Mater. Lett.}\ }\textbf {\bibinfo {volume}
  {160}},\ \bibinfo {pages} {31} (\bibinfo {year} {2015})}\BibitemShut
  {NoStop}%
\bibitem [{\citenamefont {Sobral}\ \emph {et~al.}(2011)\citenamefont {Sobral},
  \citenamefont {Caridade}, \citenamefont {Sousa}, \citenamefont {Mano},\ and\
  \citenamefont {Reis}}]{SOBRAL20111009}%
  \BibitemOpen
  \bibfield  {author} {\bibinfo {author} {\bibfnamefont {J.~M.}\ \bibnamefont
  {Sobral}}, \bibinfo {author} {\bibfnamefont {S.~G.}\ \bibnamefont
  {Caridade}}, \bibinfo {author} {\bibfnamefont {R.~A.}\ \bibnamefont {Sousa}},
  \bibinfo {author} {\bibfnamefont {J.~F.}\ \bibnamefont {Mano}},\ and\
  \bibinfo {author} {\bibfnamefont {R.~L.}\ \bibnamefont {Reis}},\ }\bibfield
  {title} {\bibinfo {title} {Three-dimensional plotted scaffolds with
  controlled pore size gradients: Effect of scaffold geometry on mechanical
  performance and cell seeding efficiency},\ }\href
  {https://doi.org/https://doi.org/10.1016/j.actbio.2010.11.003} {\bibfield
  {journal} {\bibinfo  {journal} {Acta Biomater.}\ }\textbf {\bibinfo {volume}
  {7}},\ \bibinfo {pages} {1009} (\bibinfo {year} {2011})}\BibitemShut
  {NoStop}%
\bibitem [{\citenamefont {Oh}\ \emph {et~al.}(2015)\citenamefont {Oh},
  \citenamefont {Park}, \citenamefont {Min}, \citenamefont {Lee},\ and\
  \citenamefont {Jyoung}}]{OH2015186}%
  \BibitemOpen
  \bibfield  {author} {\bibinfo {author} {\bibfnamefont {H.}~\bibnamefont
  {Oh}}, \bibinfo {author} {\bibfnamefont {J.}~\bibnamefont {Park}}, \bibinfo
  {author} {\bibfnamefont {K.}~\bibnamefont {Min}}, \bibinfo {author}
  {\bibfnamefont {E.}~\bibnamefont {Lee}},\ and\ \bibinfo {author}
  {\bibfnamefont {J.-Y.}\ \bibnamefont {Jyoung}},\ }\bibfield  {title}
  {\bibinfo {title} {Effects of pore size gradient in the substrate of a gas
  diffusion layer on the performance of a proton exchange membrane fuel cell},\
  }\href {https://doi.org/https://doi.org/10.1016/j.apenergy.2015.03.072}
  {\bibfield  {journal} {\bibinfo  {journal} {Appl. Energy}\ }\textbf {\bibinfo
  {volume} {149}},\ \bibinfo {pages} {186} (\bibinfo {year}
  {2015})}\BibitemShut {NoStop}%
\bibitem [{\citenamefont {Pereira}\ \emph {et~al.}(2021)\citenamefont
  {Pereira}, \citenamefont {Dalwadi},\ and\ \citenamefont
  {Griffiths}}]{ian2021}%
  \BibitemOpen
  \bibfield  {author} {\bibinfo {author} {\bibfnamefont {V.~E.}\ \bibnamefont
  {Pereira}}, \bibinfo {author} {\bibfnamefont {M.~P.}\ \bibnamefont
  {Dalwadi}},\ and\ \bibinfo {author} {\bibfnamefont {I.~M.}\ \bibnamefont
  {Griffiths}},\ }\bibfield  {title} {\bibinfo {title} {Role of caking in
  optimizing the performance of a concertinaed ceramic filtration membrane},\
  }\href {https://doi.org/10.1103/PhysRevFluids.6.104301} {\bibfield  {journal}
  {\bibinfo  {journal} {Phys. Rev. Fluids}\ }\textbf {\bibinfo {volume} {6}},\
  \bibinfo {pages} {104301} (\bibinfo {year} {2021})}\BibitemShut {NoStop}%
\bibitem [{\citenamefont {Griffiths}\ \emph {et~al.}(2016)\citenamefont
  {Griffiths}, \citenamefont {Kumar},\ and\ \citenamefont
  {Stewart}}]{Griffiths2016}%
  \BibitemOpen
  \bibfield  {author} {\bibinfo {author} {\bibfnamefont {I.}~\bibnamefont
  {Griffiths}}, \bibinfo {author} {\bibfnamefont {A.}~\bibnamefont {Kumar}},\
  and\ \bibinfo {author} {\bibfnamefont {P.}~\bibnamefont {Stewart}},\
  }\bibfield  {title} {\bibinfo {title} {Designing asymmetric multilayered
  membrane filters with improved performance},\ }\href
  {https://doi.org/10.1016/j.memsci.2016.02.028} {\bibfield  {journal}
  {\bibinfo  {journal} {J. Membr. Sci.}\ }\textbf {\bibinfo {volume} {511}},\
  \bibinfo {pages} {108} (\bibinfo {year} {2016})}\BibitemShut {NoStop}%
\bibitem [{\citenamefont {Sun}\ \emph {et~al.}(2020)\citenamefont {Sun},
  \citenamefont {Sanaei}, \citenamefont {Kondic},\ and\ \citenamefont
  {Cummings}}]{yixuan2020}%
  \BibitemOpen
  \bibfield  {author} {\bibinfo {author} {\bibfnamefont {Y.}~\bibnamefont
  {Sun}}, \bibinfo {author} {\bibfnamefont {P.}~\bibnamefont {Sanaei}},
  \bibinfo {author} {\bibfnamefont {L.}~\bibnamefont {Kondic}},\ and\ \bibinfo
  {author} {\bibfnamefont {L.~J.}\ \bibnamefont {Cummings}},\ }\bibfield
  {title} {\bibinfo {title} {Modeling and design optimization for pleated
  membrane filters},\ }\href {https://doi.org/10.1103/PhysRevFluids.5.044306}
  {\bibfield  {journal} {\bibinfo  {journal} {Phys. Rev. Fluids}\ }\textbf
  {\bibinfo {volume} {5}},\ \bibinfo {pages} {044306} (\bibinfo {year}
  {2020})}\BibitemShut {NoStop}%
\bibitem [{\citenamefont {Sanaei}\ and\ \citenamefont
  {Cummings}(2019)}]{sanaei2019}%
  \BibitemOpen
  \bibfield  {author} {\bibinfo {author} {\bibfnamefont {P.}~\bibnamefont
  {Sanaei}}\ and\ \bibinfo {author} {\bibfnamefont {L.~J.}\ \bibnamefont
  {Cummings}},\ }\bibfield  {title} {\bibinfo {title} {Membrane filtration with
  multiple fouling mechanisms},\ }\href
  {https://doi.org/10.1103/PhysRevFluids.4.124301} {\bibfield  {journal}
  {\bibinfo  {journal} {Phys. Rev. Fluids}\ }\textbf {\bibinfo {volume} {4}},\
  \bibinfo {pages} {124301} (\bibinfo {year} {2019})}\BibitemShut {NoStop}%
\bibitem [{\citenamefont {Fong}\ \emph {et~al.}(2021)\citenamefont {Fong},
  \citenamefont {Cummings}, \citenamefont {Chapman},\ and\ \citenamefont
  {Sanaei}}]{fong_multi}%
  \BibitemOpen
  \bibfield  {author} {\bibinfo {author} {\bibfnamefont {D.}~\bibnamefont
  {Fong}}, \bibinfo {author} {\bibfnamefont {L.~J.}\ \bibnamefont {Cummings}},
  \bibinfo {author} {\bibfnamefont {S.~J.}\ \bibnamefont {Chapman}},\ and\
  \bibinfo {author} {\bibfnamefont {P.}~\bibnamefont {Sanaei}},\ }\bibfield
  {title} {\bibinfo {title} {On the performance of multilayered membrane
  filters},\ }\href {https://doi.org/10.1007/s10665-021-10118-2} {\bibfield
  {journal} {\bibinfo  {journal} {J. Eng. Math.}\ }\textbf {\bibinfo {volume}
  {127}},\ \bibinfo {pages} {1} (\bibinfo {year} {2021})}\BibitemShut {NoStop}%
\bibitem [{\citenamefont {Gu}\ \emph {et~al.}(2022{\natexlab{b}})\citenamefont
  {Gu}, \citenamefont {Kondic},\ and\ \citenamefont
  {Cummings}}]{gu_network_2021}%
  \BibitemOpen
  \bibfield  {author} {\bibinfo {author} {\bibfnamefont {B.}~\bibnamefont
  {Gu}}, \bibinfo {author} {\bibfnamefont {L.}~\bibnamefont {Kondic}},\ and\
  \bibinfo {author} {\bibfnamefont {L.~J.}\ \bibnamefont {Cummings}},\
  }\bibfield  {title} {\bibinfo {title} {A graphical representation of membrane
  filtration},\ }\href {https://doi.org/10.1137/21M1424743} {\bibfield
  {journal} {\bibinfo  {journal} {SIAM J. Appl. Math.}\ }\textbf {\bibinfo
  {volume} {82}},\ \bibinfo {pages} {950} (\bibinfo {year}
  {2022}{\natexlab{b}})}\BibitemShut {NoStop}%
\bibitem [{\citenamefont {Bijeljic}\ \emph {et~al.}(2015)\citenamefont
  {Bijeljic}, \citenamefont {Andrew}, \citenamefont {Dong}, \citenamefont
  {Raeini}, \citenamefont {Blunt}, \citenamefont {Mostaghimi}, \citenamefont
  {Alhashmi},\ and\ \citenamefont {iRock Technologies}}]{imperial}%
  \BibitemOpen
  \bibfield  {author} {\bibinfo {author} {\bibfnamefont {B.}~\bibnamefont
  {Bijeljic}}, \bibinfo {author} {\bibfnamefont {M.~G.}\ \bibnamefont
  {Andrew}}, \bibinfo {author} {\bibfnamefont {H.}~\bibnamefont {Dong}},
  \bibinfo {author} {\bibfnamefont {A.~Q.}\ \bibnamefont {Raeini}}, \bibinfo
  {author} {\bibfnamefont {M.~J.}\ \bibnamefont {Blunt}}, \bibinfo {author}
  {\bibfnamefont {P.}~\bibnamefont {Mostaghimi}}, \bibinfo {author}
  {\bibfnamefont {Z.}~\bibnamefont {Alhashmi}},\ and\ \bibinfo {author}
  {\bibnamefont {iRock Technologies}},\ }\href
  {https://www.imperial.ac.uk/earth-science/research/research-groups/pore-scale-modelling/micro-ct-images-and-networks/}
  {\bibinfo {title} {Micro-ct images and networks}} (\bibinfo {year} {2015}),\
  \bibinfo {note} {accessed: 2021-11-30}\BibitemShut {NoStop}%
\bibitem [{\citenamefont {Sundaramoorthi}\ \emph {et~al.}(2016)\citenamefont
  {Sundaramoorthi}, \citenamefont {Hadwiger}, \citenamefont {Romdhane},
  \citenamefont {Behzad}, \citenamefont {Madhavan},\ and\ \citenamefont
  {Nunes}}]{3d_image}%
  \BibitemOpen
  \bibfield  {author} {\bibinfo {author} {\bibfnamefont {G.}~\bibnamefont
  {Sundaramoorthi}}, \bibinfo {author} {\bibfnamefont {M.}~\bibnamefont
  {Hadwiger}}, \bibinfo {author} {\bibfnamefont {M.}~\bibnamefont {Romdhane}},
  \bibinfo {author} {\bibfnamefont {A.}~\bibnamefont {Behzad}}, \bibinfo
  {author} {\bibfnamefont {R.}~\bibnamefont {Madhavan}},\ and\ \bibinfo
  {author} {\bibfnamefont {S.}~\bibnamefont {Nunes}},\ }\bibfield  {title}
  {\bibinfo {title} {3d membrane imaging and porosity visualization},\ }\href
  {https://doi.org/10.1021/acs.iecr.6b00387} {\bibfield  {journal} {\bibinfo
  {journal} {Ind. Eng. Chem. Res.}\ }\textbf {\bibinfo {volume} {55}},\
  \bibinfo {pages} {3689} (\bibinfo {year} {2016})}\BibitemShut {NoStop}%
\bibitem [{\citenamefont {Gu}\ \emph {et~al.}(2020)\citenamefont {Gu},
  \citenamefont {Renaud}, \citenamefont {Sanaei}, \citenamefont {Kondic},\ and\
  \citenamefont {Cummings}}]{gu2020}%
  \BibitemOpen
  \bibfield  {author} {\bibinfo {author} {\bibfnamefont {B.}~\bibnamefont
  {Gu}}, \bibinfo {author} {\bibfnamefont {D.~R.}\ \bibnamefont {Renaud}},
  \bibinfo {author} {\bibfnamefont {P.}~\bibnamefont {Sanaei}}, \bibinfo
  {author} {\bibfnamefont {L.}~\bibnamefont {Kondic}},\ and\ \bibinfo {author}
  {\bibfnamefont {L.~J.}\ \bibnamefont {Cummings}},\ }\bibfield  {title}
  {\bibinfo {title} {On the influence of pore connectivity on performance of
  membrane filters},\ }\href {https://doi.org/10.1017/jfm.2020.520} {\bibfield
  {journal} {\bibinfo  {journal} {J. Fluid Mech.}\ }\textbf {\bibinfo {volume}
  {902}},\ \bibinfo {pages} {A5} (\bibinfo {year} {2020})}\BibitemShut
  {NoStop}%
\bibitem [{\citenamefont {Probstein}(1994)}]{probstein}%
  \BibitemOpen
  \bibfield  {author} {\bibinfo {author} {\bibfnamefont {R.}~\bibnamefont
  {Probstein}},\ }\href@noop {} {\emph {\bibinfo {title} {Physicochemical
  Hydrodynamics}}}\ (\bibinfo  {publisher} {Wiley-Interscience},\ \bibinfo
  {year} {1994})\BibitemShut {NoStop}%
\bibitem [{\citenamefont {Weisstein}()}]{hypercube}%
  \BibitemOpen
  \bibfield  {author} {\bibinfo {author} {\bibfnamefont {E.~W.}\ \bibnamefont
  {Weisstein}},\ }\href
  {https://mathworld.wolfram.com/HypercubeLinePicking.html} {\bibinfo {title}
  {Hypercube line picking {From MathWorld--A Wolfram Web Resource.}}},\
  \bibinfo {note} {accessed: 2022-06-24}\BibitemShut {NoStop}%
\bibitem [{\citenamefont {Griffiths}\ \emph {et~al.}(2020)\citenamefont
  {Griffiths}, \citenamefont {Mitevski}, \citenamefont {Vujkovac},
  \citenamefont {Illingworth},\ and\ \citenamefont {Stewart}}]{griffiths_jms}%
  \BibitemOpen
  \bibfield  {author} {\bibinfo {author} {\bibfnamefont {I.~M.}\ \bibnamefont
  {Griffiths}}, \bibinfo {author} {\bibfnamefont {I.}~\bibnamefont {Mitevski}},
  \bibinfo {author} {\bibfnamefont {I.}~\bibnamefont {Vujkovac}}, \bibinfo
  {author} {\bibfnamefont {M.~R.}\ \bibnamefont {Illingworth}},\ and\ \bibinfo
  {author} {\bibfnamefont {P.~S.}\ \bibnamefont {Stewart}},\ }\bibfield
  {title} {\bibinfo {title} {The role of tortuosity in filtration efficiency: A
  general network model for filtration},\ }\href
  {https://doi.org/10.1016/j.memsci.2019.117664} {\bibfield  {journal}
  {\bibinfo  {journal} {J. Membr. Sci.}\ }\textbf {\bibinfo {volume} {598}},\
  \bibinfo {pages} {117664} (\bibinfo {year} {2020})}\BibitemShut {NoStop}%
\bibitem [{\citenamefont {Dalwadi}\ \emph {et~al.}(2015)\citenamefont
  {Dalwadi}, \citenamefont {Griffiths},\ and\ \citenamefont
  {Bruna}}]{griffiths_royal}%
  \BibitemOpen
  \bibfield  {author} {\bibinfo {author} {\bibfnamefont {M.~P.}\ \bibnamefont
  {Dalwadi}}, \bibinfo {author} {\bibfnamefont {I.~M.}\ \bibnamefont
  {Griffiths}},\ and\ \bibinfo {author} {\bibfnamefont {M.}~\bibnamefont
  {Bruna}},\ }\bibfield  {title} {\bibinfo {title} {Understanding how porosity
  gradients can make a better filter using homogenization theory},\ }\bibfield
  {journal} {\bibinfo  {journal} {Proc. R. Soc. A.}\ }\textbf {\bibinfo
  {volume} {471}},\ \href {https://doi.org/10.1098/rspa.2015.0464}
  {10.1098/rspa.2015.0464} (\bibinfo {year} {2015})\BibitemShut {NoStop}%
\bibitem [{\citenamefont {Iritani}(2013)}]{iritani}%
  \BibitemOpen
  \bibfield  {author} {\bibinfo {author} {\bibfnamefont {E.}~\bibnamefont
  {Iritani}},\ }\bibfield  {title} {\bibinfo {title} {A review on modeling of
  pore-blocking behaviors of membranes during pressurized membrane
  filtration},\ }\href {https://doi.org/10.1080/07373937.2012.683123}
  {\bibfield  {journal} {\bibinfo  {journal} {Drying Tech.}\ }\textbf {\bibinfo
  {volume} {31}},\ \bibinfo {pages} {146} (\bibinfo {year} {2013})}\BibitemShut
  {NoStop}%
\bibitem [{\citenamefont {Siddiqui}\ \emph {et~al.}(2016)\citenamefont
  {Siddiqui}, \citenamefont {Arif},\ and\ \citenamefont {Bashmal}}]{perm2016}%
  \BibitemOpen
  \bibfield  {author} {\bibinfo {author} {\bibfnamefont {M.~U.}\ \bibnamefont
  {Siddiqui}}, \bibinfo {author} {\bibfnamefont {A.~F.}\ \bibnamefont {Arif}},\
  and\ \bibinfo {author} {\bibfnamefont {S.}~\bibnamefont {Bashmal}},\
  }\bibfield  {title} {\bibinfo {title} {Permeability-selectivity analysis of
  microfiltration and ultrafiltration membranes: Effect of pore size and shape
  distribution and membrane stretching},\ }\href
  {https://doi.org/10.3390/membranes6030040} {\bibfield  {journal} {\bibinfo
  {journal} {Membranes}\ }\textbf {\bibinfo {volume} {6(3)}},\ \bibinfo {pages}
  {40} (\bibinfo {year} {2016})}\BibitemShut {NoStop}%
\bibitem [{\citenamefont {Sanaei}\ and\ \citenamefont
  {Cummings}(2018)}]{sanaei2018flow}%
  \BibitemOpen
  \bibfield  {author} {\bibinfo {author} {\bibfnamefont {P.}~\bibnamefont
  {Sanaei}}\ and\ \bibinfo {author} {\bibfnamefont {L.~J.}\ \bibnamefont
  {Cummings}},\ }\bibfield  {title} {\bibinfo {title} {Membrane filtration with
  complex branching pore morphology},\ }\href
  {https://doi.org/10.1103/PhysRevFluids.3.094305} {\bibfield  {journal}
  {\bibinfo  {journal} {Phys. Rev. Fluids}\ }\textbf {\bibinfo {volume} {3}},\
  \bibinfo {pages} {094305} (\bibinfo {year} {2018})}\BibitemShut {NoStop}%
\end{thebibliography}%


%apsrev4-2.bst 2019-01-14 (MD) hand-edited version of apsrev4-1.bst
%Control: key (0)
%Control: author (8) initials jnrlst
%Control: editor formatted (1) identically to author
%Control: production of article title (0) allowed
%Control: page (0) single
%Control: year (1) truncated
%Control: production of eprint (0) enabled
\providecommand{\noopsort}[1]{}\providecommand{\singleletter}[1]{#1}%
%

\end{document}